\definecolor{Code}{rgb}{0,0,0} 
\definecolor{Decorators}{rgb}{0.5,0.5,0.5} 
\definecolor{Numbers}{rgb}{0.5,0,0} 
\definecolor{MatchingBrackets}{rgb}{0.25,0.5,0.5} 
\definecolor{Keywords}{rgb}{0,0,1} 
\definecolor{self}{rgb}{0,0,0} 
\definecolor{Strings}{rgb}{0,0.63,0} 
\definecolor{Comments}{rgb}{0,0.63,1} 
\definecolor{Backquotes}{rgb}{0,0,0} 
\definecolor{Classname}{rgb}{0,0,0} 
\definecolor{FunctionName}{rgb}{0,0,0} 
\definecolor{Operators}{rgb}{0,0,0} 
\definecolor{Background}{rgb}{0.98,0.98,0.98} 
\newtheorem{theorem}{Theorem}
\newtheorem{proposition}{Proposition}
\newtheorem{assumption}{Assumption}
\newtheorem{lemma}{Lemma}
\theoremstyle{definition}
\newtheorem{definition}{Definition}
\newtheorem{remark}{Remark}
\newtheorem{examp}{Example}
\newenvironment{example}{\begin{examp}}{\mbox{} \nolinebreak \hfill \mbox{$\Diamond$} \end{examp}}
\renewcommand\thmcontinues[1]{Continued}
\DeclareMathOperator*{\argmax}{argmax}
\begin{document}
\date{November 2025}
\author{Jose M. Betancourt\footnote{\noindent Email: jose.betancourtvalencia@yale.edu. \\
I am grateful to Ben Golub, Tom\'as Rodríguez-Barraquer, Simon Boutin, Sanjeev Goyal, Roberto Corrao, Pau Mil\'an, Sekhar Tatikonda, Elliot Lipnowski, Angelo Mele, Doron Ravid, Aniko \"Ory, Kevin Williams, Annie Chen, Kevin Yin, Laura Tenjo, and Eric Solomon for very informative discussions. I thank seminar participants at the Network Science and Economics Conference for helpful comments and suggestions. Refine.ink was used to proofread the paper for consistency and clarity.
}\\
\small{Yale University}
} 
\title{{\bf The Strength of Local Structures in Decentralized Network Formation}
}

\maketitle

\begin{abstract}
    I study dynamic network formation games in which agents meet stochastically and form links based on their valuation of the network. I show that these games can be represented in terms of the values agents assign to network sub-structures. Particularly, this characterizes potential games as those where all participants in a structure value it equally. When valuations are restricted to a finite set of repeated sub-structures, or \textit{motifs}, the model exhibits \textit{phase transitions}: small changes in motif values cause discontinuous shifts in network density.
    
\noindent {\it JEL Classification:} C73, D85, C72 \\
\noindent {\it Keywords:} Random Networks, Graph Limits, Phase Transitions.
\end{abstract}

\clearpage

\section{Introduction}
Economic outcomes often depend on who interacts with whom -- whether through trade, information exchange, or collaboration. These structures often set the stage for how agents strategically interact with each other, but they are themselves evolving through time, responding to agents' incentives to form or sever connections. Understanding the forces that drive these dynamics is a central part of describing any economic system with complex interactions. The main challenge with tackling this problem is the vast number of possible structures through which agents can interact. For example, there are around $10^7$ possible networks with $5$ agents, while the number of networks with $20$ agents exceeds the number of atoms in the universe.

The process of forming a network often involves a combination of an element of randomness and strategic decisions. For example, one might randomly meet a friend of a friend at a social gathering, but the decision of whether or not to develop the relationship after this initial meeting is a conscious choice. Such interactions and choices happen constantly, making networks highly dynamic and fluid objects. A model that incorporates random meetings and strategic decisions gives rise to a time-dependent probability distribution over the space of networks. The goal of this paper is to study how the properties of this distribution depend on the incentives of agents to form the network.\footnote{Throughout the paper I focus on directed networks. Most of the results have counterparts in the corresponding undirected case, but require additional structure like transferable utilities.}

To study the properties of dynamic network formation processes, I begin by characterizing some general properties of static network formation games. I show that any network formation game, where agents assign utilities to networks, has an equivalent representation where agents derive value from the sub-structures that are realized in the network. This means that the choice of forming/severing a link depends on the values of the structures that get created/destroyed in the process. This gives a simple economic interpretation to the values of structures: they capture the incentives of agents to deviate by creating or destroying these structures.

The interpretation that structure values capture the strategic incentives of players has an implication that greatly simplifies the analysis of the game. We say that an agent participates in a structure if they have an outgoing link in it. I show that all agents that participate in a structure give it the same value if and only if the game is an exact potential game, meaning that there is a single function that captures the incentives of players to deviate. This is the first technical contribution of the paper. By focusing on potential games, we can characterize the game by specifying a single value to each structure.

Having established that static games have a structure value representation, I study the implications of this for a dynamic network formation game. Agents meet stochastically and decide whether to form or sever links based on deterministic utilities, a choice that is subject to idiosyncratic shocks. The second technical contribution of the paper is to show that this Markov process is reversible if and only if the static game with the deterministic utilities is a potential game. In this case, the stationary probability of the process can be computed explicitly: the probability of observing a network is proportional to the exponential of its potential. This is called a Gibbs measure, and its structure is ideal for studying how the model primitives affect the long-run properties of the network formation process.

To study the properties of large networks, it is useful to have a scheme that allows the size of the network to be scaled up. Motivated by the fact that potential games have a single value assigned to each structure, I study a class of potential games characterized by \textit{motifs}. A motif is a fixed structure, such that every time the structure is realized in the network (allowing re-labeling of the nodes), agents that participate in it get the same value. By specifying a finite set of motifs, we can study the dynamics of the model for an arbitrary number of players. This captures scenarios in which incentives do not depend on the identity of the agents who interact, but only on the structure of interaction. An example of such a scenario is having a fixed cost of forming links and a payoff when this link gets reciprocated.

The next technical contribution of the paper is to show that the model with motif utilities becomes equivalent (in an appropriate graph limit sense) to an Erd\H{o}s--R\'enyi model, where the parameter of the model is determined by an optimization process. The result follows from the literature on large deviations theory applied to large dense graphs \citep{chatterjee_large_2011, chatterjee_estimating_2013}. This optimization problem corresponds to a trade-off between agents' utilities and a quantity called \textit{entropy}, which takes into account the exponentially increasing size of network space. Introducing multiple types of agents generalizes this result, such that the process instead converges to a stochastic block model. In this case, the linking probabilities are determined by an analogous optimization process. 

The nature of the optimization process allows for the emergence of discontinuities in the asymptotic properties of the process, which I refer to as \textit{phase transitions}. This is the main conceptual result in the paper. Since the parameter in the equivalent Erd\H{o}s--R\'enyi model is the maximizer of some objective function, even a continuous change in the values of motifs can lead to a discontinuous change in this parameter. At these transitions, the network discretely changes from a high-density phase to a low-density phase (and vice versa) as parameters are changed.

To illustrate how these phase transitions might arise naturally in an economic setting, I study a simple model of trade, where firms stochastically meet and decide whether to form trade relationships. Firms must incur some distance-dependent cost to state the intent to trade with another firm, and they obtain gains from trade when there is a mutual intent to form a partnership. If the gains from trade are changed continuously (with a tax, for example), a phase transition can be induced, breaking the trading network apart. More complex trading structures lead to diverse manifestations of phase transitions, illustrating the richness of this framework.

Together with the results on the static network formation game, the results on phase transitions shed light on the very complex process of network formation. The low-dimensional behavior of the asymptotics allows for potential applications of this framework to more complex phenomena where networked interactions are important, such as international trade or inter-bank lending.  These results showcase novel phenomena that are relevant for the dynamics of and interventions on complex network structures.

\subsection{Related Literature}
This paper contributes to the broad literature on the analysis of endogenous network formation models. Early work focused on the properties of deterministic network formation games with finite players \citep{jackson_strategic_1996, bala_noncooperative_2000}. The dynamics of network formation have been analyzed in similar models, motivated in their own right and as a selection device for the large number of equilibria that can arise in static models \citep{bala_noncooperative_2000, jackson_evolution_2002, currarini_economic_2009}.\footnote{See \citet{demange_survey_2005, jackson_social_2011} for more complete surveys of network formation models and their properties.} More recent work has analyzed the behavior of network formation games with large numbers of players, which can exhibit properties such as tipping points \citep{golub_strategic_2010, mele_structural_2017, mele_structural_2022, elliott_supply_2022}.

Another branch of the literature on network formation includes actions beyond link formation/deletion, such as \citet{hsieh_structural_2022, sadler_games_2021}. \citet{badev_nash_2021} considers simultaneous changes to the state of multiple network links. This paper differs from these in that more general utility functions for network formation are considered, but only in settings with single link evaluations and with no additional actions. For the case of forward-looking agents, \citet{dutta_farsighted_2005} establishes equilibrium existence results in a more general setting, but lacks a tractable characterization of the resulting probabilities over the space of networks.

A special class of models that is of interest in econometric estimation are Exponential Random Graph Models (ERGMs)\footnote{See \citet{robins_introduction_2007} for an overview of ERGMs}. In these models, the probability associated with a network $g$ is proportional to $\exp(\Phi(g))$, where $\Phi(g)$ is a linear combination of \textit{sufficient statistics} of the network. The advantage of analyzing these models is twofold. First, it has been shown that they arise naturally from strategic network formation models under some regularity assumptions, which would otherwise be intractable \citep{butts_using_2009, mele_structural_2017, christakis_empirical_2020, chandrasekhar_network_2025}. The second advantage of ERGMs is that they provide a natural tool to analyze large graphs, provided that the sufficient statistics scale properly as the number of agents grows \citep{aristoff_phase_2018, mele_structural_2017, mele_structural_2022}. This paper contributes to the literature on ERGMs by characterizing a model of network formation that provides a microfoundation for distributions with arbitrary functions $\Phi(g)$, and hence any ERGM. The model can then be evaluated to see if the primitives satisfy desirable properties, which would yield criteria to evaluate the use of a given ERGM.

Finally, this paper contributes to the literature on graph limits by providing a characterization of the limiting properties of large networks in terms of model primitives. In the case of ERGMs, the model becomes intractable when this number is large but finite\footnote{Previous work on tackling the intractability of these models includes \citet{boucher_my_2017}, which uses Markov Random Fields to reduce the complexity in estimation, and \citet{graham_econometric_2017}, which proposes consistent estimators of structural parameters that take into account degree heterogeneity. \citet{mele_structural_2017} and \citet{mele_structural_2022} use an exchange Markov chain Monte Carlo method to estimate parameters through maximum likelihood in ERGMs.}. However, the distributions can be characterized in the limit of a large number of players. Results from the theory of graph limits \citep{chatterjee_large_2011, chatterjee_estimating_2013, mele_structural_2017} allow for a characterization of the limiting behavior of the process. This characterization makes it evident that the model can exhibit phase transitions\footnote{Phase transitions are a central topic of analysis in statistical physics. See, for example, \citet{holyst_phase_2000, dorogovtsev_critical_2008, squartini_breaking_2015, cimini_statistical_2019} for applications of statistical physics to stochastic network analysis.}, a phenomenon that has been observed in the context of social network formation \citep{golub_strategic_2010} and supply chain formation \citep{elliott_supply_2022}. 
\section{A Static Network Formation Game} \label{sec:static_game}
In this section I present a static model of network formation that will serve as a benchmark for the dynamic game developed in Section \ref{sec:dynamic_game}. I characterize the important properties of the structure of the static game that allow for a tractable study of the dynamic game.

\subsection{Strategies and payoffs}
I consider a game with $N \ge 2$ agents, where ${\cal J}_N \coloneqq \{1,\ldots,N\}$ denotes the set of players. In this game, agents choose who to connect to, and their payoffs are determined by the resulting structure of connections.

An agent's \textit{strategy} is the subset of ${\cal J}_N$ that she chooses to connect to.\footnote{I focus on pure strategies throughout the paper.} I assume that agents cannot form connections with themselves, so agent $i$'s action set is ${\cal S}_{i,N} \coloneqq 2^{{\cal J}_N \backslash \{i\}}$. A typical strategy is denoted with $s_i$. 

A realization of all players' strategies can be characterized as a \textit{network}. I use the notation $ij$ to refer to the tuple $(i,j)$. A (directed) network $g$ is the set of all dyads $ij$ such that $j \in s_i$. Since I do not consider self-interactions, the set of possible dyads is ${\cal D}_N \coloneqq \{ij \in {\cal J}_N^2 \, | \, i \ne j\}$. The set of all networks is, then, ${\cal G}_N \coloneqq 2^{{\cal D}_N}$. For a given network $g$, $g_{-i}$ denotes the subnetwork for which all links with $i$ as the source are removed, such that $g_{-i}$ captures the strategies of all other players. Abusing notation, I often write the network as $g = (s_i, g_{-i})$ or $g = (s_1, \ldots, s_N)$.

Since the strategies of all agents can be represented as a network, agents' payoffs are functions of the realization of the network. Specifically, let $U : {\cal J}_N \times {\cal G}_N \to \mathbb{R}$ be the payoff function, such that $U_i(g)$ is the payoff of agent $i$ under network $g$. I also refer to $U_i(g)$ as the utility that agent $i$ associates to $g$.

\subsection{Potential games}
The network formation game described above is fairly general, admitting $|{\cal G}_N| = 2^{N(N-1)}$ distinct action profiles. Because of its complexity, studying properties of the general game (such as its equilibria) becomes intractable for large $N$. In this section I characterize a special subset of games, called \textit{potential games} \citep{monderer_potential_1996}, which have a structure that allows for a more tractable study of their properties. I also state necessary and sufficient conditions for the network formation game to be a potential game.

\begin{definition} \label{def:potential_game}
    The network formation game $({\cal J}_N, {\cal G}_N, U)$ is a potential game if there exists a function $\Phi: {\cal G}_N \to \mathbb{R}$, called the \textit{potential}, such that
    \begin{align}
        U_i(s_i, g_{-i}) - U_i(s_i',g_{-i}) = \Phi(s_i, g_{-i}) - \Phi(s_i',g_{-i})
    \end{align}
    for all $i \in {\cal J}_N$, $g \in {\cal G}_N$ and $s_i, s_i' \in {\cal S}_{i,N}$.\footnote{More specifically, this is a \textit{cardinal} potential game. In an \textit{ordinal} potential game only the signs of these differences would have to match.}
\end{definition}

Potential games impose additional structure on the network formation game. This allows us to aggregate the incentives for unilateral deviation into a single function, as opposed to study each individual's payoff function. The existence of a potential yields some nice properties of the Nash equilibria of the game.

\begin{remark}
    If the network formation game is a potential game, then all Nash equilibria are local maxima of the potential. That is, a network $g = (s_1, \ldots, s_N)$ is a Nash equilibrium if and only if $\Phi(s_i, g_{-i}) \ge \Phi(s_i', g_{-i})$ for all $i$ and all $s_i'$. Additionally, since ${\cal G}_N$ is finite, there exists a global maximum of $\Phi$, so a Nash equilibrium in pure strategies always exists for potential games.
\end{remark}

Given the added simplicity in studying properties of potential games, it is useful to have conditions under which an arbitrary network formation game is a potential game. This is difficult for general games, but the structure of a network formation game allows for a simplified characterization. In order to provide this characterization, it is useful to define a single-link ``switching'' function. Consider the function $\tau: {\cal D}_N \times {\cal G}_N \to {\cal G}_N$ defined by
\begin{align}
    \tau_{ij}(g) =
    \begin{cases}
        g \cup \{ij\} & \textrm{if } ij \not \in g \\
        g \backslash \{ij\} & \textrm{if } ij \in g.
    \end{cases}
\end{align}
Intuitively, the operation $\tau_{ij}$ creates the link $ij$ if it is not present in a network and severs it if it is. These switching operations are useful to characterize single-link deviations in strategies, rather than an arbitrary change in the composition of an agent's connections. With these single-link deviations, we can define the following condition on the payoff functions.

\begin{definition} \label{def:conservative}
    A set of payoff functions $U$ is said to be \textit{conservative} if the following condition on marginal utilities (MUs) holds:
    \begin{align} \label{eq:conservative_utility}
        &\underbrace{U_i(\tau_{ij}(g)) - U_i(g)}_{\textrm{MU of changing link $ij$ in network $g$}} + \underbrace{U_{i'}(\tau_{i'j'} \circ \tau_{ij}(g)) - U_{i'}(\tau_{ij}(g))}_{\textrm{MU of changing link $i'j'$ in network $\tau_{ij}(g)$}} \nonumber \\
        &= \underbrace{U_{i'}(\tau_{i'j'}(g)) - U_{i'}(g)}_{\textrm{MU of changing link $i'j'$ in network $g$}} + \underbrace{U_{i}(\tau_{ij} \circ \tau_{i'j'}(g)) - U_{i}(\tau_{i'j'}(g))}_{\textrm{MU of changing link $ij$ in network $\tau_{i'j'}(g)$}}
    \end{align}
    for all $g \in {\cal G}_N$ and all $ij, i'j' \in {\cal D}_N$. That is, the sum of the marginal utilities of changing the state of links $ij$ and $i'j'$ is independent of the order in which the changes are made.
\end{definition}

Utility functions satisfying the property in Definition \ref{def:conservative} are called conservative in analogy to conservative vector fields.\footnote{A $C^1$ vector field ${\bf F}:U \to \mathbb{R}^n$, with $U \subset \mathbb{R}^n$ open and simply connected, is said to be conservative if there exists a function $\phi$ such that ${\bf F}({\bf x}) = \nabla \phi({\bf x})$. According to the Poincar\'e Lemma, if $\forall i,j \in \{1,\ldots,n\}$, $\frac{\partial F_i({\bf x})}{\partial x_j} = \frac{\partial F_j({\bf x})}{\partial x_i}$, then ${\bf F}$ is conservative \citep{warner_foundations_1983}.\label{foot:conservative}} The condition in Equation \eqref{eq:conservative_utility} can be thought of as the condition for the Poincar\'e Lemma in Footnote \ref{foot:conservative}, which allows for the characterization of a vector field as the gradient of a scalar field. That is, the information contained in a vector-valued function can be extracted from a scalar-valued function, effectively reducing the dimensionality of the problem. Proposition \ref{prop:conservative} connects this intuition to our network formation game.

\begin{proposition} \label{prop:conservative}
    The network formation game is a potential game if and only if the payoff functions are conservative.
\end{proposition}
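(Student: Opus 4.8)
The plan is to prove the two implications separately. The direction ``potential $\Rightarrow$ conservative'' is a short computation, so I would dispatch it first. Suppose a potential $\Phi$ exists. A single-link switch $\tau_{ij}(g)$ is a unilateral deviation by agent $i$ (only whether $i$ links to $j$ changes, with $g_{-i}$ held fixed), so Definition \ref{def:potential_game} gives the identity $U_i(\tau_{ij}(g)) - U_i(g) = \Phi(\tau_{ij}(g)) - \Phi(g)$ for every $g$ and every $ij \in {\cal D}_N$. Substituting this into each of the four bracketed terms of Equation \eqref{eq:conservative_utility}, the left side telescopes to $\Phi(\tau_{i'j'}\circ\tau_{ij}(g)) - \Phi(g)$ and the right side to $\Phi(\tau_{ij}\circ\tau_{i'j'}(g)) - \Phi(g)$. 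Since switching two distinct links commutes ($\tau_{ij}\circ\tau_{i'j'} = \tau_{i'j'}\circ\tau_{ij}$ when $ij \ne i'j'$, and \eqref{eq:conservative_utility} is trivial when $ij = i'j'$), the two sides agree, so $U$ is conservative.

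For the converse, I would construct $\Phi$ explicitly, taking the empty network $\emptyset$ as a basepoint. Any $g \in {\cal G}_N$ is reached from $\emptyset$ by a \emph{monotone path}: an ordering $e_1,\dots,e_k$ of the links of $g$, giving $g^{(0)} = \emptyset$, $g^{(\ell)} = \tau_{e_\ell}(g^{(\ell-1)})$, and $g^{(k)} = g$. Writing $a(e)$ for the source of link $e$, define
\begin{align}
    \Phi(g) = \sum_{\ell=1}^{k} \Big[ U_{a(e_\ell)}\big(g^{(\ell)}\big) - U_{a(e_\ell)}\big(g^{(\ell-1)}\big) \Big],
\end{align}
with the empty sum giving $\Phi(\emptyset) = 0$. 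The crux is that this is independent of the chosen ordering. Any two orderings of the same link set differ by a sequence of adjacent transpositions, so it suffices to check invariance under swapping the step adding $e_\ell$ with the step adding $e_{\ell+1}$; because $e_\ell \ne e_{\ell+1}$ the switches commute, so the swap leaves the networks $g^{(\ell-1)}$ and $g^{(\ell+1)}$ (hence the rest of the sum) unchanged, and it alters the two summands for steps $\ell,\ell+1$ precisely according to \eqref{eq:conservative_utility} applied at $g^{(\ell-1)}$ with $ij = e_\ell$, $i'j' = e_{\ell+1}$. Thus $\Phi$ is well defined on all of ${\cal G}_N$.

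It then remains to verify $\Phi$ is a potential. First I would prove the single-link identity $\Phi(\tau_{ij}(g)) - \Phi(g) = U_i(\tau_{ij}(g)) - U_i(g)$: for $ij \notin g$, evaluate $\Phi(g \cup \{ij\})$ along a monotone path that first builds up $g$ and then adds $ij$ last (legitimate by well-definedness), so the final summand is exactly $U_i(g\cup\{ij\}) - U_i(g)$; the case $ij \in g$ follows by applying this to $g \setminus \{ij\}$. Finally, for arbitrary $g_{-i}$, $s_i$, $s_i'$, transform $(s_i',g_{-i})$ into $(s_i,g_{-i})$ by a sequence of single-link switches all with source $i$; applying the single-link identity at each step and telescoping the right-hand side (the intermediate $U_i$ values cancel) yields $\Phi(s_i,g_{-i}) - \Phi(s_i',g_{-i}) = U_i(s_i,g_{-i}) - U_i(s_i',g_{-i})$, which is exactly Definition \ref{def:potential_game}.

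The main obstacle is the well-definedness of $\Phi$. The conservative condition is purely \emph{local} — it only asserts that two consecutive link changes commute in their effect on utilities — and the work of the argument is to propagate this to \emph{global} path-independence, which is the discrete analogue of the Poincar\'e Lemma referenced in Footnote \ref{foot:conservative}. Restricting to monotone paths out of $\emptyset$ is what keeps this manageable: any two such paths to a given $g$ add the same links and so are reorderings of each other, and reorderings are generated by adjacent transpositions, each handled by a single instance of \eqref{eq:conservative_utility}; this avoids having to reason directly about arbitrary cycles in the switching graph.
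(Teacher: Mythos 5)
Your proof is correct and follows essentially the same approach as the paper: the paper isolates the construction into a technical Lemma (B.1) stated for a generic antisymmetric cost function $\phi_{ij}(g) = U_i(\tau_{ij}(g)) - U_i(g)$, but it likewise builds $\Phi$ by summing one-link marginal utilities along a monotone path from $\emptyset$, establishes path-independence by reducing to adjacent transpositions handled by one application of \eqref{eq:conservative_utility}, and then reduces general deviations to single-link switches by telescoping. No gaps.
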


The equivalence between conservativeness and the existence of a potential is analogous to Theorem 2.8 in \citet{monderer_potential_1996}, but it is greatly simplified by using the network structure of the game. Since conservativeness is a necessary and sufficient condition for the game to be a potential game, it suffices to check conservativeness to evaluate whether a potential exists. We can go even further, characterizing the entire class of potential games. In order to do this, I use an alternate representation of the incentive structure of the game.

\subsection{The value of group structures}
To begin the characterization of the class of potential games, I use the following technical lemma.

\begin{lemma} \label{lem:value_representation}
    For a network formation game $({\cal J}_N, {\cal G}_N, U)$, there exists a unique function $V:{\cal J}_N \times {\cal G}_N \to \mathbb{R}$ such that
    \begin{align}
        U_i(g) = \sum_{g' \subseteq g} V_i(g')
    \end{align}
    for all $i \in {\cal J}_N$ and $g \in {\cal G}_N$. Additionally, this relation can be inverted as follows:
    \begin{align}
        V_i(g) = \sum_{g' \subseteq g} (-1)^{|g \backslash g'|} U_i(g').
    \end{align}
\end{lemma}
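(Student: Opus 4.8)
The plan is to recognize this as M\"obius inversion over the Boolean lattice $(\mathcal{G}_N, \subseteq)$ --- whose ground set is the dyad set $\mathcal{D}_N$ --- and to prove it by an elementary double-counting argument rather than by invoking the general M\"obius machinery. Since both asserted identities hold coordinate-wise in $i$, I would fix an agent $i$ throughout and work with the single function $U_i : \mathcal{G}_N \to \mathbb{R}$, seeking $V_i : \mathcal{G}_N \to \mathbb{R}$.

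For existence, I would \emph{define} $V_i$ by the claimed closed form $V_i(g) = \sum_{g' \subseteq g} (-1)^{|g \setminus g'|} U_i(g')$ and verify directly that it reproduces $U_i$. Substituting into $\sum_{h \subseteq g} V_i(h)$ and interchanging the two finite sums yields $\sum_{g' \subseteq g} U_i(g') \bigl( \sum_{h :\, g' \subseteq h \subseteq g} (-1)^{|h \setminus g'|} \bigr)$. Parametrizing the inner index set by $h = g' \cup t$ with $t \subseteq g \setminus g'$ turns the inner sum into $\sum_{t \subseteq g \setminus g'} (-1)^{|t|} = \sum_{k=0}^{|g \setminus g'|} \binom{|g \setminus g'|}{k} (-1)^k = (1-1)^{|g \setminus g'|}$, which is $1$ if $g' = g$ and $0$ otherwise. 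Hence $\sum_{h \subseteq g} V_i(h) = U_i(g)$, so this $V_i$ works; and it is precisely the inversion formula in the statement.

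For uniqueness, I would argue by induction on $|g|$. If $V_i$ and $\tilde V_i$ both represent $U_i$, then $W \coloneqq V_i - \tilde V_i$ satisfies $\sum_{g' \subseteq g} W(g') = 0$ for every $g \in \mathcal{G}_N$. Taking $g = \emptyset$ gives $W(\emptyset) = 0$; and if $W$ vanishes on all proper subsets of a given $g$, the same identity forces $W(g) = 0$, so $W \equiv 0$. Equivalently, one can note that the linear map $V \mapsto U$ on the finite-dimensional space $\mathbb{R}^{\mathcal{G}_N}$ is triangular with unit diagonal relative to any linear extension of the partial order $\subseteq$, hence invertible; this delivers existence and uniqueness simultaneously, and the explicit formula merely identifies the inverse.

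The only delicate point is the summation interchange in the existence step: one must reparametrize $\{(h, g') : g' \subseteq h \subseteq g\}$ carefully so that the exponent of $-1$ collapses to the binomial sum $\sum_{k} \binom{m}{k}(-1)^k$ with $m = |g \setminus g'|$, and then read off the indicator $\mathbbm{1}[g' = g]$. Everything else --- the induction, the finiteness of $\mathcal{G}_N$, the sign bookkeeping --- is routine, so I expect no substantive obstacle beyond that reindexing.
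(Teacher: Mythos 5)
Your proof is correct and is essentially the paper's approach: the paper simply cites M\"obius inversion over the Boolean lattice of subsets (Stanley, Prop.\ 3.7.2), and your argument is the standard self-contained derivation of that very result --- define $V_i$ by the alternating-sign formula, verify via the binomial identity $\sum_{t\subseteq S}(-1)^{|t|}=\mathbbm{1}\{S=\emptyset\}$, and get uniqueness from the triangular (unit-diagonal) structure of the linear map $V\mapsto U$. The reindexing step you flag is handled correctly, so there is no gap.
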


I refer to the quantity $V_i(g)$ as the \textit{value} that $i$ assigns to the \textit{structure} $g$. Despite their similarity to the payoff functions $U$, these objects are conceptually very different. To see this, let us consider the change in utility from adding a single link: $U_i(\tau_{ij}(g)) - U_i(g)$, with $ij \notin g$. This marginal utility only depends on the value $V_i(g')$ of structures $g'$ that contain $ij$ and other links already present in $g$. A similar intuition holds for an arbitrary change in strategy, where the utility of deviating depends only on the value of the structures that are created or destroyed.

The representation of marginal utilities as the sum of values of structures that are created or destroyed gives a very nice representation of conservativeness. Let ${\cal J}_{\textrm{src}}(g) \coloneqq \{i \, | \, \exists j : ij \in g \}$ be the set of \textit{source nodes} of $g$. I also say that agent $i$ \textit{participates} in the structure $g$ if $i \in {\cal J}_{\textrm{src}}(g)$. The following Proposition gives necessary and sufficient conditions for the game to be a potential game in terms of the values of structures.

\begin{proposition} \label{prop:value_representation}
    The network formation game $({\cal J}_N, {\cal G}_N, U)$ is a potential game if and only if
    \begin{align} \label{eq:value_potential_cond}
        V_i(g) = V_j(g) \eqqcolon V_0(g) \quad \forall i,j \in {\cal J}_{\textrm{src}}(g).
    \end{align}
    Furthermore, the potential is given by
    \begin{align}
        \Phi(g) = \sum_{g' \subseteq g} V_0(g').
    \end{align}
\end{proposition}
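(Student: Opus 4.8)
The plan is to derive both implications directly from Lemma \ref{lem:value_representation} together with Möbius inversion on the Boolean lattice of sub-networks; Proposition \ref{prop:conservative} is not needed for this route, though one could instead translate the conservativeness condition into the $V_i$'s. The common starting point is a bookkeeping identity: splitting the sum in Lemma \ref{lem:value_representation} according to whether a sub-network contains a link sourced at $i$, and noting that the sub-networks $g' \subseteq g$ with $i \notin {\cal J}_{\textrm{src}}(g')$ are exactly the sub-networks of $g_{-i}$, gives
\[
U_i(g) - U_i(g_{-i}) = \sum_{g' \subseteq g,\ i \in {\cal J}_{\textrm{src}}(g')} V_i(g') \qquad \text{for all } i,g.
\]
Likewise, for a single added link $ij \notin g$ one gets $U_i(g \cup \{ij\}) - U_i(g) = \sum_{h \subseteq g} V_i(h \cup \{ij\})$.

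\noindent\textbf{If direction.} Assume $V_i(g') = V_0(g')$ for every $g'$ and every $i \in {\cal J}_{\textrm{src}}(g')$, and set $\Phi(g) \coloneqq \sum_{g' \subseteq g} V_0(g')$ (the value assigned to $V_0(\emptyset)$ is immaterial). Substituting into the identity above, $U_i(g) - U_i(g_{-i}) = \sum_{g' \subseteq g,\ i \in {\cal J}_{\textrm{src}}(g')} V_0(g') = \Phi(g) - \Phi(g_{-i})$, since the omitted terms are precisely those indexed by sub-networks of $g_{-i}$. For any two strategies $s_i, s_i'$ with the same complement, writing $g = (s_i,g_{-i})$, $h=(s_i',g_{-i})$, using $g_{-i}=h_{-i}$, and subtracting the two instances of this equality yields $U_i(g) - U_i(h) = \Phi(g) - \Phi(h)$. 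Hence $\Phi$ is a potential of the stated form.

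\noindent\textbf{Only if direction.} Suppose a potential $\Phi$ exists and let $W(g) \coloneqq \sum_{g' \subseteq g}(-1)^{|g \backslash g'|}\Phi(g')$, so that $\Phi(g) = \sum_{g' \subseteq g} W(g')$. Fix a dyad $ij$. Applying Definition \ref{def:potential_game} to the single-link deviation at $ij$ gives $U_i(g \cup \{ij\}) - U_i(g) = \Phi(g \cup \{ij\}) - \Phi(g)$ for every $g \not\ni ij$; expanding both sides as above, $\sum_{h \subseteq g} V_i(h \cup \{ij\}) = \sum_{h \subseteq g} W(h \cup \{ij\})$ for all such $g$. Applying the inversion formula of Lemma \ref{lem:value_representation} on the lattice of subsets of ${\cal D}_N \backslash \{ij\}$ forces $V_i(g') = W(g')$ for every $g'$ containing the link $ij$. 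Since any structure $g'$ with $i \in {\cal J}_{\textrm{src}}(g')$ contains some link $ij$, this gives $V_i(g') = W(g')$ for all $i \in {\cal J}_{\textrm{src}}(g')$; as $W$ does not depend on $i$, condition \eqref{eq:value_potential_cond} holds with $V_0 \coloneqq W$, and $\Phi(g) = \sum_{g' \subseteq g} V_0(g')$.

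\noindent\textbf{Main obstacle.} The delicate step is the ``peeling'' in the only-if direction: one must observe that a potential does not directly constrain arbitrary deviations in a usable form, but through single-link deviations it pins down the partial sums $\sum_{h \subseteq g} V_i(h \cup \{ij\})$, and that these are exactly of the shape to which Möbius inversion applies once the fixed link $ij$ is factored out. Minor care is needed for the empty structure, where ${\cal J}_{\textrm{src}}(\emptyset) = \emptyset$ imposes no constraint and $V_0(\emptyset)$ simply absorbs the additive constant that any potential enjoys.
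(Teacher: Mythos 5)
Your proof is correct. Both directions check out: the bookkeeping identity $U_i(g) - U_i(g_{-i}) = \sum_{g' \subseteq g,\, i \in {\cal J}_{\textrm{src}}(g')} V_i(g')$ is valid because the sub-networks of $g$ omitting $i$ as a source are precisely the sub-networks of $g_{-i}$; the ``if'' direction's subtraction trick legitimately converts this into Definition~\ref{def:potential_game} for arbitrary deviations; and the ``only if'' direction's peeling argument is a clean application of M\"obius uniqueness on the Boolean lattice of ${\cal D}_N \setminus \{ij\}$ once $ij$ is factored out.

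Your route differs from the paper's in a way worth noting. The paper first invokes Lemma~\ref{lem:choice_equivalence} to replace $U$ with the choice-equivalent $\tilde U_i(g) = \sum_{g' \subseteq g} V_i(g') \mathbbm{1}\{i \in {\cal J}_{\textrm{src}}(g')\}$, which nullifies $V_i$ outside of source structures; it then compares the M\"obius values of $\tilde U_i$ and of $\Phi(g)-\Phi(g_{-i})$ on the full lattice of ${\cal D}_N$ in the forward direction, and in the reverse direction proposes the candidate potential and reduces the verification to single-link deviations by appealing to the argument in Proposition~\ref{prop:conservative}. You avoid the choice-equivalence detour entirely by working directly with $U$ and $V$: your identity isolates exactly the source-containing terms without a projection step, your ``if'' direction handles arbitrary deviations at once via subtraction rather than reducing to single links, and your ``only if'' direction uses a per-dyad M\"obius inversion on the sublattice of ${\cal D}_N \setminus \{ij\}$ rather than a single full-lattice comparison. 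The trade-off: your argument is more self-contained (it relies only on Lemma~\ref{lem:value_representation}), while the paper's route makes explicit the role of choice-equivalence, which it then reuses for exposition elsewhere. Both are sound; yours is arguably the cleaner derivation of this particular proposition. The one caveat is cosmetic: in your ``only if'' step you quantify ``for every $g \not\ni ij$'' but then apply uniqueness ``on the lattice of subsets of ${\cal D}_N \setminus \{ij\}$'' --- make sure to note explicitly that this lattice is exactly the index set of the equality you derived, so the uniqueness lemma applies verbatim.
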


Proposition \ref{prop:value_representation} states that the network formation game is a potential game if and only if all structures give the same value to all agents who are source nodes in the structure. This means that the path-independence property in Proposition \ref{prop:conservative} takes the form of a symmetry in the values of structures. But, what is special about the source nodes of the structures? Why doesn't this condition apply to all agents? The intuition behind this lies in a property called \textit{choice-equivalence}.

\begin{definition}
    Two sets of utility functions $U$ and $\Tilde{U}$ are \textit{choice-equivalent} if
    \begin{align}
        U_i(s_i,g_{-i}) - U_i(s_i',g_{-i}) = \tilde{U}_i(s_i,g_{-i}) - \tilde{U}_i(s_i',g_{-i})
    \end{align}
    for all $g \in {\cal G}_N$, $i \in {\cal J}_N$ and $s_i,s_i' \in {\cal S}_{i,N}$.
\end{definition}

If two sets of utility functions are choice-equivalent, then agents have the same incentives to change strategies under both of them. This means that they share important properties, such as their Nash equilibria and their potential (if the game is a potential game). Choice-equivalence manifests itself in a clear way when utilities are represented in terms of structure values, as shown in the following Lemma.

\begin{lemma} \label{lem:choice_equivalence}
    Two sets of utility functions $U$ and $\tilde{U}$, with corresponding structure values $V$ and $\tilde{V}$ are choice-equivalent if and only if
    \begin{align} \label{eq:value_choice_equivalence}
        V_i(g) = \tilde{V}_i(g) \quad \forall i \in {\cal J}_{\textrm{src}}(g)
    \end{align}
    for all $g \in {\cal G}_N$.
\end{lemma}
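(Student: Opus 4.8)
The plan is to reduce choice-equivalence to an assertion about single-link marginal utilities and then read that assertion off the value representation of Lemma \ref{lem:value_representation}. The starting point is the computation: for $ij \notin g$, substituting $U_i(g) = \sum_{g'\subseteq g} V_i(g')$ gives
\begin{align} \label{eq:mu_value_sum}
  U_i(\tau_{ij}(g)) - U_i(g) = \sum_{\substack{g'\subseteq g\cup\{ij\}\\ ij\in g'}} V_i(g') = \sum_{h\subseteq g} V_i(h\cup\{ij\}),
\end{align}
and every structure $h\cup\{ij\}$ on the right-hand side has $i\in{\cal J}_{\textrm{src}}(h\cup\{ij\})$ because it contains the link $ij$. (If instead $ij\in g$, the same sum appears with an overall minus sign, so no new information arises.)

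For the ``if'' direction, observe that any deviation from $s_i$ to $s_i'$ holding $g_{-i}$ fixed can be realized as a finite sequence of single-link flips $\tau_{ij_1},\dots,\tau_{ij_m}$ through intermediate networks all of the form $(t^\ell, g_{-i})$; telescoping, $U_i(s_i,g_{-i}) - U_i(s_i',g_{-i})$ equals a signed sum of single-link marginal utilities of the form in \eqref{eq:mu_value_sum}. If $V_i$ and $\tilde V_i$ agree on every structure in which $i$ is a source, then \eqref{eq:mu_value_sum} shows each such marginal utility is the same under $U$ and $\tilde U$, hence so is the telescoped sum; this is exactly the definition of choice-equivalence.

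For the ``only if'' direction, apply choice-equivalence to the deviation taking $i$'s strategy from $s_i\cup\{j\}$ to $s_i$ (with $j\notin s_i$): by \eqref{eq:mu_value_sum}, and writing $W_i \coloneqq V_i - \tilde V_i$, this yields $\sum_{h\subseteq g} W_i(h\cup\{ij\}) = 0$ for every $i$, every $j\ne i$, and every $g$ with $ij\notin g$. I then show $W_i(g')=0$ for all $g'$ with $i\in{\cal J}_{\textrm{src}}(g')$ by induction on $|g'|$. The base case $|g'|=1$, i.e.\ $g'=\{ij\}$, follows from the identity with $g=\emptyset$. For the inductive step, pick any $j$ with $ij\in g'$ and apply the identity with $g=g'\setminus\{ij\}$: the term $h=g$ equals $W_i(g')$, and every other term $W_i(h\cup\{ij\})$ has $|h\cup\{ij\}| < |g'|$ with $i$ still a source of $h\cup\{ij\}$, so it vanishes by the inductive hypothesis; hence $W_i(g')=0$.

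The algebra in \eqref{eq:mu_value_sum} and the telescoping are routine; the step needing care is the induction, namely verifying that every structure other than $g'$ generated by the identity $\sum_{h\subseteq g} W_i(h\cup\{ij\})=0$ is strictly smaller than $g'$ and still contains $ij$, so that the inductive hypothesis applies. Finally, the restriction to source nodes is not a loss: \eqref{eq:mu_value_sum} only ever involves structures containing $ij$, so neither the argument nor choice-equivalence itself places any constraint on $V_i(g')$ when $i\notin{\cal J}_{\textrm{src}}(g')$, matching the statement.
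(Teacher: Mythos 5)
Your proof is correct and takes essentially the same approach as the paper: reduce choice-equivalence to equality of single-link marginal utilities, expand these via the value representation to isolate structures containing $ij$ (all of which have $i$ as a source), and then recover $V_i = \tilde V_i$ on such structures by induction on size, with the base case $\{ij\}$ and inductive step peeling off the top term $g'$ while the strictly smaller structures vanish by hypothesis. Writing the difference as $W_i = V_i - \tilde V_i$ is a mild notational streamlining, but the decomposition, the induction variable, and the key observation that $ij \in g' \Rightarrow i \in {\cal J}_{\textrm{src}}(g')$ all match the paper's argument.
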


Lemma \ref{lem:choice_equivalence} sheds light on what is special about source nodes. Take some structure $g$ and some $i \notin {\cal J}_{\textrm{src}}(g)$. Then the value $V_i(g)$ can be changed arbitrarily and the resulting utilities will be choice-equivalent to the original ones. This means that for a given agent $i$, the only structures that matter for choice-equivalence (and for the game to be a potential game) are those for which it is a source node.

\begin{example}[label=ex:simple_trade]
    To build an intuition for the results above, I now present a simple model of trade, which I generalize in Section \ref{sec:trade}. Consider $N$ firms which can choose to form trade links with each other. Forming a trade link costs $c > 0$, and if two firms form links with each other, both of them get some value $v > 0$. Explicitly, firm $i$'s utility function is
    \begin{align}
        U_i(g) = v \underbrace{\sum_{j \in {\cal J}_N} \mathbbm{1}\{ij \in g, ji \in g\}}_{\textrm{reciprocated links}} - c \underbrace{\sum_{j \in {\cal J}_N} \mathbbm{1}\{ij \in g\}}_{\textrm{outgoing links}}.
    \end{align}
    In this example, there are two types of structures to which agents assign value: outgoing links and reciprocated pairs of links. Since both firms get the same value from trade, the conditions in Proposition \ref{prop:value_representation} hold, and the potential is
    \begin{align} \label{eq:simple_trade_potential}
        \Phi(g) = v \underbrace{\frac{1}{2} \sum_{i,j \in {\cal J}_N} \mathbbm{1}\{ij \in g, ji \in g\}}_{\textrm{total reciprocated pairs}} - c \underbrace{\sum_{i,j \in {\cal J}_N} \mathbbm{1}\{ij \in g\}}_{\textrm{total outgoing links}}
    \end{align}
    The factor of $1/2$ ensures that we are counting the number of appearances of a reciprocated pair, instead of all the links that are being reciprocated.
\end{example}

With this analysis, we can fully specify all utility functions for which the game is a potential game. 

\begin{remark}
    Suppose the network formation game $({\cal J}_N, {\cal G}_N, U)$ is a potential game with potential $\Phi$. Define the structure values
    \begin{align}
        V_0(g) \coloneqq \sum_{g' \subseteq g} (-1)^{|g \backslash g'|}\Phi(g').
    \end{align}
    Then $U$ is choice-equivalent to the utility functions
    \begin{align}
        \tilde{U}_i(g) \coloneqq \sum_{g' \subseteq g} V_0(g') \mathbbm{1}\{i \in {\cal J}_{\textrm{src}}(g')\}.
    \end{align}
    Therefore, all utilities corresponding to potential games can be characterized, up to choice-equivalence, in terms of the potential.
\end{remark}

Having characterized the structure of potential games in the static network formation game, I now formulate a model of dynamic network formation. The structure of the static game will be useful to characterize the properties of the dynamic game.

\section{Dynamic Network Formation} \label{sec:dynamic_game}
In this section I study a dynamic process of network formation. In this model, agents meet stochastically and make choices on whether to change the state of their connections. Throughout this section, the set of agents is ${\cal J}_N$ and the space of networks is ${\cal G}_N$

\subsection{Stochastic meeting of agents}
The network formation process occurs in continuous time. The state of the network at time $t$ is $G_t \in {\cal G}_N$. Agents meet stochastically and, conditional on meeting, decide whether they want to change the state of the network. If the network state is $g$, agent $i$ meets agent $j$ at an exogenous Poisson rate $\lambda_{ij}(g)$, that can potentially depend on $g$. I impose the following restriction on the meeting rates:
\begin{assumption} \label{A:meeting_rates}
    The meeting rates satisfy $\lambda_{ij}(g) = \lambda_{ij}(\tau_{ij}(g))$ for all dyads $ij$ and all networks $g$.
\end{assumption}
Assumption \ref{A:meeting_rates} means that the meeting rate for the dyad $ij$ is independent of whether it is present in the network, but can depend on the rest of the network. This is useful to obtain a tractable characterization of the stochastic dynamics.

Conditional on meeting $j$, agent $i$ makes a decision to keep or change the state of their relationship through a discrete choice rule, parametrized by a set of utility functions $U: {\cal J}_N \times {\cal G}_N \to \mathbb{R}$. Agent $i$ has a baseline utility $U_i(g)$ associated with network $g$, and there are idiosyncratic shocks to this utility every time a choice is made. After meeting at time $t$, agent $i$ chooses to change the state of link $ij$ if and only if
\begin{align} \label{eq:disc_choice}
    (1-\sigma) U_i(\tau_{ij}(G_t)) + \sigma \varepsilon_{t}^0 \ge (1-\sigma) U_i(G_t) + \sigma \varepsilon_{t}^1.
\end{align}
where the shocks $\varepsilon_{t}^k$ are i.i.d. (across $k$ and across time) drawn from some distribution $F_0$ and $\sigma \in (0,1)$ modulates the importance of noise in the choice process. As $\sigma \to 0$, the process becomes deterministic and the highest payoff option is chosen. As $\sigma \to 1$, the process becomes fully random and the state of each link is changed with probability $1/2$. Let $F_1$ be the CDF of $\varepsilon^1_{t} - \varepsilon^0_{t}$, such that the probability of the network changing from $g$ to $\tau_{ij}(g)$, denoted by $p_{ij}(g)$, is given by
\begin{align}
    p_{ij}(g) = 1 - F_1\left[ \left( \frac{1-\sigma}{\sigma} \right) (U_i(\tau_{ij}(g)) - U_i(g)) \right].
\end{align}
With this characterization, we can now analyze how the dynamics of the model depend on the structure of the utility functions.

\subsection{Dynamics and convergence}
I assume at time $t=0$ there is some initial probability distribution $\pi_0 \in \Delta({\cal G}_N)$ over the space of networks. Let $\pi_t(g)$ be the probability of having $G_t = g$. The measure $\pi_t$ evolves according to the Kolmogorov forward equation:
\begin{align}
    \dot{\pi}_t(g) = \sum_{ij \in {\cal D}_N} \lambda_{ij}(g) \left[p_{ij}(\tau_{ij}(g)) \pi_t(\tau_{ij}(g)) - p_{ij}(g) \pi_t(g) \right].
\end{align}
Note that, as long as the distribution of utility shocks $F_0$ has full support, the switching probabilities $p_{ij}(g)$ are non-degenerate. Therefore, in this case, the network formation process corresponds to an aperiodic irreducible Markov process on ${\cal G}_N$. It follows from standard results in stochastic processes (e.g. Theorem 1.19 in \citet{durrett_essentials_2016}) that there is a unique distribution $\pi$ such that $\pi_t(g) \to \pi(g)$ for all $g \in {\cal G}_N$, regardless of the initial conditions. I now include additional structure in the stochasticity of choices, which allows tractability of the analysis to follow.
\begin{assumption} \label{A:T1EV}
    The shocks $\varepsilon_{t}^k$ follow a type 1 extreme value (T1EV) distribution.
\end{assumption}

Assumption \ref{A:T1EV} ensures that shocks have full support, which guarantees convergence of the probability distribution. Additionally, it implies that the switching probabilities $p_{ij}(g)$ are a logistic function of the marginal utility of switching $U_i(\tau_{ij}(g)) - U_i(g)$. Equivalently, they satisfy
\begin{align} \label{eq:log_odds_ratio}
    \log \left( \frac{p_{ij}(g)}{p_{ij}(\tau_{ij}(g))} \right) = \left( \frac{1-\sigma}{\sigma} \right) (U_i(\tau_{ij}(g)) - U_i(g)).
\end{align}
We see that there is an intuitive connection between the properties of the Markov chain (through the transition probabilities) and the utility functions of the agents. Given this connection, if the static game with utility functions $U$ has some nice structure, we would expect the dynamic game to inherit some of this structure. This is formalized in the following Proposition for the case of potential games.

\begin{proposition} \label{prop:conv}
    Consider the static game $({\cal J}_N, {\cal G}_N, U)$. This game is a potential game if and only if the Markov chain of the dynamic process is reversible. Furthermore, if the potential of the static game is $\Phi$, then the stationary distribution of the process is given by the Gibbs measure
    \begin{align} \label{eq:eq_dist}
        \pi(g) = \frac{\exp\left[ \left( \frac{1-\sigma}{\sigma} \right) \Phi(g)\right]}{\sum_{g' \in {\cal G}_N} \exp\left[ \left( \frac{1-\sigma}{\sigma} \right)\Phi(g')\right]}.
    \end{align}
\end{proposition}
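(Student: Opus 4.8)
The plan is to establish the two directions of the equivalence separately, using the log-odds condition from Equation \eqref{eq:log_odds_ratio} as the bridge between the utility structure and the transition rates. Recall that a continuous-time Markov chain with rates $q(g \to g')$ is reversible with respect to a distribution $\pi$ if and only if the detailed balance equations $\pi(g) q(g \to g') = \pi(g') q(g' \to g)$ hold for all states. Here the only transitions with positive rate are single-link flips, with rate $q(g \to \tau_{ij}(g)) = \lambda_{ij}(g) p_{ij}(g)$, and by Assumption \ref{A:meeting_rates} the factor $\lambda_{ij}(g)$ is symmetric under the flip, i.e. $\lambda_{ij}(g) = \lambda_{ij}(\tau_{ij}(g))$. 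So detailed balance across the edge $\{g, \tau_{ij}(g)\}$ reduces to
\begin{align}
    \frac{\pi(g)}{\pi(\tau_{ij}(g))} = \frac{p_{ij}(\tau_{ij}(g))}{p_{ij}(g)},
\end{align}
which by \eqref{eq:log_odds_ratio} equals $\exp\!\left[-\left(\tfrac{1-\sigma}{\sigma}\right)(U_i(\tau_{ij}(g)) - U_i(g))\right]$.

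\textbf{Sufficiency (potential game $\Rightarrow$ reversible, with the stated $\pi$).} Suppose the game has potential $\Phi$. Define $\pi$ by \eqref{eq:eq_dist}. Since $U_i(\tau_{ij}(g)) - U_i(g) = \Phi(\tau_{ij}(g)) - \Phi(g)$ by Definition \ref{def:potential_game} (a single-link flip is a unilateral deviation of agent $i$), the ratio $\pi(g)/\pi(\tau_{ij}(g)) = \exp[(\tfrac{1-\sigma}{\sigma})(\Phi(g) - \Phi(\tau_{ij}(g)))]$ matches the right-hand side above exactly. Hence detailed balance holds across every edge; reversibility follows, and since detailed balance implies stationarity, $\pi$ is the (unique, by irreducibility) stationary distribution.

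\textbf{Necessity (reversible $\Rightarrow$ potential game).} Suppose the chain is reversible with stationary distribution $\pi$. Detailed balance across $\{g,\tau_{ij}(g)\}$ together with \eqref{eq:log_odds_ratio} gives $\log \pi(g) - \log \pi(\tau_{ij}(g)) = \left(\tfrac{1-\sigma}{\sigma}\right)(U_i(\tau_{ij}(g)) - U_i(g))$. Set $\Phi(g) \coloneqq \left(\tfrac{\sigma}{1-\sigma}\right)\log \pi(g)$ (well-defined since irreducibility forces $\pi(g) > 0$). Then $\Phi(\tau_{ij}(g)) - \Phi(g) = U_i(\tau_{ij}(g)) - U_i(g)$ for every dyad $ij$ and every $g$, i.e. $\Phi$ tracks marginal utilities along every single-link flip. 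It remains to upgrade this to the full potential-game condition of Definition \ref{def:potential_game}, i.e. to show $U_i(s_i,g_{-i}) - U_i(s_i',g_{-i}) = \Phi(s_i,g_{-i}) - \Phi(s_i',g_{-i})$ for arbitrary strategies $s_i, s_i'$. The point is that any change from $s_i$ to $s_i'$ can be realized as a finite sequence of single-link flips all involving agent $i$ as the source (add or remove the links in the symmetric difference $s_i \triangle s_i'$ one at a time), and along such a path both sides telescope: each intermediate flip contributes $U_i(\text{after}) - U_i(\text{before})$ to the left and the same to the right by the single-link identity. Summing, the endpoints give the desired equality. Thus $\Phi$ is a potential.

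\textbf{The main obstacle} I anticipate is the telescoping step in the necessity direction: the single-link identity $\Phi(\tau_{ij}(g)) - \Phi(g) = U_i(\tau_{ij}(g)) - U_i(g)$ holds for \emph{all} dyads $ij$, but to recover Definition \ref{def:potential_game} one must chain only flips whose source is the deviating agent $i$, and verify that the $\Phi$-side telescopes to $\Phi(s_i,g_{-i}) - \Phi(s_i',g_{-i})$ regardless of the order in which the links of $s_i \triangle s_i'$ are toggled — order-independence of the $\Phi$-side is automatic since $\Phi$ is a function of the endpoint network, while order-independence of the $U_i$-side is exactly what the single-link identity now guarantees. (Alternatively, one can invoke Proposition \ref{prop:conservative}: the single-link identity implies conservativeness of $U$ directly, since both orderings in \eqref{eq:conservative_utility} sum to $\Phi(\tau_{i'j'}\circ\tau_{ij}(g)) - \Phi(g)$, after which Proposition \ref{prop:conservative} delivers the potential.) A minor point to address is that $F_1$ must be continuous and have full support for \eqref{eq:log_odds_ratio} to make sense and for the chain to be irreducible; this is ensured by Assumption \ref{A:T1EV}.
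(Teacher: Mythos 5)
Your proof is correct and follows essentially the same route as the paper: detailed balance across single-link flips, Assumption \ref{A:meeting_rates} to cancel the meeting rates, the log-odds identity \eqref{eq:log_odds_ratio} to translate between transition probabilities and marginal utilities, and the explicit Gibbs form in the sufficiency direction; in the necessity direction you set $\Phi(g) \propto \log\pi(g)$ exactly as the paper does. The only difference is that you spell out the telescoping step needed to upgrade the single-link identity $\Phi(\tau_{ij}(g)) - \Phi(g) = U_i(\tau_{ij}(g)) - U_i(g)$ to the full condition of Definition \ref{def:potential_game}, whereas the paper defers this to the remark at the start of its proof of Proposition \ref{prop:conservative}; your version is slightly more self-contained but mathematically identical.
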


Proposition \ref{prop:conv} states that the same structure that allows for a tractable study of incentives in potential games allows us to fully characterize the stationary distribution in the dynamic game. This is because reversibility is mathematically equivalent to conservativeness.\footnote{Reversibility of a Markov jump process on a finite space ${\cal X}$ is usually stated in terms of the detailed balance condition: $W(i,j) \pi(i) = W(j,i) \pi(j)$ for all $i,j \in {\cal X}$, where $W(\cdot)$ are the transition rates and $\pi(\cdot)$ are the stationary probabilities. An equivalent characterization solely in terms of rates is Kolmogorov's criterion, which is analogous to conservativeness \citep{norris_continuous-time_1997}.} Reversibility allows us to explicitly construct the stationary distribution in terms of a function that encodes the transition rates of the stochastic process.\footnote{In the theory of Markov processes, this function is also called a potential, or energy, function. For example, similar dynamics can be used to sample the Boltzmann distribution of the Ising model \citep{glauber_time-dependent_1963}.} Interestingly, it is precisely the potential function of the static game that determines the stationary distribution of the network formation process. Throughout the rest of the paper I exploit this form of the stationary distribution to study the properties of the dynamic game.

This characterization of the stationary distribution is related to the family of exponential random graph models (ERGMs). These assign a probability to a network of the form
\begin{align} \label{eq:ERGM}
    \pi_{\textrm{ERGM}}(g) = \frac{\exp(\boldsymbol{\beta} \cdot {\bf S}(g))}{\sum_{g' \in {\cal G}} \exp(\boldsymbol{\beta} \cdot {\bf S}(g'))},
\end{align}
where ${\bf S}(g)$ is a vector of sufficient statistics of the network and $\boldsymbol{\beta}$ is a vector of model parameters. Therefore, we see that if the potential of our network formation game is of the form $\Phi(g) = \boldsymbol{\beta} \cdot {\bf S}(g)$, the stationary distribution will be the distribution of an ERGM with sufficient statistics ${\bf S}$. This means that this myopic network formation model serves as a microfoundation for ERGMs, conditional on finding suitable utility functions\footnote{\citet{mele_structural_2017} provides a microfoundation for a class of ERGMs with this model. \citet{chandrasekhar_network_2025} generalize this analysis to a wider class of utility functions.}. One property of ERGMs that is shared by the myopic network formation game is that the denominator in the Gibbs measure involves a sum over all possible networks, making its calculation intractable. This intractability causes problems for estimation of ERGMs (see \citet{chandrasekhar_network_2025} for a detailed description of the problems with estimating ERGMs and potential solutions). However, as I discuss in Section \ref{sec:large_networks}, for a certain class of models where the number of agents can be systematically increased, characterizing the large $N$ behavior of this denominator allows for even further dimensionality reduction of the model.

\subsection{Equilibrium selection}
It is interesting to see how properties of the static game manifest in the dynamics of network formation. For the case of choice-equivalent utilities, we have the following: 

\begin{remark}
    If two sets of utility functions $U$ and $\Tilde{U}$ are choice-equivalent, then the switching probabilities $p_{ij}(g)$ are the same under $U$ and $\Tilde{U}$ for all dyads $ij$ and networks $g$.
\end{remark}

This means that the dynamics under two sets of choice-equivalent utility functions are \textit{identical}. This helps to strengthen the intuition for why they must have the same potential, since they must have the same stationary distribution.

We can also analyze what networks get ``selected'' by the dynamic process. As a benchmark, we can analyze the case of $\sigma = 0$, for which the dynamics have no idiosyncratic shocks. As shown in Proposition 2 of \citet{mele_structural_2017}, Nash equilibria are absorbing states of these dynamics, and the network converges to one of the Nash equilibria with probability 1. Therefore, the process with no shocks selects the Nash equilibria of the game.

Our characterization of the dynamics allows us to go beyond the no-shocks case for potential games. Previous work has used dynamic models to select equilibria which are robust to specific kinds of perturbations.\footnote{See, for example, \citet{bala_noncooperative_2000} and \citet{jackson_evolution_2002}.} In our case, we can check which equilibria are present for small but non-zero $\sigma$.

\begin{remark}
    As $\sigma \to 0^+$, the stationary distribution of the process converges to the uniform distribution across the potential-maximizing Nash equilibria (PMNE) of the game. A consequence of this is that if $g \notin \argmax_{g'} \Phi(g')$, then $\pi(g) \to 0$.
\end{remark}

This behavior of the stationary distribution \textit{refines} the set of Nash equilibria of the game. This is because the set of PMNE of the game are robust (in the sense of having a non-vanishing stationary probability) to small shocks in the choice process. Additionally, we can get some intuition for the properties of the dynamic process by analyzing the PMNE of the game.

\begin{example}[continues=ex:simple_trade]
    Let us analyze the properties of the Nash equilibria and PMNE in our simple static trade example. For a given set of incoming links, an agent will want to either reciprocate all of them or not form any outgoing links, based on the value of $v$ relative to $c$. This means that the Nash equilibria of the game are:
    \begin{itemize}
        \item the empty network if $v < c$,
        \item all networks $g$ where $ij \in g \iff ji \in g$ if $v \ge c$.
    \end{itemize}
    To obtain the set of PMNE, we can use the potential computed in Equation \eqref{eq:simple_trade_potential}. The PMNE, then, are
    \begin{itemize}
        \item the empty network if $v < 2 c$,
        \item the complete network if $v > 2 c$,
        \item all networks $g$ where $ij \in g \iff ji \in g$ if $v = 2 c$.
    \end{itemize}
    We can see that the set of PMNE drastically changes as $v$ changes from being less than $2c$ to being larger. This intuition behind the drastic change in the PMNE is related to the drastic changes we observe for large networks in the next section.
\end{example}

\section{Large Dense Networks} \label{sec:large_networks}
The results in the previous sections gave conditions under which our network formation process can be characterized in terms of a potential, and established some properties of the structure of such processes. This allowed us to identify the most probable networks with PMNE of the deterministic game. In this section, I study processes generated by a particular class of potential games as $N$ grows large, revealing interesting typical behaviors that arise from the interplay of incentives and the exponentially increasing size of network space.

\subsection{Network motifs}
As illustrated by Proposition \ref{prop:value_representation}, conservativeness of utilities is tightly related to the value of social structures. We can study the large-$N$ behavior of the model for games where there are a few structure \textit{types} from which people derive utility. For example, in Example \ref{ex:simple_trade}, firms derive value from outgoing links or reciprocated pairs. We can generalize this by fixing a few structure types and their values, and having agents derive this value every time they participate in one of these structures. Following the graph theory literature, I call these structures \textit{motifs}. This characterization is particularly amenable to our analysis, since it creates a clean connection between the large $N$ asymptotics of the networks and the model primitives.

Let us begin by fixing the number of agents $N$. Formally, a motif $m$ is a network over a set ${\cal J}_m = \{1, \ldots, n_m\}$ of $n_m$ nodes that has $e_m \coloneqq |m|$ edges. We say that there is an instance of motif $m$ in network $g$ if there exists an injective map $\varphi: {\cal J}_m \to {\cal J}_N$ such that $\varphi(m) \subseteq g$, where $\varphi(m) \coloneqq \{\varphi(i) \varphi(j): ij \in m\}$ is the subnetwork induced by $\varphi$.

I will focus on dense networks, where the density of the network remains bounded away from $0$ almost surely as $N \to \infty$.\footnote{This is because of the large body of work on the asymptotics of dense networks \citep{lovasz_large_2012}. It is definitely of interest to study how similar models behave for sparse networks.} In order to obtain the correct scaling, I assume that agents derive value $a_m/N^{n_m-2}$ from each realization of motif $m$ that they are a source node of. If the value scaled faster, we would get a degenerate behavior in the asymptotics: either the network density would converge to $0$ if $a_m < 0$ or to $1$ if $a_m > 0$. If the value scaled slower, then the motif would not be relevant enough in influencing the behavior of agents, and the network density would converge to $1/2$, as if the linking choices were uniformly random.

To obtain the total value that an agent $i$ derives from a motif $m$, we need to calculate how many instances of the motif she participates in. For this, it is useful to count these instances and relate this count to the case of the complete network. There are $N^{n_m}$ maps $\varphi: {\cal J}_m \to {\cal J}_N$, and the fraction of these that are injective converges to $1$ as $N \to \infty$. Now consider the complete network. Every injective map results in a realization of $m$ so, due to symmetry, agent $i$ will participate in a fraction $|{\cal J}_{\textrm{src}}(m)|/N$ of the realizations of $m$ in the complete network. With this, we can define a participation density of $i$ in $m$.

\begin{definition}
    The \textit{participation density} of individual $i$ in motif $m$ given network $g$ is
    \begin{align}
        b_{i,N}(m,g) \coloneqq \frac{N}{|{\cal J}_{\textrm{src}}(m)|} \frac{1}{N^{n_m}} \sum_{\substack{\varphi: {\cal J}_m \to {\cal J}_N \\ \varphi \textrm{ injective}}} \underbrace{ \mathbbm{1}\{\varphi(m) \subseteq g\}}_{\substack{\textrm{motif is present} \\ \textrm{in network}}} \underbrace{\mathbbm{1}\{i \in {\cal J}_{\textrm{src}}(\varphi(m))\}}_{\substack{i \textrm{ is a source node of} \\ \textrm{ the motif}}}.
    \end{align}
    Similarly, the \textit{subgraph density} of motif $m$ in network $g$ is
    \begin{align}
        b_N(m,g) \coloneqq \frac{1}{N^{n_m}} \sum_{\substack{\varphi: {\cal J}_m \to {\cal J}_N \\ \varphi \textrm{ injective}}} \underbrace{ \mathbbm{1}\{\varphi(m) \subseteq g\}}_{\substack{\textrm{motif is present} \\ \textrm{in network}}} = \frac{1}{N} \sum_{i \in {\cal J}_N} b_{i,N}(m,g).
    \end{align}
\end{definition}

These definitions are useful since $b_{i,N}, b_N \in [0,1)$ and $b_{i,N}(m,g) = b_N(m,g) = 0$ for the empty network and $b_{i,N}(m,g), b_N(m,g) \to 1$ as $N \to \infty$ for the complete network. This allows us to characterize the network in terms of these participation densities, which will directly enter the utilities. 

\begin{figure}
    \begin{center}
    \caption{Motif counting}
    \includegraphics[width = 0.7 \linewidth]{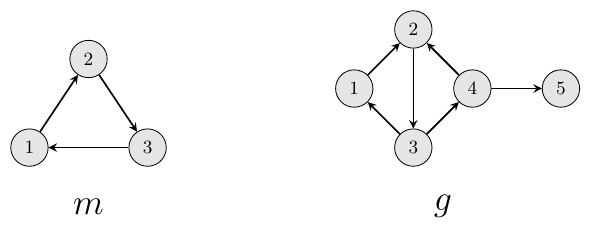}
    \label{fig:motif_counting}
    \end{center}
    \footnotesize{\textit{Notes.} This figure illustrates the process of counting the realizations of motif $m$ in network $g$. There are $3$ injective mappings $\varphi$ from the nodes of $m$ to $\{1,2,3\}$ and $3$ others to $\{2,3,4\}$ such that $\varphi(m) \subseteq g$. The degeneracy of the motif is $\delta_m = 3$ and it has $|{\cal J}_{\textrm{src}}(m)| = 3$ source nodes. Therefore, the participation density for nodes $1$ and $4$ is $b_{i,5}(m,g) = 3/3 \cdot 5^2 = 0.04$, for nodes $2$ and $3$ it is $b_{i,5}(m,g) = 6/3 \cdot 5^2 = 0.08$, and for node $5$ it is $b_{5,5}(m,g) = 0$. The subgraph density of $m$ in $g$ is $b_5(m,g) = 0.048$.}
\end{figure}

To finish computing the number of realizations of $m$ that $i$ participates in, we need to account for the overcounting in how we compute the number of realizations. There can be multiple maps that give the same realization of the motif, as illustrated in Figure \ref{fig:motif_counting}. Let $\delta_m$, the \textit{degeneracy} of motif $m$, be the number of injective maps $\varphi':{\cal J}_m \to {\cal J}_m$ such that $\varphi'(m) = m$. Having corrected for the counting, we have that the number of realizations of $m$ that $i$ participates in is $N^{n_m-1} |{\cal J}_{\textrm{src}}(m)| b_{i,N}(m,g)/\delta_m$. Therefore, the utility that agent $i$ derives from motif $m$ is
\begin{align}
    U_{i,N}^m(g;a_m) = N |{\cal J}_{\textrm{src}}(m)| \frac{a_m}{\delta_m} b_{i,N}(m,g).
\end{align}
If we consider a (finite) set of motifs ${\cal M}$, with values ${\bf a} \coloneqq (a_m)_{m \in {\cal M}}$, the total utility she derives from these motifs is
\begin{align} \label{eq:motif_utility}
    U_{i,N}^{\cal M}(g;{\bf a}) = N \sum_{m \in {\cal M}} |{\cal J}_{\textrm{src}}(m)| \frac{a_m}{\delta_m} b_{i,N}(m,g).
\end{align}
To obtain the potential, we need the number of realizations of the motif in the network, which we can compute analogously with the subgraph density. Therefore, the potential for the game with $N$ players is
\begin{align}
    \Phi_N^{{\cal M}}(g;{\bf a}) = N^2 \sum_{m \in {\cal M}} \frac{a_m}{\delta_m} b_N(m,g).
\end{align}
Having characterized a scheme under which we can study the model with a large number of players, we can now analyze the properties of the resulting networks.

\subsection{Phase transitions} \label{subsec:phase_transitions}
An astonishing result in the theory of graph limits is the possibility of discontinuous changes in the aggregate properties of the model caused by continuous changes in parameters. These types of phenomena fall under the category of \textit{phase transitions} in the physics and applied mathematics literature, and the properties of systems near critical points (the parameter values where phase transitions occur) have been studied extensively\footnote{For a detailed overview of phase transitions in physics, see \citet{goldenfeld_lectures_2018}. For an application of the theory of phase transitions to ERGMs, see \citet{aristoff_phase_2018}.}. In the case of our model, we will see that varying the values of motifs can induce a phase transition, causing a discontinuous change in the typical density of the networks.

The largest difficulty in analyzing the behavior of this model is characterizing the denominator in Equation \eqref{eq:eq_dist}, which is intractable. This is because the number of terms in this sum is $2^{N(N-1)}$. Despite the difficulty in calculating this sum, this object encodes useful information about the system statistics.\footnote{In the physics literature, this denominator is called the \textit{partition function} \citep{huang_statistical_1987}.} I now define the \textit{scaled normalization constant}.

\begin{definition}
    The scaled normalization constant of the system for motifs ${\cal M}$ with values ${\bf a}$ is
    \begin{align}
        \zeta_N^{\cal M}({\bf a},\sigma) \coloneqq \frac{1}{N^2} \log\left( \sum_{g \in {\cal G}_N} \exp\left[ \left( \frac{1-\sigma}{\sigma} \right) \Phi_N^{\cal M}(g ; {\bf a})) \right] \right).
    \end{align}
    The limiting scaled normalization constant is
    \begin{align}
        \zeta^{\cal M}({\bf a}, \sigma) \coloneqq \lim_{N \to \infty} \zeta_N^{\cal M}({\bf a}, \sigma).
    \end{align}
\end{definition}

To understand the $1/N^2$ re-scaling intuitively, note that to obtain dense networks, the participation densities must be typically non-vanishing. This means that the potential must be of order $N^2$. Additionally, note that the size of network space scales as $2^{N^2}$, so we expect the logarithm to scale as $N^2$. Together, these arguments make us expect $\zeta^{\cal M}_N({\bf a}, \sigma)$ to be of order $1$. 

We can now state the main result for the analysis of the motif model. An interesting aspect of the model is that the scaled partition function of the system is not only useful for calculating statistics of the limiting distribution, but rather is the central object that characterizes the typical behavior of the system as it navigates the space of networks.

\begin{proposition} \label{prop:motif_partition}
Suppose $a_m > 0$ for all motifs $m$ with $e_m > 1$. Define the entropy-adjusted potential (EAP) as
\begin{align}
    \Gamma^{\cal M}(\rho ; {\bf a}, \sigma) \coloneqq \left( \frac{1-\sigma}{\sigma} \right) \underbrace{\sum_{m \in {\cal M}} \frac{a_m}{\delta_m} \rho^{e_m} }_{\textrm{motif values}} + \underbrace{H(\rho)}_{\textrm{entropy}}
\end{align}
where $H(p) \coloneqq -p \log(p) - (1-p) \log(1-p)$ is the entropy of a Bernoulli random variable with parameter $p$. Then the limiting normalization constant exists and is given by
\begin{align} \label{eq:motif_partition}
    \zeta^{\cal M}({\bf a}, \sigma) = \max_{\rho \in [0,1]} \Gamma^{\cal M}(\rho ; {\bf a},\sigma).
\end{align}
Additionally, if the maximizer $\rho^*_{{\cal M}}({\bf a}, \sigma)$ is unique, then as $N \to \infty$ the networks generated by the model become indistinguishable from those generated by an Erd\H{o}s--R\'enyi random graph model with parameter $\rho^*_{{\cal M}}({\bf a}, \sigma)$.\footnote{See Appendix \ref{sec:app_graph_limits} for a detailed treatment of graph limits}
\end{proposition}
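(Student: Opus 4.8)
The plan is to recognize $\zeta_N^{\cal M}({\bf a},\sigma)$ as a rescaled log-partition function of an exponentially tilted Erd\"os-R\'enyi measure, pass to the graphon limit using the dense-graph large deviation theory of \citet{chatterjee_large_2011}, and then collapse the resulting graphon variational problem to the scalar maximization of $\Gamma^{\cal M}$ by means of a H\"older inequality, with the sign hypothesis $a_m>0$ for $e_m>1$ doing the essential work in the last step. Concretely, since $|{\cal G}_N|=2^{N(N-1)}$,
\[
    \sum_{g\in{\cal G}_N}\exp\!\Big[\big(\tfrac{1-\sigma}{\sigma}\big)\Phi_N^{\cal M}(g;{\bf a})\Big]=2^{N(N-1)}\,\mathbb{E}_{G}\!\left[\exp\!\big(N^2 f_N(G)\big)\right],
\]
where $G$ is the uniformly random network on ${\cal J}_N$ (equivalently, the directed Erd\"os-R\'enyi graph with edge probability $\tfrac12$) and $f_N(g)\coloneqq\big(\tfrac{1-\sigma}{\sigma}\big)\sum_{m\in{\cal M}}\tfrac{a_m}{h_m}b_N(m,g)$. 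This $f_N$ is uniformly bounded (each $b_N\in[0,1)$ and ${\cal M}$ is finite) and satisfies $f_N(g)=f(W_g)+O(1/N)$ uniformly in $g$, where $W_g$ is the empirical graphon of $g$ and $f(W)\coloneqq\big(\tfrac{1-\sigma}{\sigma}\big)\sum_{m\in{\cal M}}\tfrac{a_m}{h_m}t(m,W)$, with $t(m,\cdot)$ the (directed) homomorphism density, i.e.\ the cut-metric-continuous extension of $b_N(m,\cdot)$; the $O(1/N)$ term is the usual discrepancy between counting injective label maps and counting arbitrary ones.

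I would then invoke the large deviation principle for dense directed Erd\"os-R\'enyi graphs in the cut metric---the directed counterpart of \citet{chatterjee_large_2011}, developed in Appendix~\ref{sec:app_graph_limits}---whose rate functional is ${\cal I}(W)=\int_0^1\!\int_0^1\big(\log 2-H(W(x,y))\big)\,dx\,dy$, and apply Varadhan's lemma; since $f$ is bounded and continuous in the cut metric (the counting lemma), its hypotheses are met. Because $\tfrac{N(N-1)}{N^2}\log 2\to\log 2$ exactly cancels the constant term $\int\!\int\log 2$ of ${\cal I}$, this yields
\[
    \zeta^{\cal M}({\bf a},\sigma)=\sup_{W}\Big[\big(\tfrac{1-\sigma}{\sigma}\big)\sum_{m\in{\cal M}}\tfrac{a_m}{h_m}t(m,W)+\int_0^1\!\int_0^1 H\big(W(x,y)\big)\,dx\,dy\Big],
\]
the supremum running over all directed graphons $W:[0,1]^2\to[0,1]$. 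Restricting to the constant graphons $W\equiv\rho$, where $t(m,W)=\rho^{e_m}$ and the entropy integral is $H(\rho)$, already gives $\zeta^{\cal M}({\bf a},\sigma)\ge\max_{\rho\in[0,1]}\Gamma^{\cal M}(\rho;{\bf a},\sigma)$.

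For the reverse inequality I would use the elementary bound $t(m,W)\le\int_0^1\!\int_0^1 W(x,y)^{e_m}\,dx\,dy$, valid for every motif $m$ and graphon $W$: write $t(m,W)$ as the integral over $[0,1]^{n_m}$ of the product of its $e_m$ edge factors $W(x_i,x_j)$, apply H\"older's inequality with every exponent equal to $e_m$, and observe that each of the $e_m$ resulting single-factor integrals equals $\int\!\int W^{e_m}$ because the coordinates not appearing in that factor integrate to $1$. Since $a_m>0$ whenever $e_m>1$, this bound may be applied term by term; combined with the facts that the single-edge motif contributes exactly $\int\!\int W$ and that the entropy term is already the integral of a pointwise function of $W(x,y)$, the objective in the previous display is at most $\int_0^1\!\int_0^1\Gamma^{\cal M}\big(W(x,y);{\bf a},\sigma\big)\,dx\,dy\le\max_{\rho\in[0,1]}\Gamma^{\cal M}(\rho;{\bf a},\sigma)$, which proves \eqref{eq:motif_partition}. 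This is the only place the sign hypothesis enters: without it the H\"older bound points the wrong way for motifs with $a_m<0$, and the maximizing graphon need not be constant.

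Finally, suppose the scalar maximizer $\rho^*_{\cal M}\coloneqq\rho^*_{\cal M}({\bf a},\sigma)$ is unique. The stationary distribution of the process is the Gibbs measure \eqref{eq:eq_dist} with potential $\Phi_N^{\cal M}$ (Propositions~\ref{prop:value_representation} and \ref{prop:conv}), which is exactly the exponential tilt of the uniform measure on ${\cal G}_N$ by $e^{N^2 f_N}$, so the random empirical graphon under it satisfies an LDP in the cut metric with rate functional $J(W)=\zeta^{\cal M}({\bf a},\sigma)-\big(\tfrac{1-\sigma}{\sigma}\big)\sum_{m\in{\cal M}}\tfrac{a_m}{h_m}t(m,W)-\int_0^1\!\int_0^1 H(W(x,y))\,dx\,dy$, which is nonnegative and vanishes precisely on the maximizers of the graphon problem above. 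But the chain of inequalities just displayed forces every such maximizer to satisfy $\Gamma^{\cal M}\big(W(x,y);{\bf a},\sigma\big)=\max_{\rho}\Gamma^{\cal M}(\rho;{\bf a},\sigma)$ for a.e.\ $(x,y)$, hence $W\equiv\rho^*_{\cal M}$ a.e.\ by uniqueness; thus $J$ has the single zero $W\equiv\rho^*_{\cal M}$. Standard consequences of the LDP then give that this empirical graphon converges in probability, in the cut distance, to the constant graphon $\rho^*_{\cal M}$---which is also the cut-limit of the directed Erd\"os-R\'enyi graph $G(N,\rho^*_{\cal M})$---so the two models agree in the graph-limit sense: every bounded graph parameter continuous in the cut metric (in particular every subgraph density) has the same limiting value, which is the sense of indistinguishability made precise in Appendix~\ref{sec:app_graph_limits}. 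I expect the main obstacle to be bookkeeping rather than conceptual---stating the directed LDP and the continuity and integrability hypotheses of Varadhan's lemma in the appendix precisely enough to be quoted verbatim here, and handling the $O(1/N)$ reconciliation of the combinatorial densities $b_N(m,\cdot)$ with the analytic densities $t(m,\cdot)$; the variational reduction itself, once the H\"older bound is noted, is short and is in essence the argument of \citet{chatterjee_estimating_2013}.
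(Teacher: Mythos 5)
Your proof is correct and follows essentially the same route as the paper's: reduce the normalization constant to a graphon variational problem via the Chatterjee--Varadhan large-deviation machinery (which the paper imports as a black-box theorem, cited to \citet{mele_structural_2017}, rather than re-deriving from the Erd\"os--R\'enyi$(1/2)$ tilt and Varadhan's lemma as you do), then collapse the supremum over graphons to the scalar maximization of $\Gamma^{\cal M}$ by the same H\"older bound $t(m,W)\le \int\!\int W^{e_m}$, using the sign hypothesis $a_m>0$ for $e_m>1$ in the same way. The only cosmetic difference is in the uniqueness step: you argue directly that tightness of $\int\!\int \Gamma^{\cal M}(W)\le\max_\rho\Gamma^{\cal M}$ forces $W\equiv\rho^*$ a.e., whereas the paper defers to Theorem 4.1 of \citet{chatterjee_estimating_2013}; both are valid.
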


Proposition \ref{prop:motif_partition} is similar to Theorem 2 in \citet{mele_structural_2017}, and builds on it by extending the result to a microfounded model with arbitrary structures using the characterization in Proposition \ref{prop:value_representation}. This result essentially reduces the dimensionality of the complex network formation process to a single one-dimensional optimization process, where the maximizer $\rho^*$ can be interpreted as the typical density of the resulting networks. The nature of this optimization process is interesting, since no single agent is solving the problem in Equation \eqref{eq:motif_partition}. To understand it, it is useful to analyze where the entropy term comes from. This term arises from the fact that the number of networks over $N$ nodes with density $\rho$, to leading exponential order, is approximately $e^{N^2 H(\rho)}$. Therefore, as agents stochastically try to optimize their utility by building motifs, they get ``stuck'' exploring an exponentially growing region of network space, leading to a trade-off between utility and entropy.

It is important to note that this analysis is restricted to positive motif values for motifs that involve more than one link, since the characterization in terms of typical densities can break down when these values are negative, as shown in \citet{mele_structural_2017}. This is of particular interest when analyzing the problem of estimation of model parameters, but I will restrict my attention to positive values, since this is still a rich enough case where there are non-trivial economic forces at play. One particularly interesting phenomenon that arises in the large $N$ limit is the emergence of phase transitions. We can illustrate this with our simple model of trade.

\begin{example}[continues=ex:simple_trade]
    We can easily see how our simple trade model can be formulated in terms of motifs. In this case, our set of motifs is ${\cal M} = \{m_1, m_2\}$, where $m_1 \coloneqq \{12\}$ is the ``outgoing links'' motif and $m_2 \coloneqq \{12, 21\}$ is the ``reciprocated links'' motif. The values assigned to these motifs are ${\bf a} = (-c,v)$ and their degeneracies are $\delta_{m_1} = 1$ and $\delta_{m_2} = 2$. Note that both motifs have $n_m = 2$, so their value is constant as $N$ grows.

    The entropy-adjusted potential $\Gamma$ is shown in panel (a) of Figure \ref{fig:density_phase_transition} for a fixed value of $c$. Note that $\Gamma$ is a continuous function of all its parameters. For low and high values of $v$, there is a unique local maximum of $\Gamma$. However, for $v$ near $2c$, $\Gamma$ has two local maxima, and their values coincide at $v=2c$. This means that there is a discontinuous jump in the density that maximizes $\Gamma$ at $v=2c$. This happens despite $\Gamma$ being continuous because, while the location of local maxima does change continuously, there can be discrete jumps between two different local maxima.

    As shown in panel (b) of Figure \ref{fig:density_phase_transition}, this discontinuous jump occurs for a wide range of values of $v$, and it defines a clear change between a phase of low-density and high-density networks. Thus, reducing the value of mutual links across the line of phase transitions can lead to the discontinuous collapse of the networks in the model. 
\end{example}

The discrete jumps in the behavior of the typical density in the example have a striking resemblance to the change in the structure of the PMNE of the static game. If we compare the average density of the PMNE networks, we have a density of 0 for the empty network and 1 for the full network. This means that the average density of the PMNE networks jumps from 0 to 1 as $v$ changes from being lower than $2c$ to being higher. This jump seems to remain for non-zero $\sigma$. With this intuition, we can define what a phase transition is, and why there is a connection to the behavior of the set of PMNE.

\begin{figure}
    \begin{center}
    \caption{Phase transition induced by motif values}%
    \subfloat[Entropy-adjusted potential]{{\includegraphics[width=0.49\linewidth]{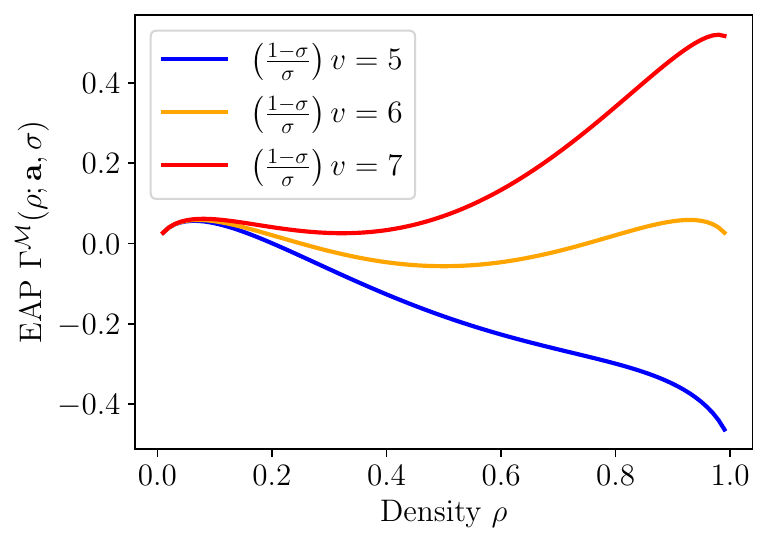} }}%
    \subfloat[Density phase diagram]{{\includegraphics[width=0.49\linewidth]{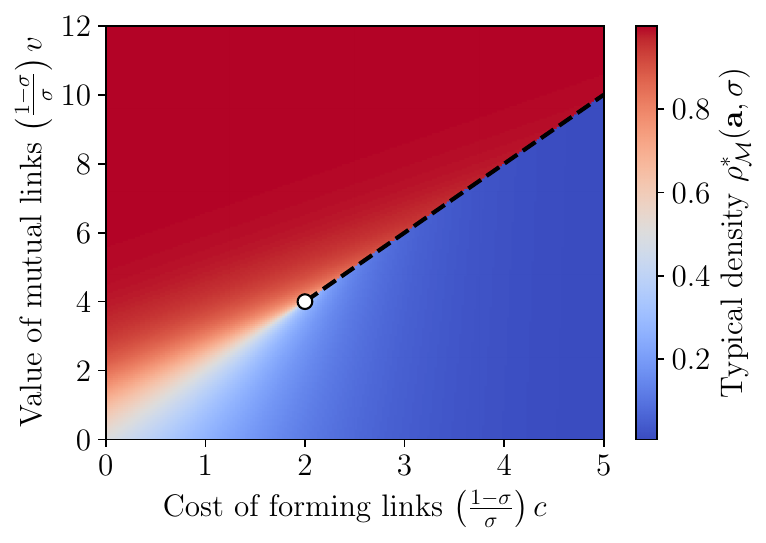} }}
    \label{fig:density_phase_transition}
    \end{center}
    \footnotesize{\textit{Notes.} This figure illustrates the mechanisms behind phase transitions with two motif values. Panel (a) shows the behavior of the entropy-adjusted potential $\Gamma^{\cal M}(\rho;{\bf a},\sigma)$ as the value of mutual links $v$ is varied, while the value of single connections is kept constant at $\left( \frac{1-\sigma}{\sigma} \right) c = 3$. Panel (b) shows the value of the typical density for different values of cost $c$ and the value of mutual-links $v$.}
\end{figure}

\begin{definition}
    The system is said to undergo a \textit{phase transition} at parameters $({\bf a},\sigma)$ if $\rho^*_{{\cal M}}(\bf a, \sigma)$ is discontinuous in ${\bf a}$.\footnote{In the physics literature, there are other kinds of phase transitions that involve non-analyticities, not only discontinuities.} A phase transition at parameters $({\bf a},\sigma)$ is said to be \textit{incentive-driven} if $\lim_{\sigma \to 0^+} \rho^*_{{\cal M}}(\bf a, \sigma)$ is discontinuous in ${\bf a}$.\footnote{This is related to the concept of zero-temperature phase transitions in statistical physics. See \citet{goldenfeld_lectures_2018} for a detailed discussion.} If a phase transition is not incentive-driven, it is said to be \textit{entropy-driven}.
\end{definition}

The intuition behind phase transitions being incentive-driven is that there is a stark change in the structure of the PMNE that survives having non-zero noise. If noise becomes too strong, however, the phase transition can disappear, since incentives are not strong enough to drive the transition. This can be seen in the lower-left corner of Figure \ref{fig:density_phase_transition}, panel (b). We can also find examples of entropy-driven phase transitions, where phase transitions appear when the motifs are sufficiently complex.

\begin{example}
    To illustrate how complexity in the motifs can drive phase transitions, consider a very simple supply chain formation game. To produce a product, $\ell$ firms must come together to form a supply chain. We say a supply chain is formed if a firm connects to another, which connects to another, etc. for a total of $\ell-1$ connections. There are no costs to forming links, just a positive payoff from forming the supply chain (to be specified below). Therefore, since there are no incentives to sever links, the unique Nash equilibrium (and hence PMNE) of the deterministic game is the complete network. This means that this model cannot have an incentive-driven phase transition.

    This game corresponds to a motif model with a single motif: an $\ell$-node directed chain. Formally, the $\ell$-node chain is the network $s_\ell = \{12,23,\ldots,(\ell-1)\ell\}$. The degeneracy of this motif is $\delta_{s_\ell} = 1$. If there are $N$ players, the payoff that each firm gets for each chain it participates in is $v/N^{\ell-2}$ for some $v > 0$. Hence, $v$ is the value of the supply chain motif.

    The typical density of the network as a function of supply chain value is shown in Figure \ref{fig:chain_phase_transition}. The 5-node model shows a continuous change from the $\rho = 1/2$ density obtained from shock-dominated choices to the $\rho= 1$ density from the PMNE. In contrast, the 7-node and 9-node models show a large discontinuity in the typical density as the value of the chain increases. This can be interpreted as a transition from a shock-dominated phase to a utility-dominated phase.
\end{example}

\begin{figure}
    \begin{center}
    \caption{Phase transition in the $\ell$-node chain model}
    \subfloat[The $\ell$-node chain $s_\ell$]{{\includegraphics[width=0.4\linewidth]{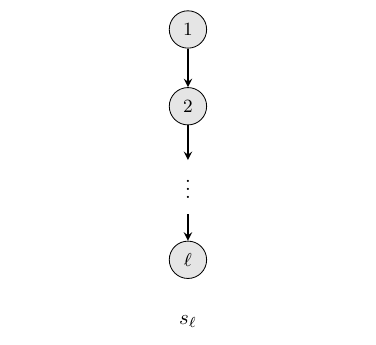} }}%
    \qquad
    \subfloat[Phase transition for various $\ell$-node chains]{{\includegraphics[width=0.49\linewidth]{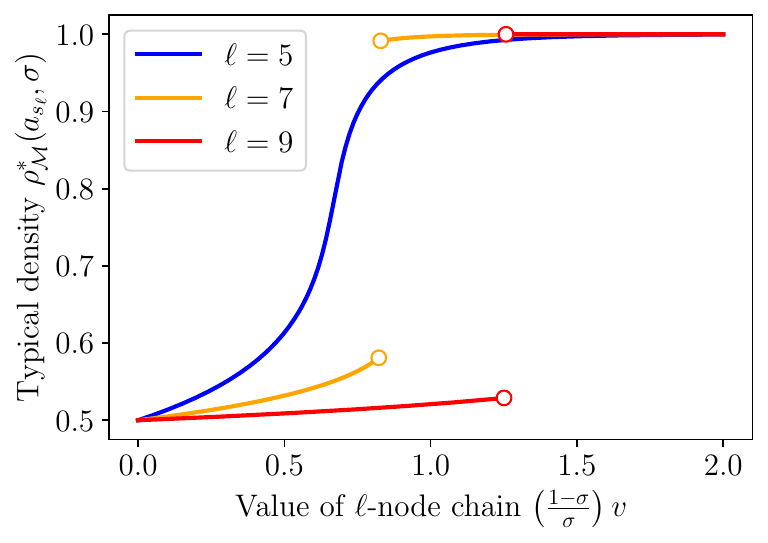} }}
    \label{fig:chain_phase_transition}
    \end{center}
    \footnotesize{\textit{Notes.} This figure shows the behavior of the typical density under the $\ell$-node chain model. Panel (a) shows an illustration of the $\ell$-node chain motif. Panel (b) shows how the typical density depends on the value of the chain $v$ for different chain lengths $\ell$.}
\end{figure}

An important ingredient for phase transitions to appear is that agents must derive utility from structures that are ``sufficiently complex'' to generate these effects. For example, if the only motif we considered were a single link, the typical density would be a continuous function of the utility of this motif. A similar complexity requirement was found in the analysis of supply networks in \citet{elliott_supply_2022}, where there is a discontinuity in the probability of successful production as a function of how likely individual production ties are successful. They find that the complexity of production, defined as the number of different inputs required to produce some good, is fundamental in generating this discontinuity. It seems, therefore, that the feedback generated by having multiple relationships is a recurring feature of models that generate these types of phenomena.

\subsection{Heterogeneous agents} \label{subsec:heterogeneous_agents}
Our previous analysis showed that the network formation model can generate non-trivial phenomena even when all agents have the same utility. In this section, I expand on this by analyzing the network formation game for heterogeneous agents. Heterogeneity is a natural property of real network formation processes. For example, individuals have the tendency to form disproportionately more connections with people similar to themselves than to others. This is a phenomenon known as homophily \citep{mcpherson_birds_2001, bramoulle_stochastic_2016}. Another setting where heterogeneity is important is when networks are spatially embedded, such as in trade routes, since distance between agents becomes an important factor when determining whether or not to form relationships. Here, I study how heterogeneity in preferences interacts with network effects to determine the aggregate properties of large networks.

To characterize agents' heterogeneity, I classify them into a finite number of types. Let $\Theta$ be the set of agents' types, with $|\Theta| = L < \infty$. For a given number of agents $N$, their types are given by $\theta_1^N, \ldots, \theta_N^N \in \Theta$. Let $\hat{{\bf w}}^N \coloneqq (\hat{w}^N_\theta)_{\theta \in \Theta}$, where $\hat{w}^N_\theta$ is the fraction of the $N$ agents that are of type $\theta$. As $N \to \infty$, let $\hat{{\bf w}}^N \to {\bf w}$ for some ${\bf w} \in \Delta(\Theta)$ with full support over $\Theta$.

In addition to deriving value from network motifs, which only depend on network structure and not on agents' types, I now allow agents' utilities to depend on the distribution of types of their neighbors. This allows us to systematically include heterogeneity in the agents, while also maintaining group-formation incentives. Define the unnormalized empirical distribution of agent $i$'s neighbors in network $g$ as
\begin{align}
    \hat{z}_{i,\theta}^N(g) = \sum_{j=1}^N \mathbbm{1}\{\theta_j^N = \theta\} \mathbbm{1}\{ij \in g\} \quad \forall \theta \in \Theta.
\end{align}
Agents' utility functions now include a component $u: \Theta \times \mathbb{R}^\Theta_+ \to \mathbb{R}$, which depends on their own type and the distribution of their neighbors' types. With this, the utility of agent $i$ in network $g$ is determined by the motif incentives, given by Equation \eqref{eq:motif_utility}, and the value of their neighborhood:
\begin{align}
    U_{i,N}(g) = \underbrace{U^{{\cal M}}_{i,N}(g ; {\bf a})}_{\textrm{motif values}} + \underbrace{u_{\theta_i^N}[\hat{{\bf z}}_i^N(g)]}_{\textrm{neighborhood value}}.
\end{align}
To characterize the limiting distribution of networks, I make some regularity assumptions on $u$.
\begin{assumption} \label{A:neighborhood_utility}
    The neighborhood utility function $u: \Theta \times \mathbb{R}^\Theta_+ \to \mathbb{R}$ has the following properties:
    \begin{itemize}
        \item \textit{Homogeneity}: for all $\theta \in \Theta$, ${\bf z} \in \mathbb{R}^\Theta_+$, and $c > 0$, $u$ satisfies $u_\theta[c {\bf z}] = c u_\theta[{\bf z}]$.
        \item \textit{Concave decomposition}: for all $\theta \in \Theta$, the function $u_\theta$ can be decomposed into
        \begin{align}
            u_\theta({\bf z}) = \sum_{\theta' \in \Theta} c_{\theta \theta'} z_{\theta'} + \Tilde{u}_\theta({\bf z}),
        \end{align}
        where $c_{\theta \theta'} \in \mathbb{R}$ and $\Tilde{u}_\theta$ is continuous, concave and monotonically increasing in all its arguments.
    \end{itemize}
\end{assumption}

These assumptions capture economic environments where agents’ payoffs depend smoothly on the composition of their neighbors. In spatially embedded networks, types may represent locations, so we might want to consider distance-dependent costs. In production or trade networks, types can represent sectors, and the value of a firm’s neighborhood reflects how production scales with inputs (in a CES aggregator, for example), net of costs. Homogeneity ensures that payoffs scale naturally with network size, while concavity and monotonicity guarantee diminishing returns to additional similar neighbors, yielding stable and economically interpretable asymptotic properties.

In the case of homogeneous agents, the distribution resulting from the network formation process is concentrated around networks with some typical density. To extend this result to heterogeneous agents, we need a more flexible characterization of the limiting object. It turns out to be sufficient to characterize the typical density of links between types $\theta$ and $\theta'$. Formally, let ${\cal K}_\Theta$ be the set of functions $\psi: \Theta^2 \to [0,1]$, which I denote as \textit{kernels}. Intuitively, if the process converges to a density kernel $\psi^*$, then $\psi^*_{\theta,\theta'}$ is the fraction of agents of type $\theta'$ that agents of type $\theta$ will typically form a connection with.

Another generalization that must be made to account for agent heterogeneity is of the method of calculating motif densities. Intuitively, the density of motif $m$ in an Erd\H{o}s--R\'enyi network with density $\rho$ is $\rho^{e_m}$. This allows for a clean characterization of the limiting typical networks, as seen in Proposition \ref{prop:motif_partition}. In order to generalize this counting to networks with density kernel $\psi$, the density must now depend on the kernel and the fraction of agents of a given type. Formally, for a kernel $\psi \in {\cal K}_\Theta$ we define the density of motif $m$ as\footnote{For a formal justification for the introduction of this limiting object, see Appendix \ref{sec:app_graph_limits}.}
\begin{align}
    b[m,\psi;{\bf w}] \coloneqq \sum_{\boldsymbol{\theta} \in \Theta^{n_m}} \left( \prod_{i \in {\cal J}_m} w_{\theta_i} \right) \left( \prod_{ij \in m} \psi_{\theta_i \theta_j} \right).
\end{align}
Note that this definition of motif density is equal to $\rho^{e_m}$ for the kernel satisfying $\psi_{\theta \theta'} = \rho$ for all $\theta, \theta'$, which matches our intuition. With this definition, we can study how group formation incentives interact with neighborhood values.
 
\begin{theorem} \label{thm:mult_types_partition}
Suppose that $a_m > 0$ for all motifs with $e_m > 1$ and the neighborhood utility function satisfies Assumption \ref{A:neighborhood_utility}. Define the entropy-adjusted potential of this model as
\begin{align}
    &\Gamma^{\cal M}_\Theta(\psi;{\bf a}, {\bf w}, \sigma) \coloneqq \nonumber \\ 
    &\left( \frac{1-\sigma}{\sigma} \right) \underbrace{\sum_{m \in {\cal M}}  \frac{a_m}{\delta_m} b[m,\psi; {\bf w}]}_{\textrm{motif values}} + \sum_{\theta \in \Theta} w_\theta \Bigg[ \underbrace{\sum_{\theta' \in \Theta} w_{\theta'} H(\psi_{\theta \theta'})}_{\textrm{entropy}} + \left( \frac{1-\sigma}{\sigma} \right) \underbrace{u_\theta[(w_{\theta'} \psi_{\theta \theta'})_{\theta' \in \Theta}]}_{\textrm{neighborhood value}} \Bigg]
\end{align}
Then the limiting scaled normalization constant is given by
\begin{align}
    \zeta^{\cal M}_{\Theta}({\bf a},{\bf w},\sigma) = \max_{\psi \in {\cal K}_\Theta} \Gamma^{\cal M}_\Theta(\psi;{\bf a}, {\bf w}, \sigma).
\end{align}
Furthermore, if the maximizer $\psi^{{\cal M},\Theta}({\bf a},{\bf w},\sigma)$ is unique, then as $N \to \infty$ the networks generated by the model become indistinguishable from those generated by a directed stochastic block model with edge probabilities between types $\theta$ and $\theta'$ given by the kernel $\psi^{{\cal M},\Theta}_{\theta, \theta'}({\bf a},{\bf w})$.
\end{theorem}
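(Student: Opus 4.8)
The plan is to express the limiting normalization constant as a variational problem over colored directed graphons and then collapse that problem onto the finite-dimensional family of kernels, following the route of Proposition~\ref{prop:motif_partition} but now in a vertex-colored setting. By Proposition~\ref{prop:conv} the stationary law is the Gibbs measure whose potential is $\Phi_N(g)=\Phi_N^{\cal M}(g;{\bf a})+\sum_{i\in{\cal J}_N} u_{\theta_i^N}[\hat{{\bf z}}_i^N(g)]$, so $\zeta^{\cal M}_{\Theta,N}({\bf a},{\bf w},\sigma)=N^{-2}\log\sum_{g\in{\cal G}_N}\exp\big[\,((1-\sigma)/\sigma)\,\Phi_N(g)\big]$. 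First I would fix the partition of ${\cal J}_N$ into the $L$ type-blocks; since the block frequencies $\hat{\bf w}^N$ converge to ${\bf w}$, the natural order parameter attached to a network $g$ is the colored directed graphon $W_g=(W_{g,\theta\theta'})_{\theta,\theta'\in\Theta}$ obtained by viewing the restriction of $g$ to each ordered pair of blocks as a $\{0,1\}$-valued kernel on $[0,1]^2$, and the relevant limiting state space is the set of measurable $W=(W_{\theta\theta'})$.

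Second, I would apply the large-deviations principle for dense --- here, directed and vertex-colored --- random graphs of \citet{chatterjee_large_2011,chatterjee_estimating_2013}, in the form developed in Appendix~\ref{sec:app_graph_limits}. Because each of the $L^2$ blocks is an independent array of Bernoulli variables and the graph is directed (ordered pairs counted once), the good rate function for the uniform measure on ${\cal G}_N$ is $R(W)=\sum_{\theta,\theta'}w_\theta w_{\theta'}\int\!\!\int_{[0,1]^2}\big(-H(W_{\theta\theta'}(x,y))\big)\,dx\,dy$, with no factor $1/2$. Combining this LDP with Varadhan's integral lemma gives $\zeta^{\cal M}_\Theta({\bf a},{\bf w},\sigma)=\sup_W\big[\,((1-\sigma)/\sigma)\,E(W)-R(W)\big]$, where $E(W)$, the cut-metric-continuous limit of $N^{-2}\Phi_N$, splits into a motif part and a neighborhood part. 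The motif part is the injective-homomorphism-density limit of $N^{-2}\Phi^{\cal M}_N$, which is cut-continuous by standard graph-limit theory (injective versus unrestricted maps differ by $O(1/N)$). The neighborhood part is obtained via the homogeneity hypothesis of Assumption~\ref{A:neighborhood_utility}: rewrite $N^{-2}\sum_i u_{\theta_i^N}[\hat{{\bf z}}_i^N(g)]=N^{-1}\sum_i u_{\theta_i^N}[N^{-1}\hat{{\bf z}}_i^N(g)]$; each normalized neighbor profile $N^{-1}\hat{{\bf z}}_i^N(g)$ is asymptotically an average of the block kernels over one coordinate, so cut closeness of $W$ forces $L^1(dx)$ closeness of these profiles, and continuity of $E$ follows from continuity of $u_\theta$.

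Third, I would collapse the graphon problem to the kernel problem $\max_{\psi\in{\cal K}_\Theta}\Gamma^{\cal M}_\Theta(\psi;{\bf a},{\bf w},\sigma)$. Fix the block averages $\bar\psi_{\theta\theta'}:=\int\!\!\int W_{\theta\theta'}$ and replace each block $W_{\theta\theta'}$ by the constant $\bar\psi_{\theta\theta'}$. The term $-R(W)$ weakly increases by Jensen, since $-H$ is convex; the linear part $\sum_{\theta'}c_{\theta\theta'}(\cdot)$ of the neighborhood utility and all motif densities of motifs with $e_m=1$ are unchanged; the remaining concave part $\tilde u_\theta$ of the neighborhood utility weakly increases by Jensen, because under the replacement every type-$\theta$ vertex's normalized neighbor profile becomes the block average; and the motif densities of motifs with $e_m>1$ are handled exactly as in the proof of Proposition~\ref{prop:motif_partition}, which is precisely where the hypothesis $a_m>0$ for $e_m>1$ enters. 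Hence the supremum over colored graphons is attained on block-constant graphons, i.e.\ on kernels $\psi\in{\cal K}_\Theta$, and on that set the objective $((1-\sigma)/\sigma)E-R$ equals $\Gamma^{\cal M}_\Theta(\psi;{\bf a},{\bf w},\sigma)$ (using $b[m,\psi;{\bf w}]$ for the motif densities and $N^{-1}\hat z^N_{i,\theta'}\to w_{\theta'}\psi_{\theta\theta'}$ for the neighborhood term). When the maximizer $\psi^{{\cal M},\Theta}({\bf a},{\bf w},\sigma)$ is unique, the large-deviations upper bound forces the colored graphon of $G_t$ under the stationary distribution to concentrate in the cut metric on that (block-constant) kernel; and a block-constant directed graphon is exactly a directed stochastic block model, so the sampled network is asymptotically indistinguishable from a directed SBM with edge probabilities $\psi^{{\cal M},\Theta}_{\theta\theta'}$.

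I expect the two genuinely delicate points to be: (i) the cut-metric continuity of, and the Jensen collapse for, the \emph{neighborhood} functional --- this is not a homomorphism density but a per-vertex nonlinear average, so one must verify that vertex neighbor profiles are cut-continuous and that it is \emph{concavity} of $\tilde u_\theta$ (not mere continuity) that makes the block-constant configuration optimal, which is why the concave-decomposition part of Assumption~\ref{A:neighborhood_utility} is load-bearing; and (ii) transporting the Chatterjee--Diaconis reduction for motif densities with $e_m>1$ into the colored setting, which is the same subtle step underlying Proposition~\ref{prop:motif_partition} and the reason the positivity hypothesis cannot be dropped. Constructing the colored, directed LDP itself should be routine once its undirected, uncolored analog is available, since the type-blocks are deterministic and their edge-arrays independent.
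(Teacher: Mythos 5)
Your overall roadmap --- stationary law as a Gibbs measure, colored-graphon LDP, Varadhan, then a collapse to a finite-dimensional kernel problem --- is the right one, and your structural observations about which hypotheses are load-bearing (concavity of $\tilde u_\theta$, positivity of $a_m$ for $e_m>1$) are correct. But your actual mechanism for the collapse step is wrong, and this is the crux of the proof.

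You propose to replace each block $W_{\theta\theta'}$ by its $L^1$ average $\bar\psi_{\theta\theta'}:=\iint W_{\theta\theta'}$ and argue that every term in the objective weakly increases. That works for the entropy term (Jensen, $H$ concave), for the linear neighborhood term and for $e_m=1$ motifs (linear), and for $\tilde u_\theta$ (Jensen, concave). It \emph{fails} for motifs with $e_m>1$: the density of such a motif, e.g.\ the mutual-link density $\iint W_{\theta\theta'}(x,y)W_{\theta'\theta}(y,x)\,dx\,dy$, can strictly \emph{exceed} the product $\bar\psi_{\theta\theta'}\bar\psi_{\theta'\theta}$ (take $W_{\theta\theta'}(x,y)=W_{\theta'\theta}(y,x)$ nonconstant and apply Cauchy--Schwarz). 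Since $a_m>0$, the replacement can therefore strictly \emph{decrease} the objective, and the ``$\Gamma$-is-maximized-on-block-constant-graphons'' conclusion does not follow. Saying the $e_m>1$ motifs are ``handled exactly as in Proposition~\ref{prop:motif_partition}'' papers over this: the mechanism there is not averaging, it is an \emph{upper} bound via H\"older in terms of $L^{e^*}$ norms ($e^*:=\max_m e_m$), which is a genuinely different quantity from the $L^1$ average.

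Once you correctly bound the motif terms by $\prod_{ij\in m}\lVert h_{\theta_i\theta_j}\rVert_{e^*}$, a second issue appears that your Jensen argument also cannot see: the entropy bound you get from Jensen is $\int H(h)\le H(\lVert h\rVert_1)$, but you now need it in terms of $\lVert h\rVert_{e^*}$ to match the motif bound, and $H$ is not monotone, so $\lVert h\rVert_1\le\lVert h\rVert_{e^*}$ does \emph{not} give $H(\lVert h\rVert_1)\le H(\lVert h\rVert_{e^*})$. The paper resolves this with Lemma~\ref{lem:entropy_bound}, a Lagrangian argument bounding $\int[H(h)+\kappa h]$ jointly by $H(\lVert h\rVert_{n})+\kappa\lVert h\rVert_{n}$ for arbitrary $n$; the linear coefficients $c_{\theta\theta'}$ in the concave decomposition of $u_\theta$ (and any $e_m=1$ motifs) supply the $\kappa$. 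This joint treatment of entropy plus the linear part, pushed through the $L^{e^*}$ norm, is the technical heart of the collapse, and your proposal omits it entirely. A smaller point: you wave off the colored directed LDP as ``routine''; the paper's construction in Appendix~\ref{sec:app_graph_limits} (inverse contraction, fiber density via a weak regularity lemma, Stirling bounds for the color multiplicities) is not trivial, though I agree it is conceptually downstream of the uncolored case.
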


Theorem \ref{thm:mult_types_partition} provides a similar reduction in dimensionality as Proposition \ref{prop:motif_partition}, where the complex model of network formation can be understood in terms of a limiting kernel. The same forces drive the result, in the sense that agents must explore an exponentially growing region of network space, so the entropic term appears in the optimization problem. However, now there is heterogeneity in agents' incentives, leading agents of different types to explore in different directions, which results in a potentially non-uniform kernel. In addition to the neighborhood incentives, there is still a global feedback in the form of the motif values. This model provides an additional dimension to the complex interplay between incentives and network formation frictions, in that network structure can now affect the composition of who agents interact with.

\section{A Simple Model of Trade} \label{sec:trade}
In order to build economic intuition on the forces driving the phenomena in Section \ref{sec:large_networks}, I expand the model of trade from Example \ref{ex:simple_trade}. In this setting, assigning values to motifs and neighborhoods arises naturally from the incentives of firms.

\subsection{Technology and incentives}
Consider a set of firms that are spatially embedded in a circular city with unit circumference. There are $L$ equally spaced-out locations where firms can be located. A firm's type is determined by its location, such that $\Theta = \{0,\frac{1}{L}, \ldots, \frac{L-1}{L}\}$. The main quantity that determines the incentives of a firm to trade with another is the distance to the other firm, defined as
\begin{align}
    D(\theta, \theta') = \min\{|\theta-\theta'|,1-|\theta-\theta'|\}.
\end{align}
This definition of distance is illustrated in Figure \ref{fig:distance}. As $N \to \infty$, the fractions of firms of different types converges to ${\bf w}$.

\begin{figure}
    \begin{center}
    \caption{Distance in the trade model}
    \includegraphics[width = 0.5 \linewidth]{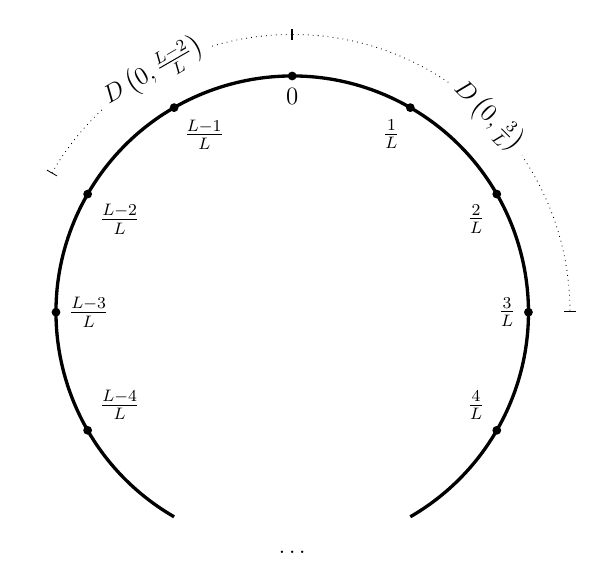}
    \label{fig:distance}
    \end{center}
    \footnotesize{\textit{Notes.} This figure illustrates the distance between firms of different types in the trade model in Section \ref{sec:trade}. The set of types is $\Theta = \{0,\frac{1}{L},\ldots,\frac{L-1}{L}\}$ and the distance is given by $D(\theta, \theta') = \min \{|\theta-\theta'|, 1-|\theta-\theta'|\}$. In this representation, the city has a circumference of 1.}
\end{figure}

Firms play a myopic network formation game, where the presence of a link $ij$ signifies that firm $i$ wishes to establish a trade route with firm $j$. To establish this link, firm $i$ faces a fixed cost of $\gamma D(\theta_i,\theta_j)$, which is proportional to the distance between the firms. If both firms wish to establish a trade route, this route becomes active and generates a revenue of $v$. Hence, for a fixed $N$, the utility of agent $i$ in network $g$ is
\begin{align}
    U_{i,N}(g ; \gamma, v) = v \underbrace{\sum_{j \in {\cal J}_N} \mathbbm{1}\{ij \in g, ji \in g\}}_{\textrm{reciprocated links}} - \gamma \underbrace{\sum_{j \in {\cal J}_N} D(\theta_i,\theta_j) \mathbbm{1}\{ij \in g\}}_{\textrm{outgoing links}}.
\end{align}
This utility can be decomposed into two parts: the value of participating in a mutual-links motif and the cost of establishing trade intentions with a neighborhood. 

In terms of the notation in section \ref{sec:large_networks}, firms assign value to a single motif, such that ${\cal M} = \{m_2\}$, where $m_2 = \{12,21\}$ is the ``mutual links'' motif. The degeneracy of this motif is $\delta_{m_2} = 2$ and its value is $v$. The cost of establishing trade intentions can be written as a neighborhood distribution utility, given by
\begin{align}
    u_\theta[{\bf z}] = - \gamma \sum_{\theta' \in \Theta} z_{\theta'} D(\theta,\theta').
\end{align}
Using this formulation, we can use our previous results on the limiting properties of the network formation game to study the effect of trading incentives on aggregate outcomes.

\subsection{Spatial structure of trade}
Characterizing the network formation game in terms of motif and neighborhood values allows us to formulate the problem of finding the typical density kernel $\psi^{{\cal M}, \Theta}(v, {\bf w}, \sigma)$ that characterizes the limiting properties of the system. Following Theorem \ref{thm:mult_types_partition}, the limiting scaled normalization constant of this model is given by
\begin{align} \label{eq:trade_partition}
    \zeta^{{\cal M}}_\Theta(v, {\bf w},\sigma) = \max_{\psi \in {\cal K}_\Theta} \left[ \sum_{\theta,\theta' \in \Theta^2} w_\theta w_{\theta'} \left( \left( \frac{1-\sigma}{\sigma} \right) \left[ \frac{v}{2} \psi_{\theta \theta'} \psi_{\theta' \theta} -\gamma \psi_{\theta \theta'} D(\theta,\theta') \right] + H(\psi_{\theta \theta'}) \right) \right],
\end{align}
and the typical density kernel $\psi^{{\cal M}, \Theta}(v, {\bf w},\sigma)$ is the maximizer. Interestingly, the solution to this problem can be characterized in terms of our solution to the problem with only motif values.

\begin{lemma} \label{lem:trade_fixed_point}
    Let $\rho^*_{{\cal M}_0}({\bf a},\sigma)$ be the typical density of the model with motifs ${\cal M}_0 = \{m_1,m_2\}$, where $m_1 = \{12\}$ and $m_2 = \{12,21\}$. Suppose the maximizer $\psi^{{\cal M}, \Theta}(v, {\bf w},\sigma)$ in the problem above is unique. Then $\psi^{{\cal M}, \Theta}_{\theta \theta'}(v, {\bf w},\sigma)$ is given by $\rho^*_{{\cal M}_0}(-\gamma D(\theta,\theta'), v, \sigma)$.
\end{lemma}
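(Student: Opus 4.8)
The plan is to exploit the fact that the maximization problem in Equation \eqref{eq:trade_partition} is \emph{separable} across ordered pairs of types $(\theta,\theta')$ once we recognize that $\psi_{\theta\theta'}$ and $\psi_{\theta'\theta}$ are coupled only through the term $\frac{v}{2}\psi_{\theta\theta'}\psi_{\theta'\theta}$. First I would rewrite the objective by grouping the summand for the pair $\{(\theta,\theta'),(\theta',\theta)\}$ with $\theta\neq\theta'$, together with the diagonal terms $\theta=\theta'$. For a fixed unordered pair, the contribution is (up to the common positive weight $w_\theta w_{\theta'}$) a function of two variables $x=\psi_{\theta\theta'}$, $y=\psi_{\theta'\theta}$ of the form $\left(\frac{1-\sigma}{\sigma}\right)\!\left[\frac{v}{2}(xy+yx) - \gamma D(\theta,\theta')(x+y)\right] + H(x) + H(y)$, which, because $D$ is symmetric, is itself symmetric in $x\leftrightarrow y$.

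Next I would invoke the structure of the two-motif model ${\cal M}_0=\{m_1,m_2\}$ with values ${\bf a}=(-c,v)$. Writing out $\Gamma^{{\cal M}_0}$ from Proposition \ref{prop:motif_partition} with $h_{m_1}=1$, $h_{m_2}=2$, $e_{m_1}=1$, $e_{m_2}=2$ gives $\Gamma^{{\cal M}_0}(\rho;(-c,v),\sigma) = \left(\frac{1-\sigma}{\sigma}\right)\!\left[-c\rho + \frac{v}{2}\rho^2\right] + H(\rho)$. The key observation is that the per-pair objective above, restricted to the symmetric diagonal $x=y=\rho$, is exactly $2\,\Gamma^{{\cal M}_0}(\rho;(-\gamma D(\theta,\theta'),v),\sigma)$ (the factor $2$ from having two entries $x$ and $y$; the diagonal terms $\theta=\theta'$ appear with the single copy, and one checks the bookkeeping matches). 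So the claim reduces to showing that the per-pair maximizer is attained on the diagonal, i.e.\ $x^*=y^*$.

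The main obstacle — and the technical heart of the argument — is precisely this symmetry/uniqueness step: ruling out asymmetric maximizers $x^*\neq y^*$ of the per-pair problem. The plan is to use concavity of $H$ together with the sign condition $v>0$ (which makes $\frac{v}{2}xy$ supermodular, hence the off-diagonal coupling favors $x,y$ moving together) to argue that any critical point with $x\neq y$ can be strictly improved, or is a saddle, by averaging: compare the value at $(x,y)$ with the value at $(\bar{z},\bar{z})$ where $\bar{z}=(x+y)/2$. Since $H$ is strictly concave, $H(x)+H(y) \le 2H(\bar z)$ with equality iff $x=y$; and $xy \le \bar z^2$, which for $v>0$ works in the same direction. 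Hence the symmetrized point weakly dominates, with strict domination unless $x=y$, so if the global maximizer of \eqref{eq:trade_partition} is assumed unique it must be the symmetric kernel. Once symmetry is established, the per-pair first-order condition coincides with the first-order condition for $\Gamma^{{\cal M}_0}$ at $\rho$ with the first motif value set to $-\gamma D(\theta,\theta')$, and since the maximizer of the separated one-dimensional problem is $\rho^*_{{\cal M}_0}(-\gamma D(\theta,\theta'),v,\sigma)$ by definition, I would conclude $\psi^{{\cal M},\Theta}_{\theta\theta'}(v,{\bf w},\sigma)=\rho^*_{{\cal M}_0}(-\gamma D(\theta,\theta'),v,\sigma)$, as claimed. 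A final routine check is that substituting this separable candidate back into $\Gamma^{\cal M}_\Theta$ and comparing with Theorem \ref{thm:mult_types_partition} is consistent — in particular that the neighborhood-utility term $u_\theta[(w_{\theta'}\psi_{\theta\theta'})_{\theta'}]$ reproduces the $-\gamma\psi_{\theta\theta'}D(\theta,\theta')$ piece after the $w_\theta w_{\theta'}$ weighting.
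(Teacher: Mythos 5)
Your proposal is correct and reaches the same conclusion, but the symmetry step is argued differently from the paper, so it's worth noting the contrast. Both proofs rely on the observation that the objective in Equation \eqref{eq:trade_partition} separates across unordered type pairs $\{\theta,\theta'\}$, with the per-pair subproblem coinciding with the one-dimensional EAP $\Gamma^{{\cal M}_0}$ on the symmetric diagonal. The paper establishes $\psi^*_{\theta\theta'}=\psi^*_{\theta'\theta}$ by a one-line invariance argument: the map $\psi_{\theta\theta'}\leftrightarrow\psi_{\theta'\theta}$ preserves the objective (since $D$ is symmetric), so an asymmetric maximizer would produce a distinct maximizer, contradicting the assumed uniqueness; it then matches the interior first-order condition $\left(\frac{1-\sigma}{\sigma}\right)(v\psi^*_{\theta\theta'}-\gamma D(\theta,\theta'))+H'(\psi^*_{\theta\theta'})=0$ (interiority guaranteed because $H'$ diverges at the endpoints) with the FOC of $\Gamma^{{\cal M}_0}$ at motif values $(-\gamma D(\theta,\theta'),v)$. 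Your route instead proves symmetry by a strict-improvement averaging argument: strict concavity of $H$ gives $H(x)+H(y)<2H(\bar z)$ for $x\neq y$, and AM--GM gives $xy\le\bar z^2$ with the inequality pointing the right way because $v>0$, so the symmetrized point strictly dominates. This is a slightly stronger statement (it rules out asymmetric maximizers without appealing to uniqueness at that step), but it does use the sign restriction $v>0$ that the paper's swap argument does not need; in this trade model $v>0$ is given, so nothing is lost. Your final reduction --- identifying the per-pair objective restricted to the diagonal with $2\Gamma^{{\cal M}_0}(\rho;(-\gamma D(\theta,\theta'),v),\sigma)$ and reading off the maximizer $\rho^*_{{\cal M}_0}$ --- is a clean value-level version of the paper's FOC matching and is equally valid.
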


Intuitively, this result says that in this model there is effectively a network formation process with motif incentives \textit{for each pair} $(\theta,\theta')$, and these do not affect each other for different pairs. This means that our results for phase transitions in the homogeneous agent model hold locally for this model. Therefore, for a fixed $\theta$, $\psi^{{\cal M}, \Theta}_{\theta \theta'}(v, {\bf w})$ can exhibit a large jump as $\theta'$ is varied. 

This characterization allows us to think about the effect of varying the value of gains from trade $v$. This is relevant for policy interventions, such as taxes on trade. In the homogeneous agent model, such a change can trigger a phase transition. However, the heterogeneity in the types of firms can make it such that these effects are greatly mitigated in the total network density.

To build intuition on this result, consider the case where the limiting distribution ${\bf w}$ is the uniform distribution over $\Theta$. The total network density is $\rho = \sum_{(\theta,\theta') \in \Theta^2} w_\theta w_{\theta'} \psi_{\theta \theta'}$. With this, we can study the differential effects of changing $v$ on the total network density and on the density kernel, as shown in Figure \ref{fig:spatial_phase_transition}. In this case, increasing the gains from trade increases the total density of the network in almost imperceptible jumps. In contrast, the density kernel for any two given types can exhibit a stark discontinuous transition. We see that understanding the phase transition structure in the motif-only model gives us insight into the mechanisms of these ``local'' phase transitions as distance changes. Particularly, this model gives rise to clear high-density neighborhoods where firms are trading, and a discontinuous drop in trade with firms outside this neighborhood.

\begin{figure}
    \begin{center}
    \caption{Effect of motif values on network density}%
    \subfloat[Total network density]{{\includegraphics[width=0.49\linewidth]{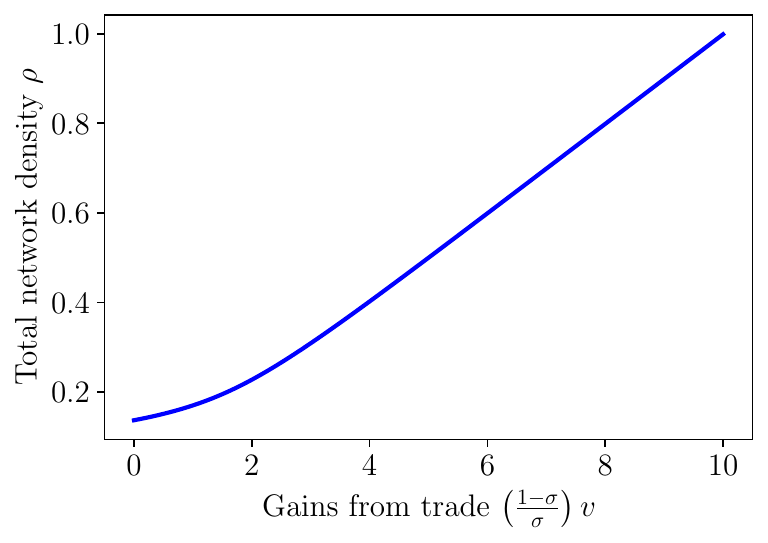} }}%
    \subfloat[Density kernel]{{\includegraphics[width=0.49\linewidth]{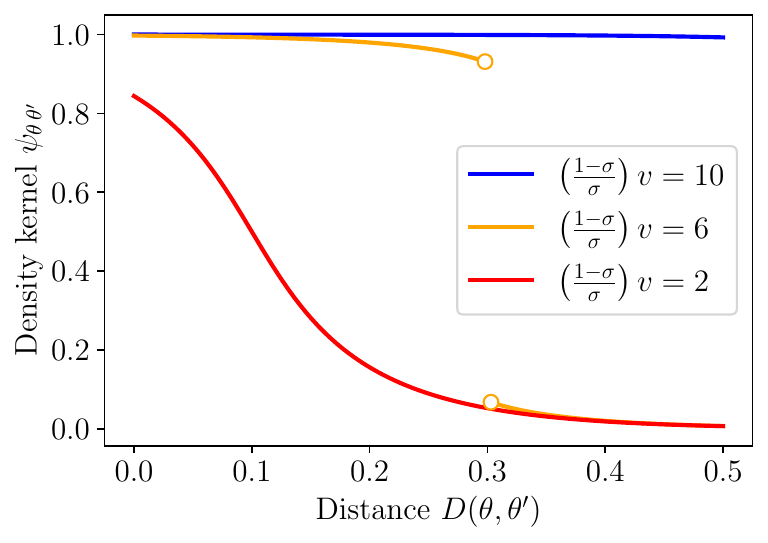} }}
    \label{fig:spatial_phase_transition}
    \end{center}
    \footnotesize{\textit{Notes.} This figure shows the behavior of the typical density and the density kernel as the value of mutual links is changed. There are $L = 100$ types and the cost per unit distance is $\gamma = 10$. Panel (a) shows the total network density as a function of the value of gains from trade. Panel (b) shows the density kernel as a function of distance for different values of $v$.}
\end{figure}
\section{Conclusion}
I have studied a dynamic model of network formation in which agents stochastically meet and myopically decide whether to form or sever links. The paper makes two main contributions: it establishes an alternate formulation of the game in terms of structure values and characterizes the asymptotic behavior of the process as the number of players grows large. When all players in a structure give it the same value, the static game is a potential game and the stationary distribution of the dynamic model can be explicitly computed. This allows us to study the effect of changing the value of these local structures on the aggregate properties of the network. In large populations, the process can be asymptotically approximated by an Erd\H{o}s--R\'enyi process. The typical density of this process can be discontinuous as a function of the values of structures, a phenomenon known as a phase transition.

This work is a stepping stone towards improving our understanding of the economic forces behind the formation of large networks. It gives a clear relation between the incentives to form links and the frictions agents encounter when forming the network, materialized in an entropy term. Despite the simplicity of the motif model, it captures important spillover effects present in real network formation processes. Mainly, forming a structure not only affects current payoffs, but amplifies the probability of forming future links by allowing the formation of even more structures. This self-reinforcing behavior is the driving force behind the phase transitions we observe.

These results open many avenues for further exploration. While a myopic network formation model is a convenient starting point to study aggregate properties, it is natural to ask how the results change when agents are forward-looking. Care is needed, though, since the space of networks grows super-exponentially, so assuming agents can perfectly form expectations over future network realizations might be inappropriate. Another direction to explore is how to extend these results beyond potential games. Potential games have the very desirable property of being equivalent to reversibility (Proposition \ref{prop:conv}), which greatly simplifies the computation of the stationary distribution. However, the large changes at the phase transitions suggest that continuous deviations from potential games (in an appropriate sense) might preserve these stark jumps. I hope this work inspires further exploration in these directions.

\setstretch{1}
\bibliographystyle{aer}
\bibliography{network_formation}
\setstretch{1.5}

\newpage

\appendix

\setcounter{page}{1}
\numberwithin{equation}{section}

\begin{center} \huge \textbf{Appendix} \end{center}
\section{Proofs} \label{sec:app_proofs}
\subsection{Proofs of Lemmas}
\subsubsection{Proof of Lemma \ref{lem:value_representation}}
This is an application of the M\"obius inversion formula for functions of subsets. See Proposition 3.7.2 in \citet{stanley_enumerative_2012} for a proof for a more general setting and Example 3.8.3 for an explicit statement of the result.

\subsubsection{Proof of Lemma \ref{lem:choice_equivalence}}
First, note that choice-equivalence for utility functions $U$ and $\tilde{U}$ holds if and only if
\begin{align} \label{eq:single_link_equivalence}
    U_i(\tau_{ij}(g)) - U_i(g) = \tilde{U}_i(\tau_{ij}(g)) - \tilde{U}_i(g)
\end{align}
for all $g$, $i$ and $ij$. The ``only if'' direction is immediate, since a one-link change is a particular case of a deviation. For the ``if'' direction, notice that any general deviation can be achieved by changing single links, and adding the marginal utility of single-link deviations gives a telescopic sum that results in the marginal utility of the entire deviation.

I now show that the condition \eqref{eq:single_link_equivalence} holds if and only if condition \eqref{eq:value_choice_equivalence} holds. For the rest of the proof, $i$ is fixed. First, fix some network $g$. We have that
\begin{align}
    U_i(g \cup \{ij\}) - U_i(g \backslash \{ij\}) = \sum_{g' \subseteq g \cup \{ij\}} V_i(g') \mathbbm{1}\{ij \in g'\},
\end{align}
and a similar expression involving $\tilde{U}_i$. This means that, regardless of whether $ij \in g$, the marginal utility of changing the link $ij$ only depends on the value of structures that include the link $ij$.\footnote{If $ij \notin g$, then $U_i(\tau_{ij}(g)) - U_i(g)$ is simply the negative of this expression.} In particular, this expression only involves structures $g'$ for which $i \in {\cal J}_{\textrm{src}}(g')$. The ``if'' direction, then, is immediate. If all structures $g'$ for which $i \in {\cal J}_{\textrm{src}}(g')$ satisfy $V_i(g') = \tilde{V}_i(g')$, then condition \eqref{eq:single_link_equivalence} is satisfied.

For the ``only if'' direction, we proceed by induction on the size of the structure.

\textbf{Base case}: Fix $g$ such that $i \in {\cal J}_{\textrm{src}}(g)$ and $|g| = 1$. We can write $g = \{ij\}$. Note that
\begin{align}
    U_i(\tau_{ij}(g)) - U_i(g) = U_i(\varnothing) - U_i(\{ij\}) = -V_i(\{ij\}), \nonumber \\
    \tilde{U}_i(\tau_{ij}(g)) - \tilde{U}_i(g) = \tilde{U}_i(\varnothing) - \tilde{U}_i(\{ij\}) = -\tilde{V}_i(\{ij\}).
\end{align}
Since condition \eqref{eq:single_link_equivalence} holds, we must have $V_i(\{ij\}) = \tilde{V}_i(\{ij\})$. Therefore, all structures with $i \in {\cal J}_{\textrm{src}}(g)$ and $|g| = 1$ satisfy $V_i(g) = \tilde{V}_i(g)$.

\textbf{Induction hypothesis}: For a given $n \ge 1$, assume that all $g$ such that $i \in {\cal J}_{\textrm{src}}(g)$ and $|g| \le n$ satisfy $V_i(g) = \tilde{V}_i(g)$.

\textbf{Inductive step}: Fix $g$ such that $i \in {\cal J}_{\textrm{src}}(g)$ and $|g| = n+1$. Take some $j$ such that $ij \in g$. We have that
\begin{align}
    U_i(\tau_{ij}(g)) - U_i(g) &= -\sum_{g' \subseteq g} V_i(g') \mathbbm{1}\{ij \in g'\}, \nonumber \\
    \tilde{U}_i(\tau_{ij}(g)) - \tilde{U}_i(g) &= -\sum_{g' \subseteq g} \tilde{V}_i(g') \mathbbm{1}\{ij \in g'\}.
\end{align}
Note that all structures $g' \subset g$ satisfy $|g'| \le n$. If $ij \in g'$, then $i \in {\cal J}_{\textrm{src}}(g')$. Therefore, subtracting the two equations above, only the term for the structure $g' = g$ survives (note that $ij \in g$):
\begin{align*}
    [U_i(\tau_{ij}(g)) - U_i(g)] - [\tilde{U}_i(\tau_{ij}(g)) - \tilde{U}_i(g)] = -[V_i(g) - \tilde{V}_i(g)].
\end{align*}
Using condition \eqref{eq:single_link_equivalence}, we obtain $V_i(g) = \tilde{V}_i(g)$. This completes the proof.

\subsubsection{Proof of Lemma \ref{lem:trade_fixed_point}}
The divergence in $H'(u)$ near 0 and 1 ensures the solution to the problem in Equation \eqref{eq:trade_partition} is interior. For $\theta' \ne \theta$, the first-order condition for $\psi_{\theta \theta'}$ is
\begin{align}
    w_\theta w_{\theta'} \left\{ \left( \frac{1-\sigma}{\sigma} \right) \left[ v \psi^*_{\theta' \theta} - \gamma D(\theta,\theta') \right] + H'(\psi^*_{\theta \theta'}) \right\} = 0.
\end{align}
Now, suppose the solution is such that $\psi^*_{\theta \theta'} \ne \psi^*_{\theta' \theta}$. Since $D(\theta,\theta') = D(\theta',\theta)$, making the change $\psi_{\theta \theta'} \leftrightarrow \psi_{\theta' \theta}$ would yield the same value of the objective function. However, this violates the uniqueness of the solution. Therefore, if the solution is unique, we must have $\psi^*_{\theta \theta'} = \psi^*_{\theta' \theta}$.

The argument above implies that $\psi^*_{\theta \theta'}$ satisfies
\begin{align}
    \left( \frac{1-\sigma}{\sigma} \right)(v \psi^*_{\theta \theta'} - \gamma D(\theta, \theta')) + H'(\psi^*_{\theta \theta'}) = 0.
\end{align}
This is precisely the first-order condition of the problem in Equation \eqref{eq:motif_partition} with motif values ${\bf a} = (- \gamma D(\theta,\theta'), v)$.

\subsection{Proof of Proposition \ref{prop:conservative}}
First, note that the game is a potential game with potential $\Phi$ if and only if
\begin{align}
    U_i(\tau_{ij}(g)) - U_i(g) = \Phi(\tau_{ij}(g)) - \Phi(g).
\end{align}
The argument for this is analogous to the argument at the beginning of the proof of Lemma \ref{lem:choice_equivalence}. Intuitively, any deviation can be constructed from one-link deviations, and the marginal utility of an arbitrary deviation is the sum of the marginal utilities of one-link deviations.

To prove the Proposition, it is useful to establish the following technical lemma:
\begin{lemma} \label{lem:conservative_cost}
    Consider a function $\phi:{\cal D}_N \times {\cal G}_N \to \mathbb{R}$ with the property that $\phi_{ij}(g) = - \phi_{ij}(\tau_{ij}(g))$. The function satisfies
    \begin{align} \label{eq:conservative_cost}
        \phi_{ij}(g) + \phi_{i'j'}(\tau_{ij}(g)) = \phi_{i'j'}(g) + \phi_{ij}(\tau_{i'j'}(g))
    \end{align}
    for all $ij, i'j' \in {\cal D}_N$ and $g \in {\cal G}_N$ if and only if there exists a function $\Psi: {\cal G}_N \to \mathbb{R}$ such that
    \begin{align} \label{eq:cost_potential}
        \phi_{ij}(g) = \Psi(\tau_{ij}(g)) - \Psi(g)
    \end{align}
    for all $ij \in {\cal D}_N$ and $g \in {\cal G}_N$. Furthermore, for a given $\phi$, this function $\Psi$ is unique up to an additive constant.
\end{lemma}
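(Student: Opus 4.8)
The plan is to treat this as a discrete Poincar\'e Lemma on the hypercube graph whose vertices are the networks in $\mathcal{G}_N$ and whose edges join $g$ to $\tau_{ij}(g)$. The antisymmetry hypothesis $\phi_{ij}(g) = -\phi_{ij}(\tau_{ij}(g))$ is precisely what makes $\phi$ a well-defined function on \emph{oriented} edges; condition \eqref{eq:conservative_cost} says that $\phi$ integrates to zero around every elementary $4$-cycle (``closedness''); and the conclusion \eqref{eq:cost_potential} is that such a closed object is the gradient of a potential $\Psi$ (``exactness'').

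For the easy direction I would assume \eqref{eq:cost_potential} and substitute directly. Antisymmetry is immediate from $\tau_{ij}\circ\tau_{ij}=\mathrm{id}$, and \eqref{eq:conservative_cost} is trivial when $ij=i'j'$; when $ij\ne i'j'$ the operators $\tau_{ij}$ and $\tau_{i'j'}$ commute, so both sides of \eqref{eq:conservative_cost} collapse to $\Psi(\tau_{i'j'}\circ\tau_{ij}(g))-\Psi(g)$.

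The substantive direction is the construction of $\Psi$. My plan is to fix once and for all a linear order $d_1,\dots,d_M$ on the dyads in $\mathcal{D}_N$ and, for a network $g=\{d_{k_1},\dots,d_{k_r}\}$ with $k_1<\cdots<k_r$, define $\Psi(\varnothing)=0$ and $\Psi(g)=\sum_{s=1}^r \phi_{d_{k_s}}(g_{s-1})$, where $g_s=\{d_{k_1},\dots,d_{k_s}\}$; i.e.\ $\Psi(g)$ is the total ``cost'' of building $g$ from the empty network by switching on its links in increasing order. I would then verify $\phi_{ij}(g)=\Psi(\tau_{ij}(g))-\Psi(g)$ by reducing to the case $ij=d_k\notin g$ (the case $d_k\in g$ follows by applying that identity to $g\setminus\{d_k\}$ and using antisymmetry). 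In that case the canonical path to $g\cup\{d_k\}$ differs from the one to $g$ only by inserting the switch $d_k$ at the position dictated by its index, so $\Psi(g\cup\{d_k\})-\Psi(g)$ equals $\phi_{d_k}(g_p)$ plus a sum of terms $\phi_{d_{k_s}}(g_{s-1}\cup\{d_k\})-\phi_{d_{k_s}}(g_{s-1})$ over the links of index larger than $k$; applying \eqref{eq:conservative_cost} with $g\leftarrow g_{s-1}$, $ij\leftarrow d_{k_s}$, $i'j'\leftarrow d_k$ rewrites each of these as $\phi_{d_k}(g_s)-\phi_{d_k}(g_{s-1})$, the sum telescopes to $\phi_{d_k}(g)-\phi_{d_k}(g_p)$, and adding back $\phi_{d_k}(g_p)$ leaves exactly $\phi_{d_k}(g)$. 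I expect this telescoping — sliding the insertion of $d_k$ past the later links using \eqref{eq:conservative_cost} — to be the one genuine obstacle, and it is the only place \eqref{eq:conservative_cost} is used; it is the discrete counterpart of path-independence of a line integral.

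For uniqueness I would observe that if $\Psi_1,\Psi_2$ both satisfy \eqref{eq:cost_potential} then $\Psi_1-\Psi_2$ is invariant under every $\tau_{ij}$, and since $\mathcal{G}_N$ is connected under single-link switches, $\Psi_1-\Psi_2$ is constant. (An alternative to the canonical-path construction is to show directly that the $\phi$-sum along \emph{any} walk depends only on its endpoints, using that backtracks cancel by antisymmetry and adjacent switches can be transposed by \eqref{eq:conservative_cost}; but this needs the combinatorial fact that walks with fixed endpoints in the hypercube are connected under these local moves, so the explicit integration along a canonical path seems cleaner to write down.)
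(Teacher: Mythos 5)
Your proposal is correct and takes essentially the same approach as the paper: both construct $\Psi$ by summing $\phi$ along a single-link build path from $\varnothing$ and use the four-cycle condition \eqref{eq:conservative_cost} to make the construction consistent. The only difference is bookkeeping — the paper first proves that the path-sum is independent of the ordering of links (via the adjacent-swap argument) and then verifies \eqref{eq:cost_potential} by appending the new link at the end, whereas you fix a canonical global order and apply the same swap argument inline to slide the inserted link past later ones; the telescoping you describe is exactly the paper's swap identity, applied repeatedly.
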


\begin{proof}
    The ``if'' direction can be immediately obtained by directly checking Equation \eqref{eq:conservative_cost} using the expression for $\phi$ in terms of $\Psi$.

    The ``only if'' direction can be proven by construction. Fix a network $g \in {\cal G}_N$ and let $n = |g|$ be its size. Let $\gamma = (\gamma_1, \ldots, \gamma_n)$ be an ordering of the links in $g$, meaning that $\gamma_\ell \in g$ for $\ell = 1, \ldots, n$ and $\gamma_\ell \ne \gamma_{\ell'}$ if $\ell \ne \ell'$. From this ordering, we can construct a sequence of networks recursively by letting $g^\gamma_0 \coloneqq \varnothing$ and setting $g^\gamma_{\ell} \coloneqq \tau_{\gamma_\ell}(g^\gamma_{\ell-1})$ for $\ell > 0$. Let us define the following quantity:
    \begin{align}
        \Psi^\gamma(g) \coloneqq \sum_{\ell = 1}^n \phi_{\gamma_\ell}(g^\gamma_{\ell-1}).
    \end{align}
    I now prove that this value does not depend on the specific ordering of the links chosen. 

    First, let us consider an ordering $\gamma^0$ that differs from $\gamma$ only at positions $\ell_0$ and $\ell_0+1$, where $1 \le \ell_0 < n$. Since both $\gamma$ and $\gamma^0$ are orderings, this means that $\gamma^0$ is $\gamma$ with the entries $\ell_0$ and $\ell_0+1$ swapped. From this, we must have $g^{\gamma^0}_{\ell_0-1} = g^{\gamma}_{\ell_0-1}$, and $g^{\gamma^0}_{\ell_0+2} = g^{\gamma}_{\ell_0+2}$ if $\ell_0 < n-1$. Therefore, computing $\Psi^\gamma(g) - \Psi^{\gamma^0}(g)$ yields
    \begin{align}
        \Psi^\gamma(g) - \Psi^{\gamma^0}(g) = \phi_{\gamma_{\ell_0}}(g^\gamma_{\ell_0-1}) + \phi_{\gamma_{\ell_0+1}}(g^\gamma_{\ell_0}) - \phi_{\gamma^0_{\ell_0}}(g^{\gamma^0}_{\ell_0-1}) - \phi_{\gamma^0_{\ell_0+1}}(g^{\gamma^0}_{\ell_0}).
    \end{align}
    Note that we can write $g^{\gamma^0}_{\ell_0} = \tau_{\gamma_{\ell_0+1}}(g^{\gamma}_{\ell_0 - 1})$. Therefore, this difference can be written as
    \begin{align}
        \Psi^\gamma(g) - \Psi^{\gamma^0}(g) = \phi_{\gamma_{\ell_0}}(g^\gamma_{\ell_0-1}) + \phi_{\gamma_{\ell_0+1}}(\tau_{\gamma_{\ell_0}}(g^\gamma_{\ell_0-1})) - \phi_{\gamma_{\ell_0+1}}(g^{\gamma}_{\ell_0-1}) - \phi_{\gamma_{\ell_0}}(\tau_{\gamma_{\ell_0+1}}(g^{\gamma}_{\ell_0 - 1})).
    \end{align}
    From the condition in Equation \eqref{eq:conservative_cost}, this must vanish, such that $\Psi^\gamma(g) = \Psi^{\gamma^0}(g)$. Any ordering of the links in $g$ can be obtained starting from $\gamma$ through a sequence of ``swaps'' like the one we just analyzed. Therefore, the quantity $\Psi^\gamma(g)$ will be the same regardless of the ordering $\gamma$ chosen. Thus, we can define a quantity that depends solely on the network $\Psi(g) \coloneqq \Psi^\gamma(g)$ for any ordering $\gamma$.

    Now we want to prove that our construction of $\Psi$ satisfies Equation \eqref{eq:cost_potential}. Fix a network $g$ and a link $ij \in {\cal D}_N$. Without loss of generality, suppose that $ij \notin g$\footnote{Since $\phi_{ij}(g) = - \phi_{ij}(\tau_{ij}(g))$, it suffices to prove that Equation \eqref{eq:cost_potential} holds for $g$ to prove that it holds for $\tau_{ij}(g)$, so we can choose to prove it for the smaller network.}. Let $\gamma^1$ be some ordering of the links in $g$ and $\gamma^2$ be the ordering of the links in $\tau_{ij}(g)$ such that $\gamma^2_\ell = \gamma^1_\ell$ for $1 \le \ell \le |g|$ and $\gamma^2_{|g|+1} = ij$. We can write our $\Psi$ functions using these orderings, such that
    \begin{align}
        \Psi(\tau_{ij}(g)) - \Psi(g) = \sum_{\ell = 1}^{|g|+1} \phi_{\gamma^2_\ell}(g^{\gamma^2}_{\ell-1}) - \sum_{\ell = 1}^{|g|} \phi_{\gamma^1_\ell}(g^{\gamma^1}_{\ell-1}) = \phi_{ij}(g),
    \end{align}
    as desired.

    Finally, to prove uniqueness up to an additive constant, suppose that there are two functions $\Psi_1$ and $\Psi_2$ satisfying Equation \eqref{eq:cost_potential} for all $g \in {\cal G}_N$ and $ij \in {\cal D}_N$. Let $K_0 \coloneqq \Psi_1(\varnothing) - \Psi_2(\varnothing)$, and suppose there exists a $g_0$ such that $\Psi_1(g_0) - \Psi_2(g_0) \ne K_0$. From Equation \eqref{eq:cost_potential}, we have that, for all $g \in {\cal G}_N$ and $ij \in {\cal D}_N$
    \begin{align}
        \Psi_1(g) - \Psi_2(g) &= [\Psi_1(g) + \phi_{ij}(g)] - [\Psi_2(g) + \phi_{ij}(g)] \nonumber \\
        &= \Psi_1(\tau_{ij}(g)) - \Psi_2(\tau_{ij}(g)).
    \end{align}
    Since this procedure can be repeated with any sequence of dyads to reach any network, we conclude that $\Psi_1(g) - \Psi_2(g) = K_0$ for all networks, which contradicts our assumption. 
\end{proof}

We can now prove Proposition \ref{prop:conservative}. To do this, define $\phi_{ij}(g) \coloneqq U_i(\tau_{ij}(g)) - U_i(g)$. Conservativeness, then, is equivalent to Equation \eqref{eq:conservative_cost}. Lemma \ref{lem:conservative_cost}, then, establishes that conservativeness is equivalent to the existence of a potential, and that this potential is unique up to an additive constant.

\subsection{Proof of Proposition \ref{prop:value_representation}}
Let $V$ be the structure values associated to the utility functions $U$. Define the utilities $\tilde{U}$ by
\begin{align}
    \tilde{U}_i(g) \coloneqq \sum_{g' \subseteq g} V_i(g') \mathbbm{1}\{i \in {\cal J}_{\textrm{src}}(g')\},
\end{align}
with corresponding values
\begin{align}
    \tilde{V}_i(g) \coloneqq V_i(g) \mathbbm{1}\{i \in {\cal J}_{\textrm{src}}(g)\}.
\end{align}
By Lemma \ref{lem:choice_equivalence}, these utilities are choice-equivalent to $U$. Therefore, to prove Proposition \ref{prop:value_representation} it is sufficient to prove that $\tilde{V}$ satisfies condition \eqref{eq:value_potential_cond} if and only if the game is a potential game.

First, suppose the game is a potential game with potential $\Phi$. Fix a network $g = (s_i, g_{-i})$. Note that
\begin{align}
    \Phi(s_i,g_{-i}) - \Phi(\varnothing,g_{-i}) = \tilde{U}_i(s_i,g_{-i}) - \tilde{U}_i(\varnothing,g_{-i}) = \tilde{U}_i(s_i,g_{-i}).
\end{align}
The last equality comes from the fact that $i \notin {\cal J}_{\textrm{src}}(\varnothing, g_{-i})$, so the payoff associated to it is 0. To simplify notation, denote with $\hat{g}_{-i}$ the network $(\varnothing, g_{-i})$. Using our inversion formula from Lemma \ref{lem:value_representation} applied to the potential, define $V_\Phi$ as the unique values such that
\begin{align}
    \Phi(g) \coloneqq \sum_{g' \subseteq g} V_\Phi(g').
\end{align}
Now, note that we can write the utility functions as
\begin{align}
    \tilde{U}_i(g) = \Phi(g) - \Phi(\hat{g}_{-i}) = \sum_{g' \subseteq g} V_\Phi(g') \mathbbm{1}\{g' \not\subseteq \hat{g}_{-i}\}.
\end{align}
Note that a structure $g' \subseteq g$ satisfies $g' \not\subseteq \hat{g}_{-i}$ if and only if $i \in {\cal J}_{\textrm{src}}(g')$. This means that
\begin{align}
    \tilde{U}_i(g) = \sum_{g' \subseteq g} V_\Phi(g') \mathbbm{1}\{i \in {\cal J}_{\textrm{src}}(g')\}.
\end{align}
Using Lemma \ref{lem:value_representation}, we know that the structure values for a set of utilities are unique. This implies that
\begin{align}
    V_\Phi(g) \mathbbm{1}\{i \in {\cal J}_{\textrm{src}}(g)\} = \tilde{V}_i(g) = V_i(g) \mathbbm{1}\{i \in {\cal J}_{\textrm{src}}(g)\}.
\end{align}
Therefore, $V_i(g) = V_\Phi(g)$ for all $i \in {\cal J}_{\textrm{src}}(g)$.

Now suppose that $V_i(g) = V_0(g)$ for all $i \in {\cal J}_{\textrm{src}}(g)$ and let $V_0(\varnothing)$ be arbitrary. I will now prove that the game is a potential game. We can write the utilities $\tilde{U}$ as
\begin{align}
    \tilde{U}_i(g) = \sum_{g' \subseteq g} V_0(g') \mathbbm{1}\{i \in {\cal J}_{\textrm{src}}(g')\}.
\end{align}
We can prove that the game is a potential game by proposing a potential and checking that it captures the marginal utilities of deviating. Define the proposed potential as
\begin{align}
    \Phi_V(g) \coloneqq \sum_{g' \subseteq g} V_0(g').
\end{align}
As shown in the proof of Proposition \ref{prop:conservative}, it is sufficient to show that the potential captures the incentives of a single-link deviation. First, fix $g$ such that $ij \in g$. Then we have
\begin{align}
    \tilde{U}_i(\tau_{ij}(g)) - \Tilde{U}_i(g) = -\sum_{g' \subseteq g} V_0(g') \mathbbm{1}\{i \in {\cal J}_{\textrm{src}}(g'), ij \in g'\}.
\end{align}
Note that $ij \in g' \implies i \in {\cal J}_{\textrm{src}}(g')$. Therefore, we can write this marginal utility as
\begin{align}
    \tilde{U}_i(\tau_{ij}(g)) - \Tilde{U}_i(g) = -\sum_{g' \subseteq g} V_0(g') \mathbbm{1}\{ij \in g'\}.
\end{align}
Similarly, we can compute the change in the potential from the deviation
\begin{align}
    \Phi_V(\tau_{ij}(g)) - \Phi_V(g) = -\sum_{g' \subseteq g} V_0(g') \mathbbm{1}\{ij \in g'\} = \tilde{U}_i(\tau_{ij}(g)) - \Tilde{U}_i(g).
\end{align}
If $ij \notin g$, define $g' \coloneqq \tau_{ij}(g)$ and apply the steps above to reach the same result. Therefore, the game is a potential game with potential $\Phi_V$.

\subsection{Proof of Proposition \ref{prop:conv}}
First, let us prove that the existence of the potential implies reversibility of the process. By definition, the potential satisfies
\begin{align}
    \Phi(\tau_{ij}(g))-\Phi(g) = U_i(\tau_{ij}(g)) - U_i(g) = \left( \frac{\sigma}{1-\sigma} \right) \log \left( \frac{p_{ij}(g)}{p_{ij}(\tau_{ij}(g))} \right).
\end{align}
The Markov process is reversible if there exists a distribution $\pi$ such that the detailed balance condition is satisfied:
\begin{align}
    \lambda_{ij}(g) p_{ij}(g) \pi(g) = \lambda_{ij}(\tau_{ij}(g)) p_{ij}(\tau_{ij}(g)) \pi(\tau_{ij}(g))
\end{align}
for all $g \in {\cal G}_N$ and $ij \in {\cal D}_N$. This is sufficient since the only non-zero transition probabilities correspond to pairs of networks that differ by only one link. Since $\lambda_{ij}(\tau_{ij}(g)) = \lambda_{ij}(g)$, we have that 
\begin{align}
    \lambda_{ij}(g) p_{ij}(g) e^{\left( \frac{1-\sigma}{\sigma} \right) \Phi(g)} &= \lambda_{ij}(\tau_{ij}(g)) p_{ij}(\tau_{ij}(g)) \left[ \frac{p_{ij}(g)}{p_{ij}(\tau_{ij}(g))} \right] e^{\left( \frac{1-\sigma}{\sigma} \right) \Phi(g)} \nonumber \\
    &= \lambda_{ij}(\tau_{ij}(g)) p_{ij}(\tau_{ij}(g)) e^{\left( \frac{1-\sigma}{\sigma} \right)[\Phi(\tau_{ij}(g))-\Phi(g)]} e^{\left( \frac{1-\sigma}{\sigma} \right)\Phi(g)} \nonumber \\
    &= \lambda_{ij}(\tau_{ij}(g)) p_{ij}(\tau_{ij}(g)) e^{\left( \frac{1-\sigma}{\sigma} \right) \Phi(\tau_{ij}(g))}.
\end{align}
Define the following probability distribution over the space of networks: 
\begin{align} \label{eq:stationary_dist}
    \Tilde{\pi}(g) = \frac{\exp\left[\left( \frac{1-\sigma}{\sigma} \right) \Phi(g)\right]}{\sum_{g' \in {\cal G}_N} \exp\left[\left( \frac{1-\sigma}{\sigma} \right) \Phi(g')\right]}.
\end{align}
Dividing the previous expression by $\sum_{g' \in {\cal G}_N} \exp\left[\left( \frac{1-\sigma}{\sigma} \right) \Phi(g')\right]$, we have that
\begin{align}
    \lambda_{ij}(g) p_{ij}(g) \tilde{\pi}(g) = \lambda_{ij}(\tau_{ij}(g)) p_{ij}(\tau_{ij}(g)) \tilde{\pi}(\tau_{ij}(g)),
\end{align}
so the process is reversible. In addition to Assumption \ref{A:T1EV} (from which we have irreducibility and aperiodicity), satisfying detailed balance implies that $\tilde{\pi}$ is the unique stationary distribution of the Markov chain.

Now we can prove that reversibility of the process implies existence of the potential. Recall that under Assumption \ref{A:T1EV}, there exists a unique non-degenerate stationary distribution $\pi$ to which the system converges. If the process is reversible, it satisfies the detailed balance condition:
\begin{align}
    \lambda_{ij}(g) p_{ij}(g) \pi(g) &= \lambda_{ij}(\tau_{ij}(g)) p_{ij}(\tau_{ij}(g)) \pi(\tau_{ij}(g))
\end{align}
Since the stationary distribution has full support and $\lambda_{ij}(g) = \lambda_{ij}(\tau_{ij}(g))$, we have that
\begin{align}
    \log(\pi(\tau_{ij}(g))) - \log(\pi(g)) &= \log \left( \frac{p_{ij}(g)}{p_{ij}(\tau_{ij}(g))} \right).
\end{align}
for all $g \in {\cal G}_N$ and $ij \in {\cal D}_N$. Since the stationary probabilities are a function of the network state only, we have that any function
\begin{align}
    \Phi(g) = C + \left( \frac{\sigma}{1-\sigma} \right) \log(\pi(g)),
\end{align}
with $C \in \mathbb{R}$, will satisfy
\begin{align}
    \Phi(\tau_{ij}(g)) - \Phi(g) = \left( \frac{\sigma}{1-\sigma} \right) \log\left( \frac{p_{ij}(g)}{p_{ij}(\tau_{ij}(g))} \right) = U_i(\tau_{ij}(g)) - U_i(g),
\end{align}
so the static game is a potential game.

Using the proof of the first implication, we have that the potential characterizes the stationary distribution of the chain. Namely, the stationary distribution is given by Equation \eqref{eq:stationary_dist}.

\subsection{Proof of Proposition \ref{prop:motif_partition}}
For definitions and important results, see Appendix \ref{sec:app_graph_limits}.

First, we want to find the set of graphons that solve the optimization problem in Theorem \ref{thm:partition_func_convergence}. Note that in our case, the function ${\cal T}$ corresponds to our scaled potential. These technically correspond to subgraph densities but, as pointed out in Appendix \ref{sec:app_graph_limits}, homomorphism densities and subgraph densities are equivalent for large networks. Specifically, in the proofs of Theorem \ref{thm:partition_func_convergence} and \ref{thm:graphon_convergence}, a subleading correction will be removed by the limsup and liminf. Therefore, we have that for a given graphon $h \in {\cal W}$,
\begin{align}
    {\cal T}(h) = \sum_{m \in {\cal M}} \frac{a_m}{\delta_m} \tilde{b}(m,h),
\end{align}
where $\tilde{b}(m,h)$ is the homomorphism density of motif $m$ into graphon $h$. To solve this problem, I will first prove that this optimization problem is solved by a constant graphon. 

Fix some graphon $h \in {\cal W}$ and a motif $m \in {\cal M}$. For our case of interest, H\"older's inequality states that for two functions $f,g:[0,1]^n \to [0,1]$ and for $p,q > 0$ such that $1/p +  1/q = 1$, we have that
\begin{align}
    \int_{[0,1]^n} f({\bf x}) g({\bf x}) \, dx_1 \ldots dx_n \le \left( \int_{[0,1]^n} f({\bf x})^p \, dx_1 \ldots dx_n \right)^{1/p} \left( \int_{[0,1]^n} g({\bf x})^q \, dx_1 \ldots dx_n \right)^{1/q},
\end{align}
with equality if and only if there exist $c_1, c_2 \in \mathbb{R}$, not both zero, such that $c_1 f({\bf x})^p = c_2 g({\bf x})^q$ almost everywhere. Applying this repeatedly with appropriate choices of $p$ and $q$, we have that for a collection of $k$ functions $f_i:[0,1]^n \to [0,1]$, the following holds:
\begin{align}
    \int_{[0,1]^n} \left( \prod_{i=1}^k f_i({\bf x}) \right) \, dx_1 \ldots dx_n \le \prod_{i=1}^k \left( \int_{[0,1]^n} f_i({\bf x})^k \, dx_1 \ldots dx_n \right)^{1/k}.
\end{align}
Applying this result to the homomorphism density of $m$ in $h$, we obtain
\begin{align}
    \tilde{b}(m,h) &= \int_{[0,1]^{n_m}} \left( \prod_{ij \in m} h(x_i, x_j) \right) dx_1 \ldots dx_{n_m} \nonumber \\
    &\le \prod_{ij \in m} \left( \int_{[0,1]^{n_m}} h(x_i, x_j)^{e_m} \, dx_1 \ldots dx_{n_m} \right)^{1/e_m} \nonumber \\
    &= \int_{[0,1]^2} h(x, y)^{e_m} \, dx \, dy,
\end{align}
where $e_m = |m|$ is the number of edges in the motif $m$. Additionally, note that the above is always an equality if $e_m = 1$.

Recall that we assumed $a_m > 0$ for all $m$ with $e_m > 1$. We can apply the previous result to the motifs with $e_m > 1$ to bound our function ${\cal T}$ for an arbitrary graphon $h$:
\begin{align}
    {\cal T}(h) \le \sum_{m \in {\cal M}} \frac{a_m}{\delta_m} \int_{[0,1]^2} h(x, y)^{e_m} \, dx \, dy.
\end{align}
Define the function $R:[0,1] \to \mathbb{R}$ as
\begin{align}
    R(s) \coloneqq \sum_{m \in {\cal M}} \frac{a_m}{\delta_m} s^{e_m} + H(s).
\end{align}
Let ${\cal H}(h) \coloneqq \int_{[0,1]^2} H(h(x,y)) \, dx \, dy$. From the bound above, we see that
\begin{align}
    {\cal T}(h) + {\cal H}(h) \le \int_{[0,1]^2} R(h(x,y)) \, dx \, dy.
\end{align}
Let $R^* \coloneqq \sup_{s} R(s)$. Using the extreme value theorem, we know this supremum is actually a maximum, and is achieved at some $s^*$. Therefore, we have that
\begin{align}
    {\cal T}(h) + {\cal H}(h) \le \int_{[0,1]^2} R(s^*) \, dx \, dy.
\end{align}
Now, the graphon $h^*$ given by $h^*(x,y) = s^*$ saturates H\"older's inequality and, therefore, satisfies
\begin{align}
    {\cal T}(h^*) + {\cal H}(h^*) = \int_{[0,1]^2} R(s^*) \, dx \, dy,
\end{align}
so it solves the optimization problem in Theorem \ref{thm:partition_func_convergence}.

Uniqueness of the solution follows the same argument as Theorem 4.1 in \citet{chatterjee_estimating_2013}. Therefore, convergence to the Erd\H{o}s--R\'enyi model immediately follows from Theorem \ref{thm:graphon_convergence}.

\subsection{Proof of Theorem \ref{thm:mult_types_partition}}
To prove Theorem \ref{thm:mult_types_partition}, we want to find the colored graphon that solves the optimization problem in Theorem \ref{thm:colored_graphon_partition} in Appendix \ref{sec:app_graph_limits}. Intuitively, we want to prove that this problem is solved by a graphon that is piece-wise constant. 

Our potential is now composed of two terms ${\cal T}_1$ and ${\cal T}_2$, corresponding to the motif and neighbor utilities, respectively. These can be written as
\begin{align}
    {\cal T}_1(h,c) &= \sum_{m \in {\cal M}} \frac{a_m}{\delta_m} \tilde{b}(m,h), \nonumber \\
    {\cal T}_2(h,c) &= \sum_{\theta \in \Theta} \int_{[0,1]} \mathbbm{1}\{c(x)=\theta\} u_\theta \left( \left[ \int_{[0,1]} h(x,y) \mathbbm{1}\{c(y)=\theta'\} \, dy \right]_{\theta' \in \Theta} \right) \, dx.
\end{align}
To simplify notation, it is convenient to define ``subgraphons'' for every pair of types. Without loss of generality, we can consider the coloring to be ordered\footnote{This is because ${\cal T}$ is the same for all measurable re-labelings of the nodes.}, such that for some ordering of types $\theta^1, \ldots, \theta^L$, the coloring is
\begin{align}
    c(x) = \sum_{\theta^i \in \Theta} \theta^i \mathbbm{1}\left\{ x \ge \sum_{j < i} w_{\theta^j}, x < \sum_{j \le i} w_{\theta^j} \right\}.
\end{align}
The subgraphon associated to types $\theta^i$ and $\theta^j$ is, then, $h_{\theta^i \theta^j}: [0,1]^2 \to [0,1]$ defined by
\begin{align}
    h_{\theta^i \theta^j}(x,y) \coloneqq h\left( \sum_{k < i} w_{\theta^k} + w_{\theta^i} x, \sum_{k < j} w_{\theta^k} + w_{\theta^j} y \right).
\end{align}
This will allow us to consider the variational problem on the whole graphon as a variational problem on the subgraphons $h_{\theta \theta'}$.

To begin, we fix a colored graphon $(h,c)$ and find a bound on ${\cal T}_1(h)$ in a similar manner to before. For a given motif $m$, we can break up the integration domain into ``boxes'' where all the types are fixed. Specifically, we can write
\begin{align}
    \tilde{b}(m,h) &= \sum_{\boldsymbol{\theta} \in \Theta^{n_m}} \int_{[0,1]^{n_m}} \mathbbm{1}\{c(x_k) = \theta_k \, \forall k \} \left( \prod_{ij \in m} h(x_i, x_j) \right) \, dx_1 \ldots dx_{n_m} \nonumber \\
    &= \sum_{\boldsymbol{\theta} \in \Theta^{n_m}} \left( \prod_{i \in {\cal N}_m} w_{\theta_i} \right) \int_{[0,1]^{n_m}} \left( \prod_{ij \in m} h_{\theta_i \theta_j}(x_i, x_j) \right) \, dx_1 \ldots dx_{n_m}.
\end{align}
We will now use the following result, obtained by iterating H\"older's inequality:
\begin{align}
    \int_{[0,1]^{n_m}} \left( \prod_{ij \in m} f_{ij}(x_i, x_j) \right) dx_1 \ldots dx_{n_m} \le \prod_{ij \in m} \lVert f_{ij} \rVert_{e_m}.
\end{align}
Applying this to the homomorphism density, we obtain
\begin{align}
    \tilde{b}(m,h) &\le \sum_{\boldsymbol{\theta} \in \Theta^{n_m}} \left( \prod_{i \in {\cal N}_m} w_{\theta_i} \right) \prod_{ij \in m} \lVert h_{\theta_i \theta_j} \rVert_{e_m}.
\end{align}
We can use this to bound ${\cal T}_1$. Note that we can assume without loss of generality that $e_m > 1$ for all motifs $m$, since any motifs with $e_m = 1$ can be absorbed into the linear part of the neighborhood utility. Recall that we assumed that $a_m > 0$ for all motifs with $e_m > 1$. Then, we have that $a_m > 0$ for all $m$, so we obtain the bound
\begin{align}
    {\cal T}_1(h,c) \le \sum_{m \in {\cal M}} \frac{a_m}{\delta_m} \sum_{\boldsymbol{\theta} \in \Theta^{n_m}} \left( \prod_{i \in {\cal N}_m} w_{\theta_i} \right) \prod_{ij \in m} \lVert h_{\theta_i \theta_j} \rVert_{e_m}.
\end{align}
Note that this bound has different norms based on the size of the motif. To obtain a bound that uses a single norm for all graphons, we can use the fact that for a space with $\lVert 1 \rVert_1 = 1$, for $1 \le p \le q < \infty$, the corresponding norms satisfy $\lVert h \rVert_p \le \lVert h \rVert_q$, with equality iff $h$ is constant almost everywhere. Define $e^* \coloneqq \max_{m \in {\cal M}} e_m$. Then we have the following bound:
\begin{align}
    {\cal T}_1(h,c) \le \sum_{m \in {\cal M}} \frac{a_m}{\delta_m} \sum_{\boldsymbol{\theta} \in \Theta^{n_m}} \left( \prod_{i \in {\cal N}_m} w_{\theta_i} \right) \prod_{ij \in m} \lVert h_{\theta_i \theta_j} \rVert_{e^*}.
\end{align}

To find a bound on ${\cal T}_2$, I will use the following technical Lemma.

\begin{lemma} \label{lem:entropy_bound}
    For any $n \ge 1$, $h \in {\cal W}$ and $\kappa \in \mathbb{R}$,
    \begin{align}
        \int_{[0,1]^2} [H(h(x,y)) + \kappa h(x,y)] \, dx \, dy \le H(\lVert h \rVert_{n}) + \kappa \lVert h \rVert_{n}.
    \end{align}
\end{lemma}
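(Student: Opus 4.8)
The plan is to collapse the two-dimensional integral inequality to a one-dimensional concavity statement and then apply Jensen's inequality. Set $\Psi(u)\coloneqq H(u)+\kappa u$ and define $\phi:[0,1]\to\mathbb{R}$ by $\phi(t)\coloneqq\Psi(t^{1/n})=H(t^{1/n})+\kappa t^{1/n}$. Pointwise one has $\Psi(h(x,y))=\phi\big(h(x,y)^n\big)$, so the left-hand side of the lemma equals $\int_{[0,1]^2}\phi(h^n)\,dx\,dy$, while the right-hand side is $\Psi(\lVert h\rVert_n)=\phi\big(\int_{[0,1]^2}h^n\big)$. Since $[0,1]^2$ with Lebesgue measure is a probability space, the asserted inequality is precisely Jensen's inequality for the concave function $\phi$ applied to the random variable $h^n$. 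Tracking the equality case in Jensen (with $\phi$ strictly concave) then identifies the extremal $h$ as the constant graphon $h\equiv\lVert h\rVert_n$, which matches how the bound is used downstream.

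So the content is that $\phi$ is concave on $[0,1]$. Split $\phi$ into $t\mapsto H(t^{1/n})$ and $t\mapsto\kappa t^{1/n}$. The second summand is concave when $\kappa\ge0$ because $t\mapsto t^{1/n}$ is concave for any $n\ge1$ (its second derivative is $\tfrac1n(\tfrac1n-1)t^{1/n-2}\le0$), and this is exactly where the hypothesis $n\ge1$ enters; for general $\kappa$ the contribution is carried through the computation below. For the entropy summand I would differentiate twice through the substitution $s=t^{1/n}$; a direct computation gives $\phi''(t)=\big[s\,H''(s)-(n-1)(H'(s)+\kappa)\big]\big/\big(n^2 s^{\,2n-1}\big)$, so concavity is equivalent to $s\,H''(s)\le(n-1)\big(H'(s)+\kappa\big)$ for all $s\in(0,1)$. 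Substituting $H'(s)=\log\frac{1-s}{s}$ and $H''(s)=-\tfrac1{s(1-s)}$ turns this into the one-variable inequality $(n-1)\big(\log\tfrac{1-s}{s}+\kappa\big)+\tfrac1{1-s}\ge0$. For $s\le\tfrac12$ (and $\kappa\ge0$) every term on the left is nonnegative, so the substantive regime is $s>\tfrac12$; there I would set $w=1-s\in(0,\tfrac12)$ and study the auxiliary function $w\mapsto\tfrac1w-(n-1)\log\tfrac{1-w}{w}$ (plus the $\kappa$ piece), e.g.\ via its monotonicity and its value at $w=\tfrac12$.

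I expect this last elementary inequality to be the only real obstacle; the change of variables and Jensen are bookkeeping. The delicate point is that $s\,H''(s)\le(n-1)(H'(s)+\kappa)$ must hold uniformly as $s\to1^-$, where $H'(s)\to-\infty$ and $sH''(s)\to-\infty$ race one another, so this estimate is where the lemma really rests, and the verification is what constrains how $n$ and $\kappa$ are allowed to interact; one should be attentive to the parameter range actually invoked. If one prefers to avoid establishing global concavity of $\phi$, a robust alternative is to bound $H(h(x,y))$ above by its supporting line at $\lVert h\rVert_n$, integrate over $[0,1]^2$, add $\kappa\int h$, and absorb the error term using $\lVert h\rVert_1\le\lVert h\rVert_n$: this yields the inequality directly whenever $H'(\lVert h\rVert_n)+\kappa\ge0$, which covers the situation needed in the application to ${\cal T}_2$.
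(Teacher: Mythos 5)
Your reduction to concavity of $\phi(t)\coloneqq H(t^{1/n})+\kappa t^{1/n}$ followed by Jensen is a genuinely different route from the paper's proof, which instead splits into two cases: when $\lVert h\rVert_n\le\rho^*(\kappa)$ (the maximizer of $G(u)\coloneqq H(u)+\kappa u$) it applies Jensen to $G$ directly and then uses monotonicity of $G$ on $[0,\rho^*(\kappa)]$ together with $\lVert h\rVert_1\le\lVert h\rVert_n$; when $\lVert h\rVert_n>\rho^*(\kappa)$ it argues via a constrained Lagrangian optimization. Your supporting-line fallback is the paper's first case in different clothing: it rearranges to $\int G(h)\le G(\rho)+(H'(\rho)+\kappa)(\lVert h\rVert_1-\rho)$ with $\rho=\lVert h\rVert_n$, and since $\lVert h\rVert_1\le\rho$ the error term is nonpositive exactly when $H'(\rho)+\kappa\ge0$, i.e.\ $\rho\le\rho^*(\kappa)$, and no further.

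The serious problem is that the concavity your main argument needs is not merely unverified but false, and you misplace where it breaks. Your own formula for $\phi''$ reduces concavity to $(n-1)\bigl(\log\tfrac{1-s}{s}+\kappa\bigr)+\tfrac{1}{1-s}\ge 0$ on $(0,1)$. As $s\to1^-$ the term $\tfrac{1}{1-s}$ dominates $(n-1)\log(1-s)$, so the endpoint is not where the inequality is threatened; the worst point is interior, at $s=1-\tfrac1n$, where the condition becomes $\kappa\ge\log(n-1)-\tfrac{n}{n-1}$. This fails at $s=\tfrac12$ for $n=2,\ \kappa=-10$ (it reads $-10+2\ge0$), and already fails for $\kappa=0$ once $n\ge5$. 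Moreover this is not an artifact of the Jensen route: the lemma as stated is itself violated in this regime. With $n=2$, $\kappa=-10$, and $h$ equal to $0.4$ and $0.6$ on two halves of $[0,1]^2$, one has $\lVert h\rVert_2=\sqrt{0.26}\approx 0.5099$, $\int[H(h)+\kappa h]=H(0.4)-5\approx-4.327$, while $H(\lVert h\rVert_2)+\kappa\lVert h\rVert_2\approx 0.693-5.099\approx-4.406$, so the left side exceeds the right. (The paper's second case does not close this either: the constraint set $\{\lVert h'\rVert_n\ge\rho\}$ is non-convex, so the unique Lagrangian stationary point need not be a global maximizer, and this $h$ is a feasible point beating the constant graphon.) Your instinct to flag the interaction between $n$ and $\kappa$ as the crux was exactly right; the honest conclusion is that the lemma needs an explicit restriction such as $\kappa\ge\log(n-1)-\tfrac{n}{n-1}$ (equivalently, concavity of $\phi$), or must be weakened to the regime $\lVert h\rVert_n\le\rho^*(\kappa)$ that your supporting-line argument covers.
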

\begin{proof}
    Note that the function $G(u) \coloneqq H(u) + \kappa u$ has a unique maximizer, which I call $\rho^*(\kappa)$. I consider the cases $\lVert h \rVert_{n} \le \rho^*(\kappa)$ and $\lVert h \rVert_{n} > \rho^*(\kappa)$ separately.

    First, consider the case $\lVert h \rVert_{n} \le \rho^*(\kappa)$. Since $G$ is a concave function, we can use Jensen's inequality to obtain
    \begin{align}
        \int_{[0,1]^2} G(h(x,y)) \, dx \, dy \le G(\lVert h \rVert_1).
    \end{align}
    Since $G(u)$ is non-decreasing for $u \le \rho^*(\kappa)$ and $\lVert h \rVert_1 \le \lVert h \rVert_n$, we obtain the bound for this case.

    For the case $\lVert h \rVert_n > \rho^*(\kappa)$, I show that the optimization problem
    \begin{align}
        \max_{h' \in {\cal W}} \int [H(h'(x,y)) + \kappa h'(x,y)] \, dx \, dy \quad \textrm{s.t.} \quad \lVert h' \rVert_{n} \ge \rho,
    \end{align}
    for $\rho \coloneqq \lVert h \rVert_n > \rho^*(\kappa)$, is solved by the constant graphon $h'(x,y) = \rho$. Note that $G$ satisfies
    \begin{align}
        \lim_{u \to 0} G'(u) = +\infty, \quad \lim_{u \to 1} G'(u) = -\infty.
    \end{align}
    This means that the constraints $h'(x,y) \in [0,1]$ will not be binding. Using Theorem 9.4.1 in \cite{luenberger_optimization_1998}, we have that the solution $h_n$ must be a stationary point of the Langrangian
    \begin{align}
        {\cal L}(h') \coloneqq \int [H(h'(x,y)) + \kappa h'(x,y) + \mu (\rho^n - h'(x,y)^n)] \, dx \, dy,
    \end{align}
    for some Lagrange multiplier $\mu$. This means that it must satisfy
    \begin{align}
        H'(h_n(x,y)) + \kappa - n \mu h_n(x,y)^{n-1} = 0,
    \end{align}
    with $\mu \ge 0$ and the complementary slackness condition
    \begin{align}
        \mu \left[ \rho^n - \int_{[0,1]^2} h_n(x,y)^n \, dx \, dy \right] = 0.
    \end{align}
    Note that we must have $\mu > 0$, since $H'(h_n) + \kappa = 0$ would imply $h_n = \rho^*(\kappa) \implies \lVert h_n \rVert_n = \rho^*(\kappa)$, which violates the constraint. With $\mu > 0$, the function $H'(u) + \kappa - n \mu u^{n-1}$ is strictly decreasing, and diverges as it approaches $0$ and $1$, so it must have a unique root. This means that there is a unique solution to the problem where the constraint binds, corresponding to $h_n = \rho$. Since our original graphon is in the feasible set of this problem, it is bounded by the value of the functional at the constant graphon. This yields the bound for the second case.
\end{proof}

Let us write ${\cal T}_2$ in terms of the decomposition of $u$:
\begin{align}
    {\cal T}_2(h,c) = \sum_{\theta \in \Theta} w_\theta \int_{[0,1]} \left[ \sum_{\theta' \in \Theta} c_{\theta \theta'} w_{\theta'} \int_{[0,1]} h_{\theta \theta'}(x,y) \, dy + \tilde{u}_\theta\left( \left[ w_{\theta'} \int_{[0,1]} h_{\theta \theta'}(x,y) \, dy \right]_{\theta' \in \Theta} \right) \right] \, dx.
\end{align}
In addition to this, we can also write the entropy term in terms of the subgraphons:
\begin{align}
    {\cal H}[h] = \sum_{\theta,\theta' \in \Theta} w_\theta w_{\theta'} \int_{[0,1]^2} H(h_{\theta,\theta'}(x,y)) \, dx \,dy.
\end{align}
Note that the Lemma \ref{lem:entropy_bound} allows us to jointly bound ${\cal T}_2 + {\cal H}$. Specifically, we obtain
\begin{align}
    {\cal T}_2(h,c) + {\cal H}(h) \le & \sum_{\theta,\theta' \in \Theta} w_\theta w_{\theta'} [H(\lVert h_{\theta \theta'} \rVert_{e^*}) + c_{\theta \theta'} \lVert h_{\theta \theta'} \rVert_{e^*}] \nonumber \\
    &+ \sum_{\theta \in \Theta} w_\theta \int_{[0,1]} \tilde{u}_\theta\left( \left[ w_{\theta'} \int_{[0,1]} h_{\theta \theta'}(x,y) \, dy \right]_{\theta' \in \Theta} \right) \, dx.
\end{align}
Using concavity of the $\tilde{u}_\theta$ functions, we can apply Jensen's inequality to obtain
\begin{align}
    \int_{[0,1]} \tilde{u}_\theta\left( \left[ w_{\theta'} \int_{[0,1]} h_{\theta \theta'}(x,y) \, dy \right]_{\theta' \in \Theta} \right) \le \tilde{u}_\theta\left( \left[ w_{\theta'} \int_{[0,1]^2} h_{\theta \theta'}(x,y) \, dx \, dy \right]_{\theta' \in \Theta} \right) = \tilde{u}_\theta\left( \left[ w_{\theta'} \lVert h_{\theta \theta'} \rVert_1 \right]_{\theta' \in \Theta} \right).
\end{align}
Since $\lVert h_{\theta \theta'} \rVert_1 \le \lVert h_{\theta \theta'} \rVert_{e^*}$, using monotonicity of $\tilde{u}_\theta$ yields
\begin{align}
    \int_{[0,1]} \tilde{u}_\theta\left( \left[ w_{\theta'} \int_{[0,1]} h_{\theta \theta'}(x,y) \, dy \right]_{\theta' \in \Theta} \right) \le \tilde{u}_\theta\left( \left[ w_{\theta'} \lVert h_{\theta \theta'} \rVert_{e^*} \right]_{\theta' \in \Theta} \right).
\end{align}
Together with the previous bound, this yields
\begin{align}
    {\cal T}_2(h,c) + {\cal H}(h) \le \sum_{\theta \in \Theta} w_\theta \left[ \sum_{\theta' \in \Theta} w_{\theta'} [H(\lVert h_{\theta \theta'} \rVert_{e^*}) + u_\theta([ w_{\theta'} \lVert h_{\theta \theta'} \rVert_{e^*} ]_{\theta' \in \Theta} ) \right].
\end{align}

As in the main text, let ${\cal K}_\Theta$ be the set of functions $\psi:\Theta^2 \to [0,1]$. For $\psi \in {\cal K}_\Theta$, define the function
\begin{align}
    Q(\psi) \coloneqq \sum_{m \in {\cal M}} \frac{a_m}{\delta_m} \sum_{\boldsymbol{\theta} \in \Theta^{n_m}} \left( \prod_{i \in {\cal N}_m} w_{\theta_i} \right) \prod_{ij \in m} \psi_{\theta_i \theta_j} + \sum_{\theta \in \Theta} w_\theta \left[ \sum_{\theta' \in \Theta} w_{\theta'} [H(\psi_{\theta \theta'}) + u_\theta([w_{\theta'} \psi_{\theta \theta'} ]_{\theta' \in \Theta} ) \right].
\end{align}
Then the bounds above imply that, for an arbitrary colored graphon $(h,c)$,
\begin{align}
    {\cal T}(h,c) + {\cal H}(h) \le Q([\lVert h_{\theta \theta'} \rVert_{e^*}]_{\theta,\theta' \in \Theta}).
\end{align}
Define $\psi^*$ to be
\begin{align}
    \psi^* \in \argmax_{\psi \in {\cal K}_\Theta} \, Q(\psi).
\end{align}
This optimality implies that any graphon satisfies the following bound:
\begin{align}
    {\cal T}(h,c) + {\cal H}(h) \le Q(\psi^*).
\end{align}
Let $h^*$ be the piecewise constant graphon given by $h^*_{\theta \theta'}(x,y) = \psi^*_{\theta \theta'}$ for all $\theta, \theta', x, y$. Note that this graphon satisfies $\lVert h^*_{\theta \theta'} \rVert_n = \psi^*_{\theta \theta'}$ for all $n \ge 1$. In addition, all the inequalities used to construct the bounds above are equalities for this case, so we conclude that
\begin{align}
    {\cal T}(h^*,c) + {\cal H}(h^*) = Q(\psi^*).
\end{align}
Therefore, the graphon $h^*$ solves the variational problem in Theorem \ref{thm:colored_graphon_partition}. 

To show that only graphons that are constant almost everywhere solve the problem, assume for the sake of contradiction that there is a solution $\hat{h}$ that is not constant almost everywhere. Then the inequalities above are strict, and we would obtain a strict improvement by ``flattening'' the solution in a way that preserves the norms of the subgraphons $\lVert \hat{h}_{\theta \theta'} \rVert_{e^*}$.

The convergence to the directed stochastic block model when the optimizer is unique is a direct consequence of Theorem \ref{thm:colored_graphon_convergence}.

%\subsection{Proof of Theorem \ref{thm:bargaining_power}}

\pagebreak

\section{Graph Limits} \label{sec:app_graph_limits}
In this section, I present the relevant definitions and results for the theory of large dense graphs. The results in Sections \ref{subsec:graphons} and \ref{subsec:results_graph_limits} are presented without proof. These are taken from Appendix D in \citet{mele_structural_2017}, which is an excellent introduction to the topic, and builds on the results in \citet{chatterjee_large_2011} and \citet{chatterjee_estimating_2013}. The reader is invited to read this, and the references therein, for a detailed discussion on the topic\footnote{Also see \citet{lovasz_large_2012} for a broader introduction to graph limits.}. Section \ref{subsec:colored_graphons} extends these results to colored graphs, allowing us to characterize the limiting behavior of the model with heterogeneous agents. In order to make a clearer connection to the literature on graph limits, the notation in this section will differ from the rest of the paper. 
  
\subsection{Graphons} \label{subsec:graphons}
This section presents the relevant definitions for the theory of graph limits, following the outline presented in \citet{chatterjee_estimating_2013} adapted to directed graphs. For this appendix, a directed graph is an ordered pair $G = (V,A)$, where $V$ is a set of vertices and $A$ is a set of arcs (directed edges). Consider a sequence $G_N$ of simple directed graphs whose number of nodes tends to infinity. For every fixed simple graph $H$, let $|\textrm{hom}(H,G)|$ denote the number of homomorphisms of $H$ into $G$, which is the number of edge-preserving maps from $V(H)$ into $V(G)$. That is, a map $\varphi:V(H) \to V(G)$ is a homomorphism if $(i,j)\in A(H) \implies (\varphi(i),\varphi(j)) \in A(G)$. Normalizing by the possible number of maps from $V(H)$ to $V(G)$ yields the \textit{homomorphism density}
\begin{align}
    t(H,G) \coloneqq \frac{|\textrm{hom}(H,G)|}{|V(G)|^{|V(H)|}},
\end{align}
which corresponds to the probability of a uniformly random mapping $V(H) \to V(G)$ being a homomorphism.

The importance of analyzing homomorphism densities is twofold. First, they provide a way to ``probe'' large graphs in order to understand their properties. Second, they're relevant in our context since motif utilities can be approximated as homomorphism densities for large graphs, as is discussed below.

Motif utilities do not depend explicitly on the number of homomorphisms of a motif into the network, but rather the number of times it appears as a subgraph. This can be captured using the concept of \textit{subgraph densities}. Let $|\textrm{sub}(H,G)|$ be the number of \textit{injective} maps from $V(H)$ to $V(G)$ that are homomorphisms. Clearly $|\textrm{sub}(H,G)| \le |\textrm{hom}(H,G)|$. The normalized motif densities in our model are given by subgraph densities, defined by
\begin{align}
    s(H,G) \coloneqq \frac{|\textrm{sub}(H,G)|}{|V(G)|^{|V(H)|}}.
\end{align}
It can be shown that
\begin{align}
    t(H,G) - \frac{1}{|V(G)|} {|V(H)| \choose 2} \le s(H,G) \le t(H,G).
\end{align}
Therefore, for a given $H$, characterizing homomorphism densities for large graphs is equivalent to characterizing subgraph densities.

A possible notion of convergence for the sequence $\{G_N\}$ is that the densities $t(H,G_N)$ converge to some value for every finite graph $H$. The work of Lov\'asz and coauthors (see \citet{lovasz_large_2012} for an overview) established the existence of an object that characterizes this convergence for undirected graphs. That is, there exists an object from which the limiting homomorphism densities $t(H,\cdot)$ can be obtained. An analogous result for directed graphs was established in \citet{boeckner_directed_2013}. This ``limiting object'' is a function $h \in {\cal W}$, where ${\cal W}$ is the set of measurable functions $[0,1]^2 \to [0,1]$. These objects are called ``graphons''. Conversely, every function in ${\cal W}$ arises as the limit of an appropriate sequence of directed graphs. If $H$ is a simple directed graph with $V(H) = \{1,\ldots,k\}$, the homomorphism density of $H$ into $h$ is defined as
\begin{align}
    t(H,h) \coloneqq \int_{[0,1]^k} \left( \prod_{(i,j) \in A(H)} h(x_i, x_j) \right) \, dx_1 \ldots dx_k.
\end{align}
A sequence of graphs $\{G_N\}$ is said to converge to $h$ if 
\begin{align}
    \lim_{N \to \infty} t(H,G_N) = t(H,h)
\end{align}
for all finite simple directed graphs $H$.

The intuition is that as $N \to \infty$, the interval $[0,1]$ represents a ``continuum'' of vertices and $h(x,y)$ is the probability that there is an arc going from $x$ to $y$.\footnote{This intuition is more explicit in the case of $W$ random graphs.} For the case of directed Erd\H{o}s--R\'enyi graphs $G(N,p)$ with fixed $p$, the limit graph is represented by the graphon that is equal to $p$ for all $(x,y) \in [0,1]^2$. For a fixed $p \in (0,1)$, this corresponds to a dense random graph model.

Finite simple graphs have a canonical representation as graphons. For a given graph $G$ over $N$ nodes, its associated graphon is given by
\begin{align}
    h^G(x,y) \coloneqq \mathbbm{1}\{ (\lceil Nx \rceil, \lceil Ny \rceil) \in A(G) \}.
\end{align}

The notion of convergence in ${\cal W}$ in terms of homomorphism densities can be metrized using the \textit{cut distance}. For two graphons $h_1$ and $h_2$, it is defined as
\begin{align}
    d_\square(h_1,h_2) \coloneqq \sup_{S, T \subseteq [0,1]} \left| \int_{S \times T} [h_1(x,y) - h_2(x,y)] \, dx \, dy \right|.
\end{align}
For our purposes, it will be useful to work in a different space that has some useful topological properties. Let $\Sigma$ be the set of measure-preserving bijections $\sigma: [0,1] \to [0,1]$. Define an equivalence relation on ${\cal W}$ by setting $h_1 \sim h_2$ if $h_1(x,y) = h_2^\sigma(x,y) \coloneqq h_2(\sigma x, \sigma y)$ for some $\sigma \in \Sigma$. Let $\tilde{h}$ be the closure of the orbit $\{h^\sigma\}$ in $({\cal W},d_\square)$. Let $\tilde{{\cal W}} \coloneqq {\cal W}/\sim$ and let $\tau$ be the map $\tau h \mapsto \tilde{h}$. The distance $\delta_\square$ on the space $\tilde{{\cal W}}$ is defined as
\begin{align}
    \delta_\square(\tilde{h}_1, \tilde{h}_2) = \inf_{\sigma} d_\square(h_1, h_2^\sigma),
\end{align}
such that $(\tilde{{\cal W}}, \delta_\square)$ is a metric space. For any finite directed graph $G$, let $G$ be its associated graphon and let $\tilde{G} \coloneqq \tilde{h}^G \in \tilde{{\cal W}}$ be its corresponding orbit.

An important aspect of homomorphism densities is that they are continuous functions in this space. This is useful in the analysis of the limiting behavior of ERGMs, as discussed below.

\subsection{Results on graph limits} \label{subsec:results_graph_limits}
A central tool used in characterizing the limiting behavior of ERGMs is the theory of large deviations for random graph models. Fix some $p \in (0,1)$. Let $\tilde{\mathbb{P}}_{N,p}$ be the measure induced on $\tilde{{\cal W}}$ by the directed Erd\H{o}s--R\'enyi model with parameter $p$. Additionally, define the function $I_p:[0,1] \to \mathbb{R}$ as
\begin{align}
    I_p(u) \coloneqq u \log\left( \frac{u}{p} \right) + (1-u) \log\left( \frac{1-u}{1-p} \right).
\end{align}
The domain of this function can be extended to $\tilde{{\cal W}}$ by defining
\begin{align} \label{eq:er_rate_func}
    {\cal I}_p(\tilde{h}) \coloneqq \int_{[0,1]^2} I_p(h(x,y)) \, dx \, dy
\end{align}
for any $h \in \tau^{-1}(\tilde{h})$. Building off the results in \citet{chatterjee_large_2011}, we have the following result for directed graphs.
\begin{theorem}[Theorem 8 in \citet{mele_structural_2017}]\label{thm:er_ldp}
    For each fixed $p \in (0,1)$, the sequence $\tilde{\mathbb{P}}_{N,p}$ satisfies a large deviation principle on the space $(\tilde{{\cal W}}, \delta_\square)$ at speed $N^2$ with rate function ${\cal I}_p$. Explicitly, for any closed set $\tilde{F} \subseteq \tilde{{\cal W}}$,
    \begin{align}
        \limsup_{N \to \infty} \frac{1}{N^2} \log(\tilde{\mathbb{P}}_{N,p}(\tilde{F})) \le - \inf_{\tilde{h} \in \tilde{F}} {\cal I}_p(\tilde{h})
    \end{align}
    and for any open set $\tilde{U} \subseteq \tilde{{\cal W}}$,
    \begin{align}
        \liminf_{N \to \infty} \frac{1}{N^2} \log(\tilde{\mathbb{P}}_{N,p}(\tilde{U})) \ge - \inf_{\tilde{h} \in \tilde{U}} {\cal I}_p(\tilde{h}).
    \end{align}
\end{theorem}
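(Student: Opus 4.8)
The plan is to transport the Chatterjee--Varadhan large deviation principle for dense Erd\"os--R\'enyi graphs to the directed setting. Three structural facts drive that argument, and each has a directed counterpart: (i) the space $(\tilde{\cal W}, \delta_\square)$ is compact, which follows from the weak regularity lemma for directed kernels, equivalently from the directed graphon limit theory of \citet{boeckner_directed_2013}; (ii) homomorphism densities $t(H,\cdot)$ are $\delta_\square$-continuous on $\tilde{\cal W}$; and (iii) the canonical graphon of the directed model $G(N,p)$ is, up to exponentially negligible corrections, a block object whose entries are independent $\mathrm{Bernoulli}(p)$ variables. The only genuine difference from the undirected case is bookkeeping: a directed graphon need not be symmetric, so there are $N^2$ rather than $\binom{N}{2}$ independent coordinates and the rate function integrates $I_p(h(x,y))$ over the full square $[0,1]^2$, which is exactly the ${\cal I}_p$ of \eqref{eq:er_rate_func}.

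\textbf{Upper bound.} First I would prove the ball estimate: for every $\tilde h \in \tilde{\cal W}$,
\begin{align}
    \lim_{\varepsilon \to 0^+} \limsup_{N \to \infty} \frac{1}{N^2} \log \tilde{\mathbb P}_{N,p}\!\left( \delta_\square(\tilde G_N, \tilde h) < \varepsilon \right) = -{\cal I}_p(\tilde h).
\end{align}
Fix a partition of $[0,1]$ into $k$ equal blocks and let $h_k$ be the induced step approximation of $h$. On the event that $\tilde G_N$ is $\delta_\square$-close to $\tilde h$, the empirical inter-block arc densities are close to the block values of $h_k$; since each such density is an average of independent $\mathrm{Bernoulli}(p)$ draws, a Stirling count of the directed graphs with prescribed inter-block arc totals (equivalently, Sanov's theorem) shows the probability is $\exp(-N^2[\,k^{-2}\sum_{a,b} I_p(h_k(a,b)) + o(1)\,])$. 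Sending $k \to \infty$, the weak regularity lemma gives $h_k \to h$ in $\delta_\square$ and lower semicontinuity of ${\cal I}_p$ makes the exponent converge to ${\cal I}_p(\tilde h)$. Compactness of $(\tilde{\cal W}, \delta_\square)$ then lets me cover a closed set $\tilde F$ by finitely many such balls and keep the worst, yielding the claimed $\limsup$ bound.

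\textbf{Lower bound.} For an open $\tilde U$ I would choose $\tilde h \in \tilde U$ with ${\cal I}_p(\tilde h)$ within $\eta$ of the infimum (nothing to prove if that infimum is $+\infty$), approximate $h$ by a step graphon $h_k \in \tilde U$, and bound the probability of the ball around $h_k$ from below by restricting to directed graphs whose inter-block arc totals equal the rounded block densities. The same Stirling estimate gives a lower bound $\exp(-N^2[{\cal I}_p(\tilde h_k) + o(1)])$, and continuity of ${\cal I}_p$ along the step approximation lets $\eta \to 0$, producing the matching $\liminf$ bound.

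\textbf{Main obstacle.} The substantive content lives entirely in the combinatorial and compactness layer inherited from \citet{chatterjee_large_2011}: proving the weak regularity lemma and the resulting compactness of $(\tilde{\cal W}, \delta_\square)$ for directed kernels, and making the block-counting estimate uniform so that the error terms are genuinely $o(N^2)$, uniformly over target graphons. Relative to the undirected proof these are routine adaptations --- one works on $[0,1]^2$ rather than its symmetric quotient --- but they carry all the real weight; the passage to measure-preserving equivalence classes also needs the standard care so that ${\cal I}_p$ is well defined on $\tilde{\cal W}$.
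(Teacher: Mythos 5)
The paper does not prove Theorem~\ref{thm:er_ldp}; it states it without proof and cites Theorem~8 of \citet{mele_structural_2017}, which in turn is the directed adaptation of the large deviation principle of \citet{chatterjee_large_2011}. There is therefore no in-paper argument to compare against, but your sketch is an accurate rendering of that cited proof architecture: compactness of $(\tilde{\cal W},\delta_\square)$ from a directed weak regularity lemma, a two-sided ball estimate at exponential speed $N^2$ obtained from Stirling/Sanov block counting against step approximations, and the covering argument that upgrades the ball estimate to the full upper bound on closed sets. You also correctly identify the only substantive change from the undirected case, namely that the $N(N-1)$ independent arc indicators rather than $\binom{N}{2}$ edge indicators yield the rate function ${\cal I}_p(\tilde h)=\int_{[0,1]^2} I_p(h(x,y))\,dx\,dy$ with no factor of $\tfrac12$, which matches Equation~\eqref{eq:er_rate_func}.

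One caveat worth flagging: your description of the ball estimate slides a bit quickly over the fact that $\delta_\square$-closeness of $\tilde G_N$ to $\tilde h$ does \emph{not} directly pin down the inter-block arc densities of $G_N$ relative to a fixed partition, since the cut metric permits relabelling of vertices. The Chatterjee--Varadhan argument handles this by first establishing the LDP for the unlabelled graphon $h^{G_N}$ in the labelled space $({\cal W}, d_\square)$ conditional on a fixed step structure, and then taking the infimum over measure-preserving bijections to descend to $\tilde{\cal W}$; your sentence about "standard care" in the measure-preserving-equivalence passage acknowledges this, but that is where the bulk of the technical work actually sits, not merely in a bookkeeping role. With that understanding the outline is faithful to the cited proof.
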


Now suppose we have a sequence of measures $\pi_N$ on ${\cal G}_N$ given by
\begin{align}
    \pi_N(G) = \exp\{ N^2 ({\cal T}(\tilde{G}) - \psi_N) \},
\end{align}
where $\psi_N$ is a normalization constant given by
\begin{align}
    \psi_N \coloneqq \frac{1}{N^2} \log\left( \sum_{G \in {\cal G}_N} e^{N^2 {\cal T}(\tilde{G})} \right).
\end{align}
In the case of \citet{mele_structural_2017} and our motif utilities, the function ${\cal T}$ is a linear combination of homomorphism densities. Define the \textit{entropy functional} ${\cal H}:\tilde{{\cal W}} \to \mathbb{R}$
\begin{align} \label{eq:graphon_entropy}
    {\cal H}(\tilde{h}) = \int_{[0,1]^2} H(h(x,y)) \, dx \, dy
\end{align}
for some $h \in \tau^{-1}(\tilde{h})$ and $H$ is given by
\begin{align}
    H(u) \coloneqq - u \log(u) - (1-u) \log(1-u).
\end{align}
Adapting the results in \citet{chatterjee_estimating_2013}, \citet{mele_structural_2017} finds the following result on the normalization constant of these models.

\begin{theorem}[Theorem 10 in \citet{mele_structural_2017}]\label{thm:partition_func_convergence}
    If ${\cal T}:\Tilde{{\cal W}} \to \mathbb{R}$ is a bounded continuous function, then
    \begin{align} \label{eq:partition_func_convergence}
        \psi \coloneqq \lim_{N \to \infty} \psi_N = \sup_{\Tilde{h} \in \Tilde{{\cal W}}} \left[ {\cal T}(\Tilde{h}) + {\cal H}(\Tilde{h}) \right].
    \end{align}
\end{theorem}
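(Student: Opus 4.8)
The plan is to recognize the identity \eqref{eq:partition_func_convergence} as an instance of Varadhan's integral lemma (the Laplace principle): the normalization constant $\psi_N$ is, up to an explicit additive constant, the $N^2$-scaled cumulant generating functional of the bounded continuous observable ${\cal T}$ under the directed Erd\"os--R\'enyi measure, and the large deviation principle of Theorem \ref{thm:er_ldp} converts this into a variational problem over $\tilde{{\cal W}}$.

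First I would rewrite the defining sum as an expectation under the uniform distribution on ${\cal G}_N$, which is exactly the directed Erd\"os--R\'enyi model at $p=1/2$: since $|{\cal G}_N| = 2^{N(N-1)}$,
\begin{align}
    \sum_{G \in {\cal G}_N} e^{N^2 {\cal T}(\tilde G)} = 2^{N(N-1)}\, \mathbb{E}_{N,1/2}\!\left[ e^{N^2 {\cal T}(\tilde G)} \right],
\end{align}
so that $\psi_N = \tfrac{N(N-1)}{N^2}\log 2 + \tfrac{1}{N^2}\log \mathbb{E}_{N,1/2}[e^{N^2 {\cal T}(\tilde G)}]$, with the first term tending to $\log 2$. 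Next I would invoke the compactness of $(\tilde{{\cal W}},\delta_\square)$ --- a foundational fact of graphon theory, valid in the directed case --- together with Theorem \ref{thm:er_ldp}. Compactness is precisely what makes Varadhan's lemma go through with no extra hypotheses: since ${\cal T}$ is bounded and continuous on a compact space, the tail ("moment") condition for the upper bound and exponential tightness are automatic. Concretely, the upper bound follows by covering $\tilde{{\cal W}}$ with finitely many balls on which ${\cal T}$ oscillates by at most $\varepsilon$, applying the LDP upper bound on each closed ball, taking the maximum, and sending $\varepsilon \to 0$ by continuity; the lower bound follows by restricting the expectation to a small open ball around a near-maximizer of ${\cal T} - {\cal I}_{1/2}$ and applying the LDP lower bound there. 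This yields
\begin{align}
    \lim_{N\to\infty} \frac{1}{N^2}\log \mathbb{E}_{N,1/2}\!\left[ e^{N^2 {\cal T}(\tilde G)} \right] = \sup_{\tilde h \in \tilde{{\cal W}}}\left[ {\cal T}(\tilde h) - {\cal I}_{1/2}(\tilde h) \right].
\end{align}
Finally, a one-line computation with \eqref{eq:er_rate_func} at $p=1/2$ gives $I_{1/2}(u) = u\log(2u) + (1-u)\log(2(1-u)) = \log 2 - H(u)$, hence ${\cal I}_{1/2}(\tilde h) = \log 2 - {\cal H}(\tilde h)$ with ${\cal H}$ as in \eqref{eq:graphon_entropy}; substituting and cancelling the two $\log 2$ contributions yields $\psi = \sup_{\tilde h}[{\cal T}(\tilde h) + {\cal H}(\tilde h)]$.

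The Laplace-principle bookkeeping above is routine; the work lies in the structural inputs it rests on, which is where I expect the main obstacle. These are (i) the compactness of $(\tilde{{\cal W}},\delta_\square)$ and the validity of the $N^2$-speed large deviation principle, both imported here from Theorem \ref{thm:er_ldp} and the graph-limit literature, and (ii) the continuity of ${\cal T}$ in the cut metric. In the application ${\cal T}$ is a linear combination of homomorphism densities, for which continuity is the counting-lemma estimate $|t(H,h_1)-t(H,h_2)| \le |E(H)|\,\delta_\square(h_1,h_2)$, while the discrepancy between the subgraph densities that literally enter the potential and these homomorphism densities is the $O(1/N)$ term recorded earlier in this appendix, which is absorbed harmlessly by the $\limsup$ and $\liminf$. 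Granting these ingredients the argument closes; without boundedness and continuity of ${\cal T}$, or without the $N^2$ scaling of the relevant statistics, the variational formula can genuinely fail.
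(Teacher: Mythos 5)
Your proof is correct. The paper does not give a proof of this statement — it cites it directly as Theorem 10 of \citet{mele_structural_2017}, which in turn adapts Theorem 3.1 of \citet{chatterjee_estimating_2013} to directed graphs — and your argument is exactly the argument those references give: rewrite $\psi_N = (1 - 1/N)\log 2 + \frac{1}{N^2}\log \mathbb{E}_{N,1/2}\bigl[e^{N^2 \mathcal{T}(\tilde G)}\bigr]$, apply the Laplace principle to the $p=1/2$ directed Erd\H{o}s--R\'enyi measure via the LDP of Theorem \ref{thm:er_ldp} (compactness of $(\tilde{\mathcal{W}}, \delta_\square)$ makes the exponential tightness and moment conditions for Varadhan automatic), and then use $I_{1/2}(u) = \log 2 - H(u)$ to cancel the $\log 2$ terms and obtain \eqref{eq:partition_func_convergence}.
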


Using the properties of the space $\tilde{{\cal W}}$, an even stronger result on convergence can be obtained. This allows us to characterize the limiting behavior of the statistical properties of ERGMs.

\begin{theorem}[Theorem 18 in \citet{mele_structural_2017}]\label{thm:graphon_convergence}
    Let $\Tilde{M}^*$ be the set of maximizers of the variational problem \eqref{eq:partition_func_convergence}. Let $G_N$ be a graph on $N$ vertices drawn from the model implied by function ${\cal T}$. Then, for any $\eta > 0$, there exist $C, \kappa > 0$ such that, for any $N$,
    \begin{align*}
        \mathbb{P}\{ \delta_\square(\Tilde{G}_N, \Tilde{M}^*) > \eta \} \le C e^{-N^2 \kappa},
    \end{align*}
    where $\mathbb{P}$ denotes the probability measure implied by the model.
\end{theorem}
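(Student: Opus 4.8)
The plan is to derive the statement from the large deviation principle for the directed Erd\"os-R\'enyi model (Theorem \ref{thm:er_ldp}) by an exponential tilting argument, in the style of \citet{chatterjee_estimating_2013} and \citet{mele_structural_2017}. First I would realize $\pi_N$ as a tilt of the uniform distribution on ${\cal G}_N$: since that uniform measure assigns mass $2^{-N(N-1)}$ to every graph and pushes forward to $\tilde{\mathbb{P}}_{N,1/2}$ on $\tilde{{\cal W}}$, the law of $\tilde{G}_N$ under $\pi_N$ is, up to the normalizer $e^{N^2\psi_N}$, the measure $e^{N^2 {\cal T}(\tilde h)}\,\tilde{\mathbb{P}}_{N,1/2}(d\tilde h)$. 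The bookkeeping identity that makes the entropy appear is $I_{1/2}(u) = \log 2 - H(u)$, so ${\cal I}_{1/2}(\tilde h) = \log 2 - {\cal H}(\tilde h)$; consequently minimizing ${\cal I}_{1/2} - {\cal T}$ is the same as maximizing ${\cal T} + {\cal H}$, and the minimizers of the tilted rate function are exactly the set $\tilde{M}^*$ from the statement.

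Second, I would apply Varadhan's lemma on the compact metric space $(\tilde{{\cal W}}, \delta_\square)$. Because ${\cal T}$ is bounded and $\delta_\square$-continuous --- it is a fixed polynomial in the homomorphism densities $t(m,\cdot)$, and those are $\delta_\square$-continuous --- the integrability hypothesis of Varadhan's lemma is immediate and exponential tightness is free from compactness. Hence the LDP of Theorem \ref{thm:er_ldp} for $\tilde{\mathbb{P}}_{N,1/2}$ transfers to an LDP for the laws of $\tilde{G}_N$ under $\pi_N$, at speed $N^2$, with good rate function
\[
{\cal J}(\tilde h) \;=\; \psi - {\cal T}(\tilde h) - {\cal H}(\tilde h), \qquad \psi \;:=\; \sup_{\tilde h \in \tilde{{\cal W}}}\big[{\cal T}(\tilde h) + {\cal H}(\tilde h)\big],
\]
with $\psi$ the same constant that appears in Theorem \ref{thm:partition_func_convergence}. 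Since ${\cal I}_{1/2}$ is lower semicontinuous (being a rate function) and ${\cal T}$ continuous, ${\cal J}$ is lower semicontinuous, nonnegative, and $\{{\cal J} = 0\} = \tilde{M}^*$, which is therefore nonempty and compact.

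Third, I would finish with a strict-gap estimate. Fix $\eta > 0$ and set $\tilde{A}_\eta := \{\tilde h \in \tilde{{\cal W}} : \delta_\square(\tilde h, \tilde{M}^*) > \eta\}$, a closed subset of $\tilde{{\cal W}}$ disjoint from $\tilde{M}^*$. By compactness of $\tilde{A}_\eta$ and lower semicontinuity of ${\cal J}$, the infimum $\kappa_0 := \inf_{\tilde h \in \tilde{A}_\eta}{\cal J}(\tilde h)$ is attained; if it were $0$ its minimizer would lie in $\tilde{M}^* \cap \tilde{A}_\eta = \varnothing$, so $\kappa_0 > 0$. The LDP upper bound applied to the closed set $\tilde{A}_\eta$ then gives $\limsup_N \tfrac{1}{N^2}\log \mathbb{P}\{\delta_\square(\tilde{G}_N, \tilde{M}^*) > \eta\} \le -\kappa_0$. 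Picking any $\kappa \in (0,\kappa_0)$ there is $N_0$ with $\mathbb{P}\{\delta_\square(\tilde{G}_N,\tilde{M}^*) > \eta\} \le e^{-N^2\kappa}$ for all $N \ge N_0$, while for $N < N_0$ the probability is trivially at most $1 \le e^{N_0^2\kappa}e^{-N^2\kappa}$; so $C := e^{N_0^2\kappa}$ delivers the uniform bound $\mathbb{P}\{\delta_\square(\tilde{G}_N,\tilde{M}^*)>\eta\} \le C e^{-N^2\kappa}$ for every $N$.

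The main obstacle is that the statement asks for a genuinely \emph{non-asymptotic} bound, one pair $(C,\kappa)$ good for all $N$ rather than just a $\limsup$. Absorbing finitely many small-$N$ terms into $C$ is formally sufficient, but a quantitatively honest proof --- the route in \citet{chatterjee_large_2011, chatterjee_estimating_2013} --- requires an effective (non-asymptotic) form of the Erd\"os-R\'enyi upper bound: cover $\tilde{A}_\eta$ by a finite $\delta_\square$-net, bound the contribution of each net cell to $\sum_{G}e^{N^2 {\cal T}(\tilde G)}$ by an explicit counting/concentration argument, and keep the constants. A subsidiary point to check is that ${\cal H}$ is upper semicontinuous and $\tilde{M}^*$ is $\delta_\square$-closed --- both follow from lower semicontinuity of ${\cal I}_{1/2}$ together with compactness of $\tilde{{\cal W}}$ --- since this is exactly what makes the strict-gap step $\kappa_0 > 0$ legitimate.
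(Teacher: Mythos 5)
The paper does not prove Theorem~\ref{thm:graphon_convergence} itself; it imports it from \citet{mele_structural_2017} (Theorem~18), which in turn adapts Theorem~3.2 of \citet{chatterjee_estimating_2013}. Your argument tracks that standard route correctly: viewing the pushforward of $\pi_N$ to $\tilde{{\cal W}}$ as an exponential tilt of $\tilde{\mathbb{P}}_{N,1/2}$; the bookkeeping identity $I_{1/2}(u)=\log 2 - H(u)$ that turns the Erd\"os--R\'enyi rate function into $\log 2 - {\cal H}$; the transferred LDP at speed $N^2$ with good rate function $\psi - {\cal T} - {\cal H}$ on the compact space $(\tilde{{\cal W}},\delta_\square)$; and the conversion of the $\limsup$ upper bound into a uniform-in-$N$ bound by choosing $\kappa<\kappa_0$ and absorbing the finitely many small $N$ into $C$. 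These are all sound and match the cited proof in structure.

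There is one small slip worth flagging. You write that $\tilde{A}_\eta := \{\tilde h : \delta_\square(\tilde h,\tilde{M}^*) > \eta\}$ is a closed subset of $\tilde{{\cal W}}$. Since $\tilde h \mapsto \delta_\square(\tilde h,\tilde{M}^*)$ is (Lipschitz-)continuous, $\tilde{A}_\eta$ is \emph{open}, not closed, so the LDP upper bound cannot be applied to it directly. The fix is immediate: bound $\mathbb{P}\{\delta_\square > \eta\} \le \mathbb{P}\{\delta_\square \ge \eta\}$ and apply the upper bound to the closed set $\{\delta_\square \ge \eta\}$, which is still compact and disjoint from $\tilde{M}^*$; the same lower-semicontinuity/compactness argument gives $\kappa_0 := \inf_{\{\delta_\square \ge \eta\}} {\cal J} > 0$. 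With that repair the proof is complete and consistent with the literature the theorem is drawn from. Your closing caveat about a "quantitatively honest" non-asymptotic proof is well taken, but not necessary: absorbing the finitely many initial $N$ into $C$ is exactly how the cited results state and use the bound.
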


Building on these results, we can characterize the limiting behavior of models with heterogeneity in the nodes.

\subsection{Colored graphons} \label{subsec:colored_graphons}
In order to account for node heterogeneity, I use the framework of \textit{colored graphs}, particularly as laid out in \citet{diao_model-free_2016}. Let $\Theta$ be a set of colors (or types in our case). A directed colored graph $Q$ is a tuple $(V,A,C)$, where $C:V \to \Theta$ represents the coloring of the vertices of $Q$. Finally, let ${\cal Q}_N^\Theta$ be the set of colored graphs with color set $\Theta$.

Similar to the case of uncolored graphs, we can consider a way to ``probe'' colored graphs using the homomorphism densities of other colored graphs into them. Let $R$ and $Q$ be directed colored graphs. We define $|\textrm{hom}_\Theta(R,Q)|$ to be the number of homomorphisms of $(V(R),A(R))$ into $(V(Q),A(Q))$ that preserve the coloring of the vertices. The \textit{colored homomorphism density} of $R$ into $Q$, then, is
\begin{align}
    t_\Theta(R,Q) \coloneqq \frac{|\textrm{hom}_\Theta(R,Q)|}{|V(Q)|^{|V(R)|}}.
\end{align}
As in the uncolored case, we can characterize the convergence of a sequence of colored graphs by studying their colored homomorphism densities. In order to do this, we need to define appropriate limiting objects.

A colored directed graphon is a pair $q \coloneqq (h[q], c[q])$, where $h[q] \in {\cal W}$ is a directed graphon and $c[q]$ is a measurable function $c[q]: [0,1] \to \Theta$, which I denote the \textit{coloring}. The space of colored directed graphons is denoted with ${\cal W}_\Theta$. The colored homomorphism density of $R$ into $q$, assuming $V(R) = \{1,\ldots,k\}$, is defined as
\begin{align}
    t_\Theta(R,q) \coloneqq \int_{[0,1]^{k}} \left( \prod_{(i,j) \in A(R)} h[q](x_i,x_j) \right) \left( \prod_{i=1}^k \mathbbm{1}\{ c[q](x_i) = C(R)_i \} \right) \, dx_1 \ldots dx_{k}.
\end{align}
As in the uncolored case, there is a canonical representation of colored graphs as colored graphons. For a given colored graph $Q$, its associated graphon $q^Q = (h[q^Q], c[q^Q])$ is given by
\begin{align}
    h[q^Q] = h^{(V(Q),A(Q))}, \quad c[q^Q](x) = C(Q)_{\lceil |V(Q)| x \rceil}.
\end{align}
Note that for the uncolored graph $G=(V(Q),A(Q))$, we have $h[q^Q] = h^G$.

We can also construct a topology in the space of colored graphons. In order to do this, we need a new notion of distance. The \textit{colored cut distance} between two graphons is given by
\begin{align}
    d_\square^\Theta(q_1, q_2) \coloneqq d_\square(h[q_1], h[q_2]) + d_\Theta(c[q_1],c[q_2]),
\end{align}
where
\begin{align}
    d_\Theta(c, c') \coloneqq \frac{1}{2} \sum_{\theta \in \Theta} \int_{[0,1]} \mathbbm{1}\{ x \in c^{-1}(\theta) \Delta c'^{-1}(\theta) \} \, dx
\end{align}
is a distance on the space of colorings.\footnote{Equivalently, it is the Lebesgue measure of the set where $c$ and $c'$ do not agree.} The equivalence class in this space is defined similarly to the uncolored case, but we have to take into account that the bijections $\sigma$ also act on the colorings. Define the equivalence relation $\sim_\Theta$ by setting $q_1 \sim_\Theta q_2$ if $h[q_1](x,y) = h[q_2^\sigma](x,y) = h[q_2](\sigma x, \sigma y)$ and $c[q_1](x) = c[q_2^\sigma](x) = c[q_2](\sigma x)$ for some $\sigma \in \Sigma$. Define $\tilde{q}$ to be the equivalence class of $q$, and let $\tilde{{\cal W}}_\Theta \coloneqq {\cal W}_\Theta/\sim_\Theta$. Additionally, let $\tau_\Theta$ be the map such that $\tau_\Theta(q) = \tilde{q}$. For any finite colored graph $Q$, let $\tilde{Q} \coloneqq \tilde{q}^Q \in \tilde{{\cal W}}_\Theta$ be its corresponding orbit. The natural distance on this space is given by
\begin{align}
    \delta_\square^\Theta(\tilde{q}_1, \tilde{q}_2) \coloneqq \inf_\sigma d_\square^\Theta(q_1, q_2^\sigma).
\end{align}
It is also convenient to define an analogous equivalence on the space of colorings. Let ${\cal C}$ be the set of measurable functions $c: [0,1] \to \Theta$ and define $\tilde{{\cal C}}$ as the corresponding space of orbits. Define the coloring distance in this space as
\begin{align}
    \delta_\Theta(\tilde{c}_1,\tilde{c}_2) = \inf_\sigma d_\Theta(c_1,c_2^\sigma).
\end{align}
Note that every coloring orbit $\tilde{c}$ is uniquely identified (up to null sets) with a color vector ${\bf w}(\tilde{c}) \coloneqq (\lambda(c^{-1}(\theta))_{\theta \in \Theta} \in \Delta(\Theta)$, where $\lambda$ is the Lebesgue measure, for any representative $c \in \tilde{c}$. Therefore, we can also uniquely define a coloring orbit $\tilde{c}_{{\bf w}}$ for a given color vector ${\bf w}$. This equivalence with color vectors allows us to prove the following property of the space of colorings:
\begin{lemma}
    The space $(\tilde{{\cal C}}_\Theta, \delta_\Theta)$ is compact.
\end{lemma}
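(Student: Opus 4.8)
The plan is to identify $(\tilde{\mathcal{C}}_\Theta,\delta_\Theta)$ isometrically with the probability simplex $\Delta(\Theta)$ carrying the total-variation metric, and then invoke finite-dimensional compactness. For a measurable coloring $c$ write ${\bf w}(c) = (\lambda(c^{-1}(\theta)))_{\theta\in\Theta}$, set $\iota(\tilde c) \coloneqq {\bf w}(c)$ for any representative $c$ of $\tilde c$, and let $d_{\mathrm{TV}}({\bf w},{\bf w}') \coloneqq \tfrac12\sum_{\theta\in\Theta}|w_\theta - w'_\theta|$. I would show that $\iota$ is a surjective isometry from $(\tilde{\mathcal{C}}_\Theta,\delta_\Theta)$ onto $(\Delta(\Theta),d_{\mathrm{TV}})$; compactness of $\tilde{\mathcal{C}}_\Theta$ then follows because $\Delta(\Theta)$ is a closed bounded subset of the finite-dimensional space $\mathbb{R}^\Theta$ (and, as a byproduct, this confirms that $\delta_\Theta$ is a genuine metric, not merely a pseudometric).

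First I would collect the bookkeeping already granted in the remark preceding the lemma: measure-preserving bijections preserve each $\lambda(c^{-1}(\theta))$, so $\iota$ is well defined on orbits; the level sets of $\iota$ are exactly the $\Sigma$-orbits (two colorings with the same color vector lie in a common orbit -- this is the standard isomorphism theorem for nonatomic standard probability spaces, which is precisely the identification recorded in that remark); and every ${\bf w}\in\Delta(\Theta)$ is attained, e.g. by the step coloring on consecutive intervals of lengths $w_\theta$. Hence $\iota$ is a bijection. I would also note that $d_\Theta$ is $\Sigma$-invariant, so $\delta_\Theta(\tilde c_1,\tilde c_2) = \inf_\sigma d_\Theta(c_1,c_2^\sigma)$ does not depend on the chosen representatives, and since $\{c_2^\sigma : \sigma\in\Sigma\}$ is exactly the set of all measurable colorings with color vector ${\bf w}_2 \coloneqq \iota(\tilde c_2)$, fixing any representative $c_1$ of $\tilde c_1$ gives
\[
  \delta_\Theta(\tilde c_1,\tilde c_2) \;=\; \inf\bigl\{\,d_\Theta(c_1,c') \;:\; c' \text{ a measurable coloring with } {\bf w}(c') = {\bf w}_2 \,\bigr\}.
\]

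Next comes the content: proving $\delta_\Theta(\tilde c_1,\tilde c_2) = d_{\mathrm{TV}}({\bf w}_1,{\bf w}_2)$, where ${\bf w}_i = \iota(\tilde c_i)$. For the lower bound, any admissible $c'$ satisfies $d_\Theta(c_1,c') = \tfrac12\sum_\theta(w_{1,\theta} + w_{2,\theta} - 2\lambda(c_1^{-1}(\theta)\cap c'^{-1}(\theta)))$, and since $\lambda(c_1^{-1}(\theta)\cap c'^{-1}(\theta)) \le \min(w_{1,\theta},w_{2,\theta})$ we get $d_\Theta(c_1,c') \ge \tfrac12\sum_\theta|w_{1,\theta}-w_{2,\theta}|$. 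For the upper bound I would realize the maximal coupling explicitly: choose $S_\theta \subseteq c_1^{-1}(\theta)$ with $\lambda(S_\theta) = \min(w_{1,\theta},w_{2,\theta})$ and put $c' \equiv \theta$ on $S_\theta$, then partition the leftover set $[0,1]\setminus\bigcup_\theta S_\theta$ into measurable pieces $T_\theta$ of measure $(w_{2,\theta}-w_{1,\theta})^+$ and put $c'\equiv\theta$ on $T_\theta$; this fits because $\sum_\theta (w_{2,\theta}-w_{1,\theta})^+ = \sum_\theta (w_{1,\theta}-w_{2,\theta})^+$ is exactly the leftover measure. The resulting $c'$ has color vector ${\bf w}_2$ and turns the previous inequality into an equality, so $\delta_\Theta(\tilde c_1,\tilde c_2) = d_{\mathrm{TV}}({\bf w}_1,{\bf w}_2)$.

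Finally I would assemble the pieces: $\iota$ is a bijective isometry onto the compact metric space $(\Delta(\Theta),d_{\mathrm{TV}})$, so $(\tilde{\mathcal{C}}_\Theta,\delta_\Theta)$ is compact. The step needing the most care is the reduction in the second paragraph -- that the $\Sigma$-orbit of a coloring coincides with the full level set of $\iota$ -- but this is exactly the identification asserted in the remark just before the lemma, so no new argument is required; the only genuinely hands-on computation is the explicit maximal-coupling construction in the upper bound, which is routine.
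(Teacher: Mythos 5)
Your proof is correct and follows essentially the same route as the paper's: identify $(\tilde{{\cal C}}_\Theta,\delta_\Theta)$ isometrically with the simplex $\Delta(\Theta)$, establish the lower bound via the overlap inequality $\lambda(c_1^{-1}(\theta)\cap c'^{-1}(\theta))\le\min(w_{1,\theta},w_{2,\theta})$, exhibit a coloring achieving the bound, and conclude compactness from Heine--Borel. One small point in your favor: tracking the factor of $\tfrac12$ in $d_\Theta$ carefully gives $\delta_\Theta(\tilde c_1,\tilde c_2)=\tfrac12\lVert{\bf w}_1-{\bf w}_2\rVert_1$ (the total-variation metric), as you wrote, rather than $\lVert{\bf w}_1-{\bf w}_2\rVert_1$ as stated in the paper's proof; this constant is immaterial for compactness but your version is the one that is literally an isometry.
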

\begin{proof}
    To prove this, I show that $(\tilde{{\cal C}}_\Theta, \delta_\Theta)$ is isometric, up to a factor,9 with $\Delta(\Theta)$ equipped with the $L^1$ norm. This is a closed and bounded subset of $\mathbb{R}^{|\Theta|}$, hence compact by Heine-Borel.

    Consider two colorings $\tilde{c}$ and $\tilde{c}'$. The sets that are assigned to $\theta$ cannot overlap over more than $\min\{w_\theta(\tilde{c}), w_\theta(\tilde{c}')\}$. This implies that $\delta_\Theta(\tilde{c}, \tilde{c}') \ge \lVert {\bf w}(\tilde{c}) - {\bf w}(\tilde{c}') \rVert_1$. Additionally, note that this bound can be achieved by ordering both colorings by intervals of length $\min\{w_\theta(\tilde{c}), w_\theta(\tilde{c}')\}$, and having the remaining intervals be mismatched. Hence,
    \begin{align}
        \delta_\Theta(\tilde{c}, \tilde{c}') = \frac{1}{2} \lVert {\bf w}(\tilde{c}) - {\bf w}(\tilde{c}') \rVert_1.
    \end{align}
    This proves the isometry (with a factor of $1/2$), which implies compactness of $(\tilde{{\cal C}}_\Theta, \delta_\Theta)$.
\end{proof}

The space $(\tilde{{\cal W}}_\Theta, \delta^\Theta_\square)$ has similar topological properties to $(\tilde{{\cal W}}, \delta_\square)$. There is one particular result that is of interest to us, since it allows us to characterize the limiting behavior of the measures induced by random graph models.
\begin{theorem}
    The space $(\tilde{{\cal W}}_\Theta, \delta_\square^\Theta)$ is compact.
\end{theorem}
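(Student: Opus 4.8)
The plan is to prove that $(\tilde{{\cal W}}_\Theta,\delta_\square^\Theta)$ is totally bounded and complete, hence compact. The point to keep in mind throughout is that $\delta_\square^\Theta$ is strictly finer than the product of $\delta_\square$ and $\delta_\Theta$: the infimum defining $\delta_\square^\Theta$ is over a \emph{single} measure-preserving bijection $\sigma$ acting simultaneously on the graphon and on the coloring, so one cannot just combine the compactness of $(\tilde{{\cal W}},\delta_\square)$ with that of $(\tilde{{\cal C}}_\Theta,\delta_\Theta)$ and take a product. A first reduction uses the preceding lemma: given any sequence of colored graphons, compactness of $(\tilde{{\cal C}}_\Theta,\delta_\Theta)\cong\Delta(\Theta)$ lets us pass to a subsequence whose color vectors ${\bf w}^{(n)}$ converge to some ${\bf w}^*$, and composing with bijections we may take every coloring to be in the ordered block form determined by ${\bf w}^{(n)}$, so that $d_\Theta(c_n,c^*)\to 0$ for the limiting coloring $c^*$. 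It then remains to control, for each ordered pair of colors $(\theta,\theta')$, the restriction of the graphon to the block $I_\theta\times I_{\theta'}$; since color-preserving bijections act blockwise, this is exactly the freedom available when these restrictions are treated as (bipartite) graphons.

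For total boundedness I would prove a weak regularity lemma in the colored directed setting: applying the Frieze--Kannan weak regularity lemma to the restriction of the graphon to each of the finitely many color boxes $I_\theta\times I_{\theta'}$ (rescaled to $[0,1]^2$) shows that every colored graphon lies within $\epsilon$ in $\delta_\square^\Theta$ of a \emph{colored step graphon} whose underlying partition refines the color partition into at most $k=k(\epsilon)$ pieces per color and which is constant on the resulting boxes. Modulo the color-preserving bijections, such an object is described by finitely many parameters ranging over compact sets --- the color vector in $\Delta(\Theta)$, the relative sub-block lengths in a product of standard simplices, and the box values in a cube $[0,1]^{O((kL)^2)}$ --- and the map sending these parameters to the corresponding point of $\tilde{{\cal W}}_\Theta$ is $\delta_\square^\Theta$-continuous; hence the colored step graphons of complexity at most $k$ form a compact, in particular totally bounded, subset. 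A finite $\epsilon$-net of this subset is a finite $2\epsilon$-net of $\tilde{{\cal W}}_\Theta$, so the whole space is totally bounded.

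For completeness I would adapt the limit construction underlying the Lov\'asz--Szegedy compactness theorem to the colored directed case: the coloring component of any $\delta_\square^\Theta$-Cauchy sequence is $\delta_\Theta$-Cauchy and, by the preceding lemma, pins down the limiting block structure, after which the graphon components are assembled blockwise. An alternative is to encode a colored graphon $q=(h,c)$ as an ordinary $[0,1]$-valued graphon by packing the triple $(c(x),c(y),h(x,y))$ into a single value at $(x,y)$ through disjoint ``digits'' (using $L$ fixed distinct values in $(0,1)$ to code the colors); since this encoding is linear in $h$ and its color-dependent terms depend on $x$ alone or on $y$ alone, it descends to a $\delta_\square^\Theta$-to-$\delta_\square$ continuous injection $\tilde{{\cal W}}_\Theta\hookrightarrow\tilde{{\cal W}}$, and one transports completeness back from the compact space $\tilde{{\cal W}}$ after checking that the image of this map is closed. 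Either way, total boundedness together with completeness yields compactness. The main obstacle is precisely this completeness step, and more broadly the fact that a single color-preserving bijection must govern the graphon and the coloring at once: one must verify both that the regularity approximation can be carried out compatibly with the color structure --- so the approximating family is genuinely finite-dimensional --- and that the limiting construction respects it. This coupling is the only substantive difficulty beyond the uncolored, undirected case, where the corresponding compactness is already known.
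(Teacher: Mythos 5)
Your approach is essentially the route the paper delegates to its citations: Theorem~3.7 of Diao et al.\ (2016), which proves compactness of the colored graphon space, adapted to directed graphs via a weak regularity lemma (the paper's Lemma~\ref{lem:weak_regularity}, mirroring Lemma~5 of Mele 2017). You correctly isolate the central subtlety --- a single measure-preserving bijection must govern the graphon and the coloring simultaneously --- and the reduction to block-form colorings via compactness of $\Delta(\Theta)$, followed by blockwise regularity for total boundedness, is exactly how that literature proceeds. Two points in your sketch would need tightening. First, applying Frieze--Kannan separately to each box $I_\theta\times I_{\theta'}$ yields a partition of $I_\theta$ that may depend on $\theta'$, while a single color-preserving bijection rearranges $I_\theta$ once for \emph{all} $\theta'$; you must pass to a common refinement across $\theta'$ (still of bounded size), or apply the regularity lemma once to the full graphon and intersect the resulting partition with the color partition, as Lemma~\ref{lem:weak_regularity} effectively sets up. Second, the encoding alternative does not sidestep the block-form reduction: a cut-norm limit of row-constant step functions $f(c_n(x))$ need not be a step function with values in $f(\Theta)$ unless you have already passed to a subsequence with convergent color vectors and put every $c_n$ in canonical block form, so the closedness of the image of the packing map --- which you flag as the main obstacle --- is the whole content of the reduction, repackaged rather than avoided.
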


\begin{proof}
    This proof follows similarly to the proof of Theorem 3.7 in \citet{diao_model-free_2016} for the undirected case. The strategy is to prove that any sequence of colored graphons $q_1, q_2, \ldots$ has a convergent subsequence. Since $(\tilde{{\cal W}}_\Theta, \delta_\square^\Theta)$ is a metric space, this would imply it is compact.

    First, note that there is a sequence of measure-preserving bijections $\sigma_n:[0,1] \to [0,1]$ such that the colored graphons $q_n^{\sigma_n}$ have colorings $c[q_n^{\sigma_n}]$ that are intervals that follow some fixed order of $\Theta$. Therefore, without loss of generality, we can consider that the colored graphons $q_n$ already have colorings arranged by intervals. Since the vectors ${\bf w}(\tilde{c}[q_n])$ lie on the simplex (which is compact), there exists a subsequence $n_k$ that converges to some color vector ${\bf w}_0$. This implies that the limit
    \begin{align}
        \lim_{k \to \infty} c[q_{n_k}](x) \eqqcolon c_0(x)
    \end{align}
    exists almost everywhere, and $\lambda(c[q_{n_k}]^{-1}(\theta) \Delta c_0^{-1}(\theta)) \to 0$ for all $\theta \in \Theta$. 
    
    The proof now follows the proof of Lemma 5 in \citet{mele_structural_2017}, which adapts the proof of Theorem 9.23 in \citet{lovasz_large_2012} to directed graphons. Since we know we can take a subsequence whose coloring converges from any sequence of colored graphons, suppose the colorings $c[q_n]$ converge. Now apply the argument from the proof of Lemma 5 in \citet{mele_structural_2017} to the graphons $h[q_n]$. Note that the partitions ${\cal P}_{n,1}$ can be chosen to respect the partition defined by $c[q_n]^{-1}(\theta)$ for $\theta \in \Theta$. Since for $k > 1$ the partitions ${\cal P}_{n,k}$ are refinements of this one, they also respect the coloring partition. The procedure in that proof then gives us a subsequence $q_{n_l}$ such that $d_\square(h[q_{n_l}],h_0)  \to 0$ for some graphon $h_0$.\footnote{The proof of Lemma 5 in \citet{mele_structural_2017} technically yields a sequence that converges under $\delta_\square$. However, it invokes Lemmas 3.1.20 and 3.1.21 in \citet{boeckner_directed_2013}, which state properties of convergence under $d_\square$. Using these results with $d_\square$ and carrying out the argument in the proof of Lemma 5 in \citet{mele_structural_2017} yields $d_\square(h[q_{n_l}],h_0) \to 0$.} Now, define the graphon $q_0 \coloneqq (h_0, c_0)$. We have that
    \begin{align}
        \delta_\square^\Theta(q_{n}, q_0) \le d_\square(h[q_n],h_0) + d_\Theta(c[q_n],c_0).
    \end{align}
    Under the subsequence $(n_l)_l$, both terms on the right vanish, meaning $q_{n_l} \to q_0$ in the metric space $(\tilde{{\cal W}}_\Theta, \delta_\square^\Theta)$.
\end{proof}

Having this topological property, we can now state the result that allows for the generalization of the graph convergence results to the colored case. Consider a sequence of empirical fractions of colors ${\bf w}^N$ that converges to some ${\bf w}$. Consider the measures $\Tilde{\mathbb{P}}^\Theta_{N,p}$ on $\tilde{{\cal W}}_\Theta$ generated by the following procedure:
\begin{enumerate}
    \item Take the set of $N$ nodes $\{1,\ldots,N\}$ and deterministically fix a coloring such that the coloring fractions are ${\bf w}_N$.
    \item Draw a directed Erd\H{o}s--R\'enyi random graph on the nodes with parameter $p$.
\end{enumerate}
Note that the specific coloring of the nodes does not matter, since one could relabel the nodes and run the same procedure without affecting the measures. The next theorem characterizes the limiting behavior of these measures, which later allows us to characterize the asymptotic behavior of the Gibbs measures for Theorem \ref{thm:mult_types_partition}.

\begin{theorem} \label{thm:colored_graphon_ldp}
    For each fixed $p \in (0,1)$, the sequence $\Tilde{\mathbb{P}}^\Theta_{N,p}$ satisfies a large deviation principle in the space ($\Tilde{{\cal W}}_\Theta$, $\delta_{\square}^\Theta$) at speed $N^2$ with rate function
    \begin{align}
        {\cal J}^\Theta_{{\bf w},p}(\tilde{q}) = 
        \begin{cases}
            {\cal I}_p(\tilde{h}[\tilde{q}]) & \textrm{if } \tilde{c}[\tilde{q}] = \tilde{c}_{{\bf w}}, \\
            \infty & \textrm{otherwise},
        \end{cases}
    \end{align}
    where ${\cal I}_p$ is defined in Equation \eqref{eq:er_rate_func}. Explicitly, for any closed set $\tilde{F} \subseteq \tilde{{\cal W}}_\Theta$,
    \begin{align}
        \limsup_{N \to \infty} \frac{1}{N^2} \log(\tilde{\mathbb{P}}^\Theta_{N,p}(\tilde{F})) \le - \inf_{\tilde{q} \in \tilde{F}} {\cal J}^\Theta_{{\bf w},p}(\tilde{q})
    \end{align}
    and for any open set $\tilde{U} \subseteq \tilde{{\cal W}}_\Theta$,
    \begin{align}
        \liminf_{N \to \infty} \frac{1}{N^2} \log(\tilde{\mathbb{P}}^\Theta_{N,p}(\tilde{U})) \ge - \inf_{\tilde{q} \in \tilde{U}} {\cal J}^\Theta_{{\bf w},p}(\tilde{q}).
    \end{align}
\end{theorem}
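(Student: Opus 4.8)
The plan is to deduce this colored large deviation principle from the uncolored one in Theorem~\ref{thm:er_ldp}, using that every color-block pair carries the \emph{same} parameter $p$: consequently $\tilde{\mathbb{P}}^\Theta_{N,p}$ is an uncolored Erd\"os--R\'enyi$(p)$ law on the edges together with a \emph{deterministic} coloring of color vector ${\bf w}^N$, and ${\bf w}^N\to{\bf w}$. I would first record that the maps $\pi_{\cal W}:\tilde{{\cal W}}_\Theta\to\tilde{{\cal W}}$ and $\pi_{\cal C}:\tilde{{\cal W}}_\Theta\to\tilde{{\cal C}}_\Theta$ forgetting the coloring and the edges respectively are $1$-Lipschitz for $\delta^\Theta_\square$ (each drops a nonnegative term inside the infimum defining it), hence continuous, and that $\pi_{{\cal W},*}\tilde{\mathbb{P}}^\Theta_{N,p}=\tilde{\mathbb{P}}_{N,p}$ while $\pi_{{\cal C},*}\tilde{\mathbb{P}}^\Theta_{N,p}=\delta_{\tilde c_{{\bf w}^N}}$; by the isometry $\delta_\Theta(\tilde c,\tilde c')=\lVert{\bf w}(\tilde c)-{\bf w}(\tilde c')\rVert_1$ proved above, $\delta_\Theta(\tilde c_{{\bf w}^N},\tilde c_{{\bf w}})=\lVert{\bf w}^N-{\bf w}\rVert_1\to 0$.

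For the \emph{upper bound}, take $\tilde F\subseteq\tilde{{\cal W}}_\Theta$ closed, hence compact since the space is compact. For fixed $\varepsilon>0$ put $K_\varepsilon:=\pi_{\cal C}^{-1}\big(\overline{B}(\tilde c_{{\bf w}},\varepsilon)\big)$, which is closed; once $\lVert{\bf w}^N-{\bf w}\rVert_1\le\varepsilon$ the sampled orbit lies in $K_\varepsilon$ almost surely, so $\tilde{\mathbb{P}}^\Theta_{N,p}(\tilde F)=\tilde{\mathbb{P}}^\Theta_{N,p}(\tilde F\cap K_\varepsilon)\le\tilde{\mathbb{P}}_{N,p}\big(\pi_{\cal W}(\tilde F\cap K_\varepsilon)\big)$, with $\pi_{\cal W}(\tilde F\cap K_\varepsilon)$ compact, hence closed in $\tilde{{\cal W}}$. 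Theorem~\ref{thm:er_ldp} then gives $\limsup_N N^{-2}\log\tilde{\mathbb{P}}^\Theta_{N,p}(\tilde F)\le-L_\varepsilon$ with $L_\varepsilon:=\inf\{{\cal I}_p(\tilde h[\tilde q]):\tilde q\in\tilde F,\ \delta_\Theta(\tilde c[\tilde q],\tilde c_{{\bf w}})\le\varepsilon\}$. Letting $\varepsilon\downarrow0$, $L_\varepsilon$ increases; a short compactness argument (choose $\tilde q_k\in\tilde F\cap K_{1/k}$ nearly attaining $L_{1/k}$, pass to a convergent subsequence by compactness of $\tilde{{\cal W}}_\Theta$, and use continuity of $\pi_{\cal C}$ and lower semicontinuity of the good rate function ${\cal I}_p$) identifies the limit with $\inf\{{\cal I}_p(\tilde h[\tilde q]):\tilde q\in\tilde F,\ \tilde c[\tilde q]=\tilde c_{{\bf w}}\}=\inf_{\tilde q\in\tilde F}{\cal J}^\Theta_{{\bf w},p}(\tilde q)$, which is the claimed bound.

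For the \emph{lower bound}, take $\tilde U$ open; if no $\tilde q\in\tilde U$ has $\tilde c[\tilde q]=\tilde c_{{\bf w}}$ the bound is vacuous, so fix such a $\tilde q_0\in\tilde U$ with ${\cal I}_p(\tilde h[\tilde q_0])<\infty$ and a representative $q_0=(h_0,c_{{\bf w}})$ with $c_{{\bf w}}$ the canonical ordered coloring and subgraphons $(h_0)_{\theta\theta'}$. Restricting the infimum defining $\delta^\Theta_\square$ to color-preserving (hence blockwise) relabelings shows that any orbit with nearly-canonical coloring satisfies $\delta^\Theta_\square(\tilde q,\tilde q_0)\le\lVert{\bf w}(\tilde c[\tilde q])-{\bf w}\rVert_1+\sum_{\theta,\theta'}w_\theta w_{\theta'}\,\delta_\square\big(\widetilde{(h[\tilde q])_{\theta\theta'}},\widetilde{(h_0)_{\theta\theta'}}\big)$, so for a ball $B(\tilde q_0,\delta)\subseteq\tilde U$, suitably small $\delta''$, and large $N$, the event that each sampled subgraphon is $\delta''$-cut-close to $\widetilde{(h_0)_{\theta\theta'}}$ forces the sampled orbit into $\tilde U$. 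Under $\tilde{\mathbb{P}}^\Theta_{N,p}$ these blocks are mutually independent rectangular Erd\"os--R\'enyi$(p)$ arrays of aspect ratios converging to $w_\theta/w_{\theta'}$ (finite since ${\bf w}$ has full support), so the rectangular analogue of Theorem~\ref{thm:er_ldp}, applied blockwise and multiplied, gives
\begin{align*}
\liminf_N\ \frac{1}{N^2}\log\tilde{\mathbb{P}}^\Theta_{N,p}(\tilde U)\ &\ge\ -\sum_{\theta,\theta'}w_\theta w_{\theta'}\int_{[0,1]^2}I_p\big((h_0)_{\theta\theta'}(x,y)\big)\,dx\,dy\\
&=\ -\int_{[0,1]^2}I_p(h_0(x,y))\,dx\,dy\ =\ -{\cal I}_p(\tilde h[\tilde q_0]),
\end{align*}
the first equality being the change of variables defining the subgraphons; optimizing over admissible $\tilde q_0\in\tilde U$ finishes the argument. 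The main obstacle is exactly this decoupling step: because $\delta^\Theta_\square$ infimizes over a \emph{single} relabeling acting jointly on edges and colors, one must verify that cut-closeness can be realized block by block without disturbing the alignment of the colorings — which is where the directed colored-graphon regularity behind Theorem~3.7 of \citet{diao_model-free_2016} and Lemma~5 of \citet{mele_structural_2017} is used — together with the routine rectangular extension of the Chatterjee--Varadhan estimate underlying Theorem~\ref{thm:er_ldp}. The upper bound, by contrast, is a soft consequence of forgetting colors, squeezing the coloring component to $\tilde c_{{\bf w}}$, and compactness.
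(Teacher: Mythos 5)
Your upper-bound argument is sound and is genuinely softer than what the paper does: pushing forward by the color-forgetting map, squeezing the coloring component to $\tilde c_{{\bf w}}$ via $K_\varepsilon$, and using compactness plus lower semicontinuity of ${\cal I}_p$ cleanly delivers the limsup half. The paper, by contrast, does not split the LDP into separate upper and lower arguments; it pushes $\tilde{\mathbb{P}}^\Theta_{N,p}$ forward through the non-injective map $\Psi(\tilde q) = (\tilde h[\tilde q], \tilde c[\tilde q])$ onto the compact product $\tilde{{\cal W}} \times \tilde{{\cal C}}_\Theta$, obtains the product LDP there (Lemma on product measures), and then lifts it back to $\tilde{{\cal W}}_\Theta$ via Bryc's theorem through the inverse-contraction Lemma, whose uniform-fiber-approximation hypothesis is discharged by the $N$-permutation estimate of Lemma on fiber density (itself a weak-regularity argument).

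The lower bound is where your proposal has a genuine gap. The central inequality you assert,
\begin{align}
    \delta^\Theta_\square(\tilde q,\tilde q_0) \le \lVert{\bf w}(\tilde c[\tilde q])-{\bf w}\rVert_1 + \sum_{\theta,\theta'} w_\theta w_{\theta'}\,\delta_\square\big(\widetilde{(h[\tilde q])_{\theta\theta'}},\widetilde{(h_0)_{\theta\theta'}}\big),
\end{align}
does not hold as written, and this is not a minor technicality you can outsource. On the right-hand side each block distance $\delta_\square$ is infimized over its \emph{own} measure-preserving bijection of $[0,1]^2$. But a single node of color $\theta$ contributes one row coordinate that is shared by \emph{every} block $(\theta,\theta')$, $\theta' \in \Theta$, and one column coordinate shared by every block $(\theta',\theta)$; the colored cut distance on the left infimizes over a \emph{single} measure-preserving bijection of $[0,1]$ that must act consistently on all these rows and columns at once. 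Because the infimum of a sum dominates the sum of the infima, your bound is oriented the wrong way: a collection of per-block optimal relabelings need not stitch into one global relabeling, so block-wise cut-closeness does not automatically imply $\delta^\Theta_\square$-closeness. You flag this yourself as ``the main obstacle'' and point at Theorem~3.7 of \citet{diao_model-free_2016} and Lemma~5 of \citet{mele_structural_2017}, but those results are compactness/regularity statements for (colored) graphon spaces; neither provides the block-level decoupling you need. The paper resolves precisely this alignment problem differently, via Lemma on fiber density, which exhibits a single $N$-permutation approximating any member of a fiber, constructed through the weak regularity lemma and an integral-transport argument (Baranyai's rounding). Separately, the ``rectangular extension'' of the Chatterjee--Varadhan LDP that you invoke per block is not stated or cited anywhere in the paper; it is plausible but would need its own proof, whereas the paper's route avoids ever needing a bipartite/rectangular LDP. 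So: the upper bound stands as an alternative, shorter argument, but the lower bound as proposed does not go through without substantially new work equivalent to what the paper's fiber-density lemma supplies.
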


\begin{proof}
    In order to prove the LDP result on the space $\tilde{{\cal W}}_\Theta$, I will use the fact that an LDP is satisfied for the uncolored graphons. Given the structure of the process that generates the random colored graphons, enough structure is preserved (as needed in Lemma \ref{lem:inverse_contraction_ldp}) to conclude that these also satisfy an LDP.

    First, I will prove that an LDP is satisfied in the product space ${\cal W} \times {\cal C}_\Theta$, with the product topology. Define the function $\Psi: \tilde{{\cal W}}_\Theta \to \tilde{{\cal W}} \times \tilde{C}_\Theta$ by
    \begin{align}
        \Psi(\tilde{q}) \coloneqq (\tilde{h}[\tilde{q}], \tilde{c}[\tilde{q}]).
    \end{align}
    This function ``separates'' the orbit of a colored graphon into its uncolored graphon and its coloring. Clearly, many colored graphons can have the same components $(\tilde{h}, \tilde{c})$, so the function $\Psi$ is not injective.
    
    Note that the product topology can be induced by the metric
    \begin{align}
        \delta_{\otimes}^{\Theta}((\tilde{h}_1, \tilde{c}_1), (\tilde{h}_2, \tilde{c}_2)) \coloneqq \delta_\square(\tilde{h}_1, \tilde{h}_2) + \delta_\Theta(\tilde{c}_1, \tilde{c}_2).
    \end{align}
    It is easy to see that $\Psi$ is continuous using this metric. Indeed, by definition of $\delta_\square^\Theta$,
    \begin{align}
        \delta_\square^\Theta(\tilde{q}_1, \tilde{q}_2) &= \inf_\sigma \left[ d_\square(h[q_1], h[q_2^\sigma]) + d_\Theta(c[q_1],c[q_2^\sigma]) \right] \nonumber \\
        &\ge  \delta_{\otimes}^{\Theta}((\tilde{h}[\tilde{q}_1], \tilde{c}[\tilde{q}_1]), (\tilde{h}[\tilde{q}_2], \tilde{c}[\tilde{q}_2])) \nonumber \\
        &= \delta_{\otimes}^{\Theta}(\Psi(\tilde{q}_1), \Psi(\tilde{q}_2)),
    \end{align}
    the mapping is Lipschitz and, therefore, continuous. It is also surjective, since for any orbits $\tilde{h}$ and $\tilde{c}$, a colored graphon can be constructed from elements of these orbits, and has a corresponding orbit in $\tilde{{\cal W}}_\Theta$. Also note that since $\tilde{{\cal W}}$ and $\tilde{{\cal C}}_\Theta$ are compact metric spaces, so is $\tilde{{\cal W}} \times \tilde{{\cal C}}_\Theta$.

    Let us show that the pushforward measures $\tilde{\mathbb{P}}^\Theta_{N,p} \circ \Psi^{-1}$ satisfy an LDP. Note that these are simply the product measure of the Erd\H{o}s--R\'enyi measure on $\tilde{{\cal W}}$ and the Dirac measure centered on the correct empirical coloring on $\tilde{{\cal C}}_\Theta$:
    \begin{align}
        \tilde{\mathbb{P}}^\Theta_{N,p} \circ \Psi^{-1} = \Tilde{\mathbb{P}}_{N,p} \otimes \delta_{\tilde{c}_{{\bf w}^N}},
    \end{align}
    since the uncolored graphons are generated independent of the colorings. The complex structure that comes from coupling the graphon to the coloring is lost once they are decoupled using $\Psi$. Because the empirical distributions converge to ${\bf w}$, it follows that the measures $\delta_{\tilde{c}_{{\bf w}^N}}$ satisfy an LDP at \textit{any} speed with rate function
    \begin{align}
        {\cal I}^\Theta_{{\bf w}}(\tilde{c}) =
        \begin{cases}
            0 & \textrm{if } \tilde{c} = \tilde{c}_{{\bf w}}, \\
            \infty & \textrm{otherwise}.
        \end{cases}
    \end{align}
    In particular, this implies that these measures satisfy a large deviations principle at speed $N^2$. Additionally, note that ${\cal I}^\Theta_{{\bf w}}$ and ${\cal I}_p$ are good rate functions, since $\tilde{{\cal C}}_\Theta$ and $\tilde{{\cal W}}$ are compact metric spaces. Using Lemma \ref{lem:prod_measures_ldp} and Theorem \ref{thm:er_ldp}, we have that the measures $\tilde{\mathbb{P}}^\Theta_{N,p} \circ \Psi^{-1}$ satisfy a large deviations principle on the product space $\tilde{{\cal W}} \times \tilde{{\cal C}}_\Theta$ at speed $N^2$ with rate function
    \begin{align}
        {\cal I}^{\Theta,\textrm{prod}}_{{\bf w}, p}(\tilde{h}, \tilde{c}) =
        \begin{cases}
            {\cal I}_p(\tilde{h}) & \textrm{if } \tilde{c} = \tilde{c}_{{\bf w}}, \\
            \infty & \textrm{otherwise.}
        \end{cases}
    \end{align}

    Having shown that the pushforward measures satisfy an LDP in $\tilde{{\cal W}} \times \tilde{{\cal C}}_\Theta$, we need to show that the conditions in Lemma \ref{lem:inverse_contraction_ldp} are satisfied to obtain an LDP in $\tilde{{\cal W}}_\Theta$. 
    
    First, note that the support of $\Tilde{\mathbb{P}}_{N,p} \otimes \delta_{\tilde{c}_{{\bf w}^N}}$ is composed of the orbits $(\tilde{G},\tilde{c}_{{\bf w}^N})$, where $G \in {\cal G}_N$ are directed graphs over $N$ vertices, which is a discrete set. The fibers $\Psi^{-1}((\tilde{G},\tilde{c}_{{\bf w}_N}))$ are not discrete, but they contain the set of colored directed graphs with color fractions ${\bf w}_N$ and graphon orbit $\tilde{G}$. We will choose this to be our set $\Upsilon_N$ (in the notation of Lemma \ref{lem:inverse_contraction_ldp}):
    \begin{align}
        \Upsilon_N((\tilde{G},\tilde{c}_{{\bf w}^N})) \coloneqq \Psi^{-1}((\tilde{G},\tilde{c}_{{\bf w}^N})) \cap \tilde{{\cal Q}}_N^\Theta,
    \end{align}
    where $\tilde{{\cal Q}}_N^\Theta \subset \tilde{{\cal W}}_\Theta$ is the set of colored graphon orbits generated from directed colored graphons over $N$ nodes with colorings over $\Theta$.

    I now relate the probability associated to a colored graphon to the probability of its corresponding uncolored graphon. Consider colored graphs over $N$, let $\tau$ be a permutation of $\{1,\ldots,N\}$ and let $C_0^N$ be the (fixed) coloring used to generate $\tilde{\mathbb{P}}_{N,p}^\Theta$. For a finite graph $G$ with $N$ nodes, the probability associated to $\tilde{G}$ is
    \begin{align}
        \tilde{\mathbb{P}}_{N,p}(\{\tilde{G}\}) = p^{|A(G)|} (1-p)^{N(N-1)-|A(G)|} \left(\frac{N!}{\sum_{\tau} \mathbbm{1}\{\tau A(G) = A(G)\}} \right).
    \end{align}    
    The first factor is the probability of drawing a specific representative directed graph. The second counts the number of directed graphs with orbit $\tilde{G}$: it counts all re-labelings of the nodes and takes care of degeneracy by counting the automorphisms of $G$. Similarly, for a colored graph $Q$ over $N$ nodes (with the correct coloring), its associated probability is
    \begin{align}
        \tilde{\mathbb{P}}^\Theta_{N,p}(\{\tilde{Q}\}) = p^{|A(Q)|} (1-p)^{N(N-1)-|A(Q)|} \left( \frac{\prod_{\theta \in \Theta} (N w_\theta^N)!}{\sum_{\tau} \mathbbm{1}\{\tau A(Q) = A(Q), \tau C(Q) = C_0^N\}} \right).
    \end{align}
    Now the re-labelings happen within each color, and the degeneracy must also take into account permutations that respect the coloring. Since the matching of the coloring is an additional restriction, we can see that for the restricted graph $G = (V(Q),A(Q))$:
    \begin{align}
        \tilde{\mathbb{P}}^\Theta_{N,p}(\{\tilde{Q}\}) \ge \frac{\left( \prod_{\theta \in \Theta} (N w_\theta^N)! \right)}{N!} \tilde{\mathbb{P}}_{N,p}(\{\tilde{G}\}).
    \end{align}
    Define the weights $\alpha_N(\tilde{Q},(\tilde{G},\tilde{c}_{{\bf w}^N}))$ by
    \begin{align}
        \alpha_N(\tilde{Q},(\tilde{G},\tilde{c}_{{\bf w}^N}) = \tilde{\mathbb{P}}^\Theta_{N,p}(\{\tilde{Q}\} | \{\tilde{G}\}) \quad \textrm{for } \tilde{Q} \in \Upsilon_N((\tilde{G},\tilde{c}_{{\bf w}^N})).
    \end{align}
    From the argument above, we have the following bound on these weights:
    \begin{align}
        \alpha_N(\tilde{Q},(\tilde{G},\tilde{c}_{{\bf w}^N}) \ge \frac{\left( \prod_{\theta \in \Theta} (N w_\theta^N)! \right)}{N!}.
    \end{align}
    It is useful to note the Stirling bounds on the factorial:
    \begin{align}
        \sqrt{2 \pi n}\, n^{\,n+\tfrac{1}{2}} e^{-n} \le n! \le e\, n^{\,n+\tfrac{1}{2}} e^{-n}.
    \end{align}
    With this, we conclude that
    \begin{align}
        \liminf_{N \to \infty} \inf_{G \in {\cal G}_N} \min_{\tilde{Q} \in \Upsilon_N((\tilde{G},\tilde{c}_{{\bf w}^N}))} \frac{1}{N^2} \log(\alpha_N(\tilde{Q},(\tilde{G},\tilde{c}_{{\bf w}^N}))) = 0.
    \end{align}
    Hence, the subexponential weight floor condition in Lemma \ref{lem:inverse_contraction_ldp} is satisfied.

    The uniform fiber approximation condition is satisfied by the result in Lemma \ref{lem:fiber_density}, with
    \begin{align}
        \epsilon_N = \frac{\kappa_1}{\sqrt{\log(N)}} + 2|\Theta| N^{-\kappa_2}
    \end{align}
    for some constants $\kappa_1$ and $\kappa_2$. Therefore, applying Lemma \ref{lem:inverse_contraction_ldp} we have that $\Tilde{\mathbb{P}}^\Theta_{N,p}$ satisfies a large deviations principle on $\tilde{{\cal W}}_\Theta$ at speed $N^2$ with rate function ${\cal J}^\Theta_{{\bf w}, p}$. This completes the proof.
\end{proof}

Using the previous result, we can state the convergence theorems that yield Theorem \ref{thm:mult_types_partition} in the paper.

\begin{theorem} \label{thm:colored_graphon_partition}
    Consider a sequence of colorings $C_0^1, C_0^2, \ldots$ such that their empirical color fractions ${\bf w}^N$ converge to some ${\bf w}$. Additionally, consider an Exponential Random Graph model where the measure is given by
    \begin{align}
        \pi_N(G) = \exp\left\{ N^2 [{\cal U}(\Tilde{Q}(G)) - \psi_N] \right\},
    \end{align}
    where $Q(G) = (G, C_0^N)$ is the colored directed graph induced by $G$ and the coloring $C_0^N$. If ${\cal U}$ is a continuous bounded function on $\tilde{{\cal W}}_\Theta$, then the normalization constant $\psi_N$ given by
    \begin{align}
        \psi_N = \frac{1}{N^2} \log\left( \sum_{G \in {\cal G}_N} e^{N^2 {\cal U}(\tilde{Q}(G))} \right)
    \end{align}
    converges to
    \begin{align} \label{eq:colored_graphon_partition}
        \psi \coloneqq \sup_{\substack{\tilde{q} \in \Tilde{{\cal W}}_\Theta \\ \tilde{c}[\tilde{q}] = \tilde{c}_{{\bf w}}}} ({\cal U}(\Tilde{q}) + {\cal H}(\tilde{h}[\tilde{q}])).
    \end{align}
\end{theorem}

\begin{proof}
    This proof follows the same procedure as the proof of Theorem 3.1 in \citet{chatterjee_estimating_2013} for undirected graphons and Theorem 10 of \citet{mele_structural_2017} for directed graphons. Let $\tilde{B} \subseteq \tilde{{\cal W}}_\Theta$ be a Borel set. For each $N$, let $B_N \subseteq {\cal W}_\Theta$ be the (finite) set
    \begin{align}
        B_N \coloneqq \{ q \in {\cal W}_\Theta: \tilde{q} \in \tilde{B} \textrm{ and } q = q^Q \textrm{ for some } Q \textrm{ with } C(Q) = C_0^N \textrm{ and } (V(Q),A(Q)) \in {\cal G}_N \}.
    \end{align}
    That is, for all $q \in B_N$, there is a colored directed graph $Q$ that has the correct coloring $C_0^N$ and corresponds to a directed graph over $N$ nodes. Recall that $\tilde{\mathbb{P}}^\Theta_{N,p}$ corresponds to the measure induced on $\tilde{{\cal W}}_\Theta$ by fixing the correct coloring $C_0^N$ and sampling the edges using a directed Erd\H{o}s--R\'enyi model. Therefore, we have that
    \begin{align}
        |B_N| = 2^{N(N-1)} \tilde{\mathbb{P}}^\Theta_{N,1/2}(\tilde{B}).
    \end{align}
    
    Using the result from Theorem \ref{thm:colored_graphon_ldp}, we have that for all closed $\tilde{F} \subseteq \tilde{{\cal W}}_\Theta$,
    \begin{align}
        \limsup_{N \to \infty} \frac{1}{N^2} \log(\tilde{\mathbb{P}}^\Theta_{N,p}(\tilde{F})) &= \limsup_{N \to \infty} \frac{1}{N^2} \left[\log(|F_N|) - N(N-1) \log(2) \right]\nonumber \\
        &= \limsup_{N \to \infty} \frac{1}{N^2} \log(|F_N|) - \log(2) \nonumber \\
        &\le - \inf_{\tilde{q} \in \tilde{F}} {\cal J}^\Theta_{{\bf w},1/2}(\tilde{q}). 
    \end{align}
    This implies that
    \begin{align} \label{eq:closed_set_size_bound}
        \limsup_{N \to \infty} \frac{1}{N^2} \log(|F_N|) \le \log(2) - \inf_{\tilde{q} \in \tilde{F}} {\cal J}^\Theta_{{\bf w},1/2}(\tilde{q}).
    \end{align}
    Similarly, for open sets $\tilde{U} \subseteq \tilde{{\cal W}}_\Theta$ we have
    \begin{align} \label{eq:open_set_size_bound}
        \liminf_{N \to \infty} \frac{1}{N^2} \log(|U_N|) &\ge \log(2) - \inf_{\tilde{q} \in \tilde{U}} {\cal J}^\Theta_{{\bf w},1/2}(\tilde{q}).
    \end{align}

    Fix $\epsilon > 0$. Since ${\cal U}$ is bounded, there is a finite set $R$ such that the intervals $\{(a,a+\epsilon): a \in R\}$ cover the range of ${\cal U}$. For each $a \in R$, let $\tilde{F}^a \coloneqq {\cal U}^{-1}([a,a+\epsilon])$. By continuity of ${\cal U}$, the $\tilde{F}^a$ are closed. Now, note that
    \begin{align}
        e^{N^2 \psi_N} = \sum_{G \in {\cal G}_N} e^{N^2 {\cal U}(\tilde{Q}(G))} \le \sum_{a \in R} e^{N^2(a+\epsilon)} |F^a_N|.
    \end{align}
    The last inequality comes from counting the directed graphs that induce a colored graphon orbit in $\tilde{F}^a_N$, which yields $|F^a|$, and bounding ${\cal U}$ in this interval. We can further bound this by
    \begin{align}
        e^{N^2 \psi_N} \le |R| \sup_{a \in R} e^{N^2(a + \epsilon)} |F^a_N|.
    \end{align}
    Using Eq. \eqref{eq:closed_set_size_bound}, this yields
    \begin{align}
        \limsup_{N \to \infty} \psi_N \le \sup_{a \in R} \left( a + \epsilon + \log(2) - \inf_{\tilde{q} \in \tilde{F}^a} {\cal J}^\Theta_{{\bf w},1/2}(\tilde{q}) \right).
    \end{align}
    Since each $\tilde{q} \in \tilde{F}^a$ satisfies ${\cal U}(\tilde{q}) \ge a$, we have
    \begin{align}
        \sup_{\tilde{q} \in \tilde{F}^a} ({\cal U}(\tilde{q}) + \log(2) - {\cal J}^\Theta_{{\bf w},1/2}(\tilde{q})) \ge \sup_{\tilde{q} \in \tilde{F}^a} (a + \log(2) - {\cal J}^\Theta_{{\bf w},1/2}(\tilde{q})) = a + \log(2) - \inf_{\tilde{q} \in \tilde{F}^a} {\cal J}^\Theta_{{\bf w},1/2}(\tilde{q}).
    \end{align}
    Substituting above,
    \begin{align}
        \limsup_{N \to \infty} \psi_N &\le \epsilon + \sup_{a \in R} \left( \sup_{\tilde{q} \in \tilde{F}^a} ({\cal U}(\tilde{q}) + \log(2) - {\cal J}^\Theta_{{\bf w},1/2}(\tilde{q})) \right) \nonumber \\
        &= \epsilon + \sup_{\tilde{q} \in \tilde{{\cal W}}_\Theta} ({\cal U}(\tilde{q}) + \log(2) - {\cal J}^\Theta_{{\bf w},1/2}(\tilde{q})).
    \end{align}
    Our rate function selects colored graphon orbits that have the correct coloring fraction, so we can write
    \begin{align}
        \limsup_{N \to \infty} \psi_N \le \epsilon + \sup_{\substack{\tilde{q} \in \tilde{{\cal W}}_\Theta \\ \tilde{c}[\tilde{q}] = \tilde{c}_{{\bf w}}}} ({\cal U}(\tilde{q}) + \log(2) - \tilde{{\cal I}}_{1/2}(\tilde{h}[\tilde{q}])) = \epsilon + \sup_{\substack{\tilde{q} \in \tilde{{\cal W}}_\Theta \\ \tilde{c}[\tilde{q}] = \tilde{c}_{{\bf w}}}} ({\cal U}(\tilde{q}) + \tilde{{\cal H}}(\tilde{h}[\tilde{q}])) .
    \end{align}
    where ${\cal H}$ is as defined in Equation \eqref{eq:graphon_entropy}. 

    Similarly, for each $a$, let $\tilde{U}^a \coloneqq {\cal U}^{-1}((a,a+\epsilon))$, which is open by continuity of ${\cal U}$. Note that
    \begin{align}
        e^{N^2 \psi_N} \ge \sup_{a \in R} e^{N^2 a} |U^a_N|.
    \end{align}
    By Eq. \eqref{eq:open_set_size_bound},
    \begin{align}
        \liminf_{N \to \infty} \psi_N \ge \sup_{a \in R} \left( a + \log(2) - \inf_{\tilde{q} \in \tilde{U}^a} {\cal J}^\Theta_{{\bf w},1/2}(\tilde{q}) \right).
    \end{align}
    Each $\tilde{q} \in \tilde{U}^a$ satisfies ${\cal U}(\tilde{q}) \le a + \epsilon$. Therefore, 
    \begin{align}
        \sup_{\tilde{q} \in \tilde{U}^a} ({\cal U}(\tilde{q}) + \log(2) - {\cal J}^\Theta_{{\bf w},1/2}(\tilde{q})) &\le \sup_{\tilde{q} \in \tilde{U}^a} (a + \epsilon + \log(2) - {\cal J}^\Theta_{{\bf w},1/2}(\tilde{q})) \nonumber \\
        &= a + \epsilon + \log(2) - \inf_{\tilde{q} \in \tilde{U}^a} {\cal J}^\Theta_{{\bf w},1/2}(\tilde{q}).
    \end{align}
    Substituting in our previous bound yields
    \begin{align}
        \liminf_{N \to \infty} \psi_N &\ge -\epsilon + \sup_{a \in R} \left( \sup_{\tilde{q} \in \tilde{U}^a} ({\cal U}(\tilde{q}) + \log(2) - {\cal J}^\Theta_{{\bf w},1/2}(\tilde{q})) \right) \nonumber \\
        &= -\epsilon + \sup_{\tilde{q} \in \tilde{{\cal W}}_\Theta} ({\cal U}(\tilde{q}) + \log(2) - {\cal J}^\Theta_{{\bf w},1/2}(\tilde{q})) \nonumber \\
        &= -\epsilon + \sup_{\substack{\tilde{q} \in \tilde{{\cal W}}_\Theta \\ \tilde{c}[\tilde{q}] = \tilde{c}_{{\bf w}}}} ({\cal U}(\tilde{q}) + \tilde{{\cal H}}(\tilde{h}[\tilde{q}])).
    \end{align}
    Since $\epsilon$ is arbitrary, we have $\psi_N \to \psi$, which completes the proof.
\end{proof}

\begin{theorem} \label{thm:colored_graphon_convergence}
     Let $\Tilde{M}^*$ be the set of maximizers of the variational problem \eqref{eq:colored_graphon_partition}. Let $Q_N$ be a colored graph on $N$ vertices drawn from the model implied by function ${\cal U}$. Then, for any $\eta > 0$, there exist $C, \kappa > 0$ such that, for any $N$,
    \begin{align*}
        \mathbb{P}\{ \delta^\Theta_\square(\Tilde{Q}_N, \Tilde{M}^*) > \eta \} \le C e^{-N^2 \kappa},
    \end{align*}
    where $\mathbb{P}$ denotes the probability measure implied by the model.
\end{theorem}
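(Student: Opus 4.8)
The plan is to obtain this concentration statement as the standard consequence of the large deviation principle for the colored Erd\"os-R\'enyi measures $\tilde{\mathbb{P}}^\Theta_{N,1/2}$ together with the variational formula in Theorem~\ref{thm:colored_graphon_partition}, mirroring the proof of its uncolored analogue Theorem~\ref{thm:graphon_convergence} (Theorem~18 in \citet{mele_structural_2017}, following \citet{chatterjee_estimating_2013}), but carried out on the colored orbit space $(\tilde{{\cal W}}_\Theta,\delta_\square^\Theta)$. First I would realize $\pi_N$ as an exponential tilt of a base measure: the uniform law on colored directed graphs on $N$ labeled vertices with color fractions ${\bf w}^N$ is exactly the directed Erd\"os-R\'enyi law with $p=1/2$ on those graphs, and its pushforward to $\tilde{{\cal W}}_\Theta$ is $\tilde{\mathbb{P}}^\Theta_{N,1/2}$; since ${\cal U}$ depends only on the orbit, the pushforward $\tilde{\pi}_N$ of $\pi_N$ satisfies
\[
\frac{d\tilde{\pi}_N}{d\tilde{\mathbb{P}}^\Theta_{N,1/2}}(\tilde q)=\frac{\exp\{N^2{\cal U}(\tilde q)\}}{\mathbb{E}_{\tilde{\mathbb{P}}^\Theta_{N,1/2}}\!\left[\exp\{N^2{\cal U}\}\right]}.
\]
Because ${\cal U}$ is bounded and continuous and $(\tilde{{\cal W}}_\Theta,\delta_\square^\Theta)$ is compact with good rate function ${\cal J}^\Theta_{{\bf w},1/2}$ (both established above), the usual tilted-LDP argument --- Varadhan's lemma for the denominator together with the matching upper and lower bounds for the numerator --- shows that $\tilde{\pi}_N$ satisfies an LDP at speed $N^2$ with good rate function
\[
I_{\cal U}(\tilde q)={\cal J}^\Theta_{{\bf w},1/2}(\tilde q)-{\cal U}(\tilde q)-\inf_{\tilde q'}\bigl[{\cal J}^\Theta_{{\bf w},1/2}(\tilde q')-{\cal U}(\tilde q')\bigr].
\]

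Next I would identify the zero set of $I_{\cal U}$ with $\tilde M^*$. Using ${\cal I}_{1/2}(\tilde h)=\log 2-{\cal H}(\tilde h)$, on the set where the coloring orbit equals $\tilde{c}_{{\bf w}}$ one has ${\cal J}^\Theta_{{\bf w},1/2}(\tilde q)-{\cal U}(\tilde q)=\log 2-[{\cal U}(\tilde q)+{\cal H}(\tilde{h}_q)]$, and $+\infty$ otherwise, so the infimum defining $I_{\cal U}$ is attained exactly on the maximizer set $\tilde M^*$ of problem~\eqref{eq:colored_graphon_partition}; hence $I_{\cal U}(\tilde q)=0$ if and only if $\tilde q\in\tilde M^*$. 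Upper semicontinuity of ${\cal U}+{\cal H}$ (equivalently lower semicontinuity of the good rate function ${\cal J}^\Theta_{{\bf w},1/2}$) on the compact constrained set makes $\tilde M^*$ nonempty and closed.

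Then I would apply the LDP upper bound to the closed --- hence compact --- set $\tilde F_\eta=\{\tilde q:\delta_\square^\Theta(\tilde q,\tilde M^*)\ge\eta\}$, which is disjoint from the zero set $\tilde M^*$ of $I_{\cal U}$. Lower semicontinuity of $I_{\cal U}$ and compactness of $\tilde F_\eta$ give $\inf_{\tilde q\in\tilde F_\eta}I_{\cal U}(\tilde q)=:2\kappa>0$, so $\limsup_{N\to\infty}N^{-2}\log\tilde{\pi}_N(\tilde F_\eta)\le-2\kappa<0$, and therefore $\tilde{\pi}_N(\tilde F_\eta)\le e^{-N^2\kappa}$ for all $N$ beyond some $N_0$; enlarging the prefactor $C$ to cover $N<N_0$ yields $\mathbb{P}\{\delta_\square^\Theta(\tilde Q_N,\tilde M^*)>\eta\}\le\mathbb{P}(\tilde F_\eta)\le Ce^{-N^2\kappa}$ for all $N$. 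The coloring component of $\delta_\square^\Theta$ causes no difficulty here, since $\tilde Q_N$ always carries color fractions ${\bf w}^N\to{\bf w}$ and any configuration with the wrong coloring orbit already has $I_{\cal U}=\infty$.

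The main obstacle is not really located in this argument: all the substantive work --- the colored-graphon LDP and the compactness of $(\tilde{{\cal W}}_\Theta,\delta_\square^\Theta)$ --- is upstream. Given those inputs, the only step requiring genuine care is the identification in the second paragraph, namely that $I_{\cal U}$ vanishes precisely on $\tilde M^*$ and is bounded below by a positive constant on $\tilde F_\eta$; this rests on the upper semicontinuity of the entropy functional ${\cal H}$ on $(\tilde{{\cal W}},\delta_\square)$ and on $\tilde M^*$ being closed, after which the conclusion is a routine transcription of the Chatterjee--Diaconis / Mele argument to the colored setting.
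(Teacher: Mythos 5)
Your proof is correct and takes essentially the same approach as the paper, which simply cites the argument of Theorem 3.2 in Chatterjee and Diaconis (2013) transported to the colored orbit space. Your write-up spells out what that citation compresses --- the exponential tilt of $\tilde{\mathbb{P}}^\Theta_{N,1/2}$, Varadhan's lemma, the identity ${\cal I}_{1/2}=\log 2-{\cal H}$, identification of the rate function's zero set with $\tilde M^*$, and the LDP upper bound applied to the closed set $\{\delta_\square^\Theta(\cdot,\tilde M^*)\ge\eta\}$ --- but the substance and the upstream inputs (the colored LDP, compactness of $(\tilde{{\cal W}}_\Theta,\delta_\square^\Theta)$, and Theorem~\ref{thm:colored_graphon_partition}) are exactly those the paper relies on.
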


\begin{proof}
    This proof follows the same procedure as the proof of Theorem 3.2 in \citet{chatterjee_estimating_2013}.
\end{proof}

\subsection{Technical results}
This section contains results that are used in the proofs of the results in the previous section.

\begin{lemma} \label{lem:inverse_contraction_ldp}
    Let ${\cal X}$ and ${\cal Y}$ be compact metric spaces and $\Psi:{\cal X} \to {\cal Y}$ be a continuous surjection. Consider a sequence of probability measures $(\mu_N)_{N \in \mathbb{N}}$ on ${\cal X}$, and their pushfowards by $\Psi$: $\nu_N \coloneqq \mu_N \circ \Psi^{-1}$. Suppose that for all $N$ and $y \in \textrm{supp}(\nu_N)$, there exists a finite set $\Upsilon_N(y) \subset \Psi^{-1}(\{y\})$ such that:
    \begin{enumerate}
        \item (Finite support) There exist weights $\alpha_N(x,y) > 0$, $x \in \Upsilon_N(y)$, with $\sum_{x \in \Upsilon_N(y)} \alpha_N(x,y) = 1$ such that the $\Psi$-disintegration of $\mu_N$ is
        \begin{align}
            \mu_N(dx) = \int \mu_N^y(dx) \nu_N(dy), \quad \mu_N^y = \sum_{x \in \Upsilon_N(y)} \alpha_N(x,y) \delta_{x} \textrm{ for } y \in \textrm{supp}(\nu_N).
        \end{align}
        \item (Subexponential weight floor) The weights satisfy 
        \begin{align}
            \liminf_{N \to \infty} \inf_{y \in {\cal Y}} \min_{x \in \Upsilon_N(y)} \frac{1}{a_N} \log(\alpha_N(x,y)) = 0,
        \end{align}
        \item (Uniform fiber approximation) There exists a sequence $(\epsilon_N)_{N \in \mathbb{N}}$, with $\epsilon_N \to 0$, such that $\Upsilon_N(y)$ is an $\epsilon_N$ net of $\Psi^{-1}(y)$ for all $y \in \textrm{supp}(\nu_N)$.
    \end{enumerate}
    Then, if the sequence $(\nu_N)_{N \in \mathbb{N}}$ satisfies a large deviation principle on ${\cal Y}$ at speed $a_N$ with rate function $I_Y$, the sequence $(\mu_N)_{N \in \mathbb{N}}$ satisfies a large deviation principle on ${\cal X}$ at speed $a_N$ with rate function $I_X \coloneqq I_Y \circ \Psi$.
\end{lemma}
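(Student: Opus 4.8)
The plan is to verify the two halves of the large deviation principle directly. Since ${\cal X}$ is compact, exponential tightness is automatic, so it suffices to prove the upper bound for closed (equivalently compact) sets and the lower bound for open sets; moreover $I_X \coloneqq I_Y\circ\Psi$ is a good rate function, being lower semicontinuous (as $I_Y$ is and $\Psi$ is continuous), with sublevel sets that are closed subsets of the compact space ${\cal X}$ hence compact, and with $\inf_{{\cal X}}I_X = \inf_{{\cal Y}}I_Y = 0$ because $\Psi$ is onto.

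For the upper bound, let $F\subseteq{\cal X}$ be closed, hence compact. Then $\Psi(F)$ is compact, hence closed in ${\cal Y}$, and $F\subseteq\Psi^{-1}(\Psi(F))$, so $\mu_N(F)\le\mu_N(\Psi^{-1}(\Psi(F))) = \nu_N(\Psi(F))$. The large deviation upper bound for $\nu_N$ then gives $\limsup_N\tfrac1{a_N}\log\mu_N(F)\le -\inf_{y\in\Psi(F)}I_Y(y) = -\inf_{x\in F}I_Y(\Psi(x)) = -\inf_{x\in F}I_X(x)$. This half uses only the pushforward relation, not the fibre structure.

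The lower bound is where hypotheses (1)--(3) are used. Fix an open $G\subseteq{\cal X}$ and $x_0\in G$ with $I_X(x_0)<\infty$; it is enough to show $\liminf_N\tfrac1{a_N}\log\mu_N(G)\ge -I_Y(\Psi(x_0))$ and then optimize over $x_0\in G$. Let $\underline{\alpha}_N \coloneqq \inf_{y}\min_{x\in\Upsilon_N(y)}\alpha_N(x,y)$; since $\alpha_N\le1$, hypothesis (2) is exactly $\tfrac1{a_N}\log\underline{\alpha}_N\to0$. Disintegrating along $\Psi$ using (1),
\begin{align}
    \mu_N(G) = \int_{{\cal Y}}\mu_N^y(G)\,\nu_N(dy) \;\ge\; \underline{\alpha}_N\,\nu_N(A_N), \qquad A_N \coloneqq \{\,y\in\operatorname{supp}\nu_N : \Upsilon_N(y)\cap G\neq\emptyset\,\},
\end{align}
since $\mu_N^y(G)=\sum_{x\in\Upsilon_N(y)\cap G}\alpha_N(x,y)\ge\underline{\alpha}_N$ whenever $y\in A_N$. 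Choose $\delta>0$ with $B(x_0,\delta)\subseteq G$. If $\Psi^{-1}(y)$ meets $B(x_0,\delta/2)$, then once $\epsilon_N<\delta/2$ the $\epsilon_N$-net $\Upsilon_N(y)$ of $\Psi^{-1}(y)$ contains a point within $\delta$ of $x_0$, hence a point of $G$; thus any $V\subseteq\Psi(B(x_0,\delta/2))$ satisfies $V\cap\operatorname{supp}\nu_N\subseteq A_N$ for all large $N$, so $\nu_N(A_N)\ge\nu_N(V)$. Taking $V$ to be an open neighbourhood of $\Psi(x_0)$ contained in $\Psi(B(x_0,\delta/2))$ and applying the lower bound for $\nu_N$, $\liminf_N\tfrac1{a_N}\log\nu_N(V)\ge -\inf_{y\in V}I_Y(y)\ge -I_Y(\Psi(x_0))$. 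Combining with $\tfrac1{a_N}\log\underline{\alpha}_N\to0$ completes the lower bound, and hence the proof.

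The step requiring care is the existence of an open $V\ni\Psi(x_0)$ inside $\Psi(B(x_0,\delta/2))$ — equivalently, lower hemicontinuity of the fibre map $y\mapsto\Psi^{-1}(y)$, i.e. that a small perturbation of $\Psi(x_0)$ remains realizable arbitrarily close to $x_0$. This holds in the intended application, where $\Psi(\tilde q)=(\tilde h[\tilde q],\tilde c[\tilde q])$ merely forgets how the coloring is coupled to the graphon, and any nearby pair $(\tilde h,\tilde c)$ is represented by a colored graphon near $\tilde q$; it is also exactly the point where, in the equivalent Laplace-principle formulation, one needs lower semicontinuity of $y\mapsto\max_{x\in\Psi^{-1}(y)}f(x)$ for $f\in C_b({\cal X})$ in order to run Varadhan's lemma in the direction that yields the lower bound. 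Everything else is routine given the disintegration in (1) and the quantitative controls in (2)--(3).
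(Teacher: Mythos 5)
Your proof takes a genuinely different route from the paper. The paper verifies the Laplace principle (Bryc's theorem): for each $\phi\in C_b(\mathcal{X})$ it forms the fiberwise infimum $\Phi(y) := \inf\{\phi(x):\Psi(x)=y\}$, asserts $\Phi\in C_b(\mathcal{Y})$ via Berge's maximum theorem, sandwiches $\int e^{-a_N\phi}\,d\mu_N$ between $\int e^{-a_N\Phi}\,d\nu_N$ and a subexponential weight floor times $\int e^{-a_N(\Phi+\eta)}\,d\nu_N$ using the disintegration, the $\epsilon_N$-net, and uniform continuity of $\phi$, and then invokes Varadhan's lemma on the $\mathcal{Y}$ side. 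You instead argue directly at the set level: the closed-set upper bound from $\mu_N(F)\le\nu_N(\Psi(F))$ with $\Psi(F)$ compact, and the open-set lower bound by localizing at $x_0$, passing through the disintegration, and transferring the $\nu_N$-lower bound on a neighborhood $V\ni\Psi(x_0)$. The quantitative ingredients (net and weight floor) enter in the same places; your packaging avoids citing Bryc and Varadhan and is arguably more transparent about where each hypothesis is used.

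That said, the step you flag is a genuine gap, and it is a gap in the paper's own proof as well. Your lower bound needs an open $V\ni\Psi(x_0)$ contained in $\Psi(B(x_0,\delta/2))$; equivalently $\Psi$ must be an open map, i.e.\ the fiber correspondence $y\mapsto\Psi^{-1}(y)$ must be lower hemicontinuous. This is not implied by the lemma's stated hypotheses. A concrete failure: take $\mathcal{X}=[0,1]\cup[2,3]$, $\mathcal{Y}=[0,2]$, $\Psi(x)=x$ on $[0,1]$ and $\Psi(x)=x-1$ on $[2,3]$, with $\mu_N=\nu_N=\delta_{1-1/N}$. Every hypothesis holds trivially (the relevant fibers are singletons, so (1)--(3) are automatic and $\nu_N$ obeys an LDP with $I_Y(y)=\infty\cdot\mathbbm{1}\{y\neq1\}$), yet $\mu_N$ puts no mass near $x=2$ even though $I_X(2)=I_Y(1)=0$, so the LDP lower bound fails. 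The paper's proof does not escape this: its appeal to Berge's maximum theorem to get continuity of $\Phi$ from a closed graph and compact values alone is insufficient --- that yields only upper hemicontinuity of $\Gamma$, hence lower semicontinuity of $\Phi$, which is the wrong direction for the Varadhan lower bound being used. Both proofs, and the lemma as stated, implicitly require openness of $\Psi$ (lower hemicontinuity of the fibers); this does hold for $\Psi(\tilde q)=(\tilde h[\tilde q],\tilde c[\tilde q])$ in the application, but it should be added as a hypothesis or verified explicitly where the lemma is used.
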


\begin{proof}
    The strategy for this proof is to show that the measures $\mu_N$ satisfy the condition for Bryc's Theorem (Theorem 4.4.2 in \citet{dembo_large_2010}). For a topological space ${\cal Z}$, let $C_b({\cal Z})$ denote the set of real-valued continuous bounded functions. We need to show that the sequence $\mu_N$ is exponentially tight, and that for all $\phi \in C_b({\cal X})$, the following holds:
    \begin{align} \label{eq:bryc_condition}
        \lim_{N \to \infty} \frac{1}{a_N} \log\left( \int e^{-a_N \phi} \, d\mu_N \right) = - \inf_{x \in {\cal X}} \{ \phi(x) + I_X(x) \}.
    \end{align}
    Note that the sequences $\mu_N$ and $\nu_N$ are automatically exponentially tight since the spaces are compact metric (this also automatically implies that all rate functions are good). Therefore, we just need to show that the limit above exists, and that it corresponds to the expression on the right. If this is the case, Bryc's theorem states that $\mu_N$ satisfies the LDP with speed $a_N$ and rate function $I_X$. 

    First, fix a function $\phi \in C_b({\cal X})$. Define the fiberwise infimum of $\phi$:
    \begin{align}
        \Phi(y) \coloneqq \inf\{ \phi(x) : \Psi(x)=y \}.
    \end{align}
    Now, note that since $\Psi$ is continuous surjective and ${\cal X}$ is compact, the correspondence $\Gamma(y) \coloneqq \Psi^{-1}(y)$ is non-empty, compact-valued and has a closed graph. Therefore, since $\phi$ is continuous, Berge's maximum theorem states that $\Phi \in C_b({\cal Y})$.

    By definition, we have that $\phi(x) \ge \Phi(\Psi(x))$ for all $x \in {\cal X}$. Therefore,
    \begin{align}
        \int e^{-a_N \phi(x)} \, d\mu_N(x) \le \int e^{-a_N \Phi(\Psi(x))} \, d\mu_N(x) = \int e^{-a_N \Phi(y)} \, d\nu_N(y).
    \end{align}
    Taking limits and applying Varadhan's Lemma (Theorem 4.3.1 in \citet{dembo_large_2010}) to the integral over ${\cal Y}$ yields
    \begin{align}
        \limsup_{N \to \infty} \frac{1}{a_N} \log\left( \int e^{-a_N \phi(x)} \, d\mu_N(x) \right) &\le \limsup_{N \to \infty} \frac{1}{a_N} \log\left( \int e^{-a_N \Phi(y)} \, d\nu_N(y) \right) \nonumber \\
        &= - \inf_{y \in {\cal Y}} \{ \Phi(y) + I_Y(y) \} \nonumber \\
        & = - \inf_{x \in {\cal X}} \{ \Phi(\Psi(x)) + I_Y(\Psi(x)) \} \nonumber \\
        &= - \inf_{x \in {\cal X}} \{ \phi(x) + I_X(x) \},
    \end{align}
    where the last equality used the infimum in the definition of $\Phi$. This yields the limsup bound.

    To obtain the liminf bound, we need the structure of the $\Psi$-disintegration of $\mu_N$. Fix some $\eta > 0$. Then, for all $y$ there exists some $x_0 \in \Psi^{-1}(y)$ such that  $\phi(x_0) \le \Phi(y) + \eta/2$. Since $\phi$ is continuous and ${\cal X}$ is compact, by the Heine-Cantor theorem, $\phi$ must be uniformly continuous. Together with the $\epsilon_N$ net condition, this implies that there is some $N_\eta$ (independent of $y$) such that for all $N > N_\eta$ there is some $x_1^N \in \Upsilon_N(y)$ that satisfies $\phi(x_1^N) \le \phi(x_0) + \eta/2$, which implies $\phi(x_1^N) \le \Phi(y) + \eta$. Now, note that
    \begin{align}
        \sum_{x \in \Upsilon_N(y)} \alpha_N(x,y) e^{-a_N \phi(x)} \ge \alpha_N(x_1^N,y) e^{-a_N \phi(x_1^N)} \ge \left[ \min_{x \in \Upsilon_N(y)}\alpha_N(x,y) \right] e^{-a_N [\Phi(y) + \eta]}.
    \end{align}
    Therefore, for $N$ large enough we obtain
    \begin{align}
        \int e^{-a_N \phi(x)} \, d\mu_N(x) &= \int \left( \sum_{x \in \Upsilon_N(y)} \alpha_N(x,y) e^{-a_N \phi(x)} \right) \, d\nu_N(y) \nonumber \\
        & \ge \int \left[ \min_{x \in \Upsilon_N(y)}\alpha_N(x,y) \right] e^{-a_N [\Phi(y) + \eta]} \, d\nu_N(y) \nonumber \\
        & \ge \left[ \inf_{y' \in {\cal Y}} \min_{x \in \Upsilon_N(y')}\alpha_N(x,y') \right] \int e^{-a_N [\Phi(y) + \eta]} \, d\nu_N(y).
    \end{align}
    Taking logs and limits, and applying Varadhan's Lemma to $\Phi + \eta$ yields
    \begin{align}
        &\liminf_{N \to \infty} \frac{1}{a_N} \log\left( \int e^{-a_N \phi(x)} \, d\mu_N(x) \right) \nonumber \\
        &\ge \liminf_{N \to \infty} \inf_{y' \in {\cal Y}} \min_{x \in \Upsilon_N(y)}\frac{1}{a_N}\log(\alpha_N(x,y')) - \eta - \inf_{y \in {\cal Y}} \{ \Phi(y) + I_Y(y) \}
    \end{align}
    Using the subexponential weight floor condition and noting that this bound holds for any $\eta$, we conclude that
    \begin{align}
        \liminf_{N \to \infty} \frac{1}{a_N} \log\left( \int e^{-a_N \phi(x)} \, d\mu_N(x) \right) &\ge - \inf_{y \in {\cal Y}} \{ \Phi(y) + I_Y(y) \} \nonumber \\
        &= - \inf_{x \in {\cal X}} \{ \phi(x) + I_X(x) \}.
    \end{align}

    Since the two bounds coincide, we have that the limit in Eq. \eqref{eq:bryc_condition} exists for all $\phi \in C_b({\cal X})$, and the expression is satisfied with $I_X = I_Y \circ \Psi$. Therefore, the conditions in Bryc's Theorem hold. This completes the proof.
\end{proof}

\begin{lemma} [Theorem 2.6 in \citet{varadhan_large_2010}] \label{lem:prod_measures_ldp}
    Let ${\cal X}$ and ${\cal Y}$ be Polish spaces. Let the sequences of measures $(\mu_N)_{N \in \mathbb{N}}$ on ${\cal X}$ and $(\nu_N)_{N \in \mathbb{N}}$ on ${\cal Y}$ satisfy large deviations principles at the same speed $a_N$ with good rate functions $I_X$ and $I_Y$, respectively. Then the sequence $(\mu_N \otimes \nu_N)_{N \in \mathbb{N}}$ satisfies a large deviations principle at speed $a_N$ with rate function
    \begin{align}
        I_{X \times Y}(x,y) = I_X(x) + I_Y(y).
    \end{align}
\end{lemma}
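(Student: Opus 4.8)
The plan is to verify the large deviation principle for the product sequence directly from the definition, leaning on two structural facts: the product measure factorizes on rectangles, $(\mu_N\otimes\nu_N)(A\times B)=\mu_N(A)\nu_N(B)$, and $I_X,I_Y$ are good rate functions. First I would check that $I_{X\times Y}(x,y)=I_X(x)+I_Y(y)$ is good — it is lower semicontinuous as a sum of lower semicontinuous functions, and $\{I_{X\times Y}\le\alpha\}$ is a closed subset of the compact product $\{I_X\le\alpha\}\times\{I_Y\le\alpha\}$, hence compact. Next I would check that $(\mu_N\otimes\nu_N)$ is exponentially tight: if $K_X,K_Y$ are compacts with $\limsup a_N^{-1}\log\mu_N(K_X^c)\le-L$ and similarly for $\nu_N$ (available by goodness), then $(\mu_N\otimes\nu_N)((K_X\times K_Y)^c)\le\mu_N(K_X^c)+\nu_N(K_Y^c)$, and since $a_N^{-1}\log 2\to 0$ the $\limsup$ of $a_N^{-1}\log$ of the left side is at most $-L$.

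For the lower bound on an open $G\subseteq{\cal X}\times{\cal Y}$, I would take $(x,y)\in G$, pick a basic open rectangle $U\times V\subseteq G$ with $x\in U$, $y\in V$, and use $(\mu_N\otimes\nu_N)(G)\ge\mu_N(U)\nu_N(V)$ together with superadditivity of $\liminf$ and the component lower bounds to get $\liminf a_N^{-1}\log(\mu_N\otimes\nu_N)(G)\ge -I_X(x)-I_Y(y)$, then take the supremum over $(x,y)\in G$. For the upper bound I would establish it on compacts $K$ and then upgrade to all \emph{closed} sets using exponential tightness. Fixing $\delta>0$, for each $(x,y)\in K$ lower semicontinuity gives an open ball $B(x,r_x)$ on which $I_X>M_x:=\min\{I_X(x),1/\delta\}-\delta$, and a ball $B(y,r_y)$ with $I_Y>M_y$; the half-radius rectangles cover $K$, so take a finite subcover with centers $(x_i,y_i)\in K$. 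Since $\overline{B(z,r/2)}\subseteq B(z,r)$, the LDP upper bound for $\mu_N$ on the closed set $\overline{B(x_i,r_{x_i}/2)}$ gives $\limsup a_N^{-1}\log\mu_N(\overline{B(x_i,r_{x_i}/2)})\le-M_{x_i}$, and likewise for $\nu_N$; then from $(\mu_N\otimes\nu_N)(K)\le\sum_i\mu_N(\overline{B(x_i,r_{x_i}/2)})\nu_N(\overline{B(y_i,r_{y_i}/2)})$, the finite-sum rule for $\limsup a_N^{-1}\log$, and subadditivity of $\limsup$, I obtain $\limsup a_N^{-1}\log(\mu_N\otimes\nu_N)(K)\le-\min_i\{M_{x_i}+M_{y_i}\}$. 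Because $(x_i,y_i)\in K$, a short case check (separating whether $I_X(x_i)$ or $I_Y(y_i)$ exceeds $1/\delta$) gives $M_{x_i}+M_{y_i}\ge\min\{\inf_K I_{X\times Y},\,1/\delta\}-2\delta$ for every $i$, so letting $\delta\downarrow 0$ yields $\limsup a_N^{-1}\log(\mu_N\otimes\nu_N)(K)\le-\inf_K I_{X\times Y}$. Combining with the open-set bound gives the LDP at speed $a_N$ with good rate function $I_{X\times Y}$.

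I expect the compact-set upper bound to be the main obstacle: unlike the lower bound, the factorization inequality $(\mu_N\otimes\nu_N)(U\times V)\le\mu_N(U)\nu_N(V)$ must be fed into the LDP upper bound, which is stated for closed sets, so one cannot simply work with open rectangles; the resolution is the metric trick $\overline{B(z,r/2)}\subseteq B(z,r)$ together with the truncation at $1/\delta$, which keeps the estimate uniform across the (finitely many, but $\delta$-dependent) centers even when $I_X$ or $I_Y$ is infinite at a center. As an alternative route I could instead invoke Bryc's theorem: since the spaces are Polish and the rate functions good, it suffices to show $\lim a_N^{-1}\log\int e^{-a_N\phi}\,d(\mu_N\otimes\nu_N)=-\inf(\phi+I_{X\times Y})$ for every $\phi\in C_b({\cal X}\times{\cal Y})$, which follows by Fubini and two applications of Varadhan's lemma, though the inner application needs its uniform-in-$y$ version and so is no shorter to carry out carefully.
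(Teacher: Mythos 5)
The paper does not prove this lemma: it is stated as a citation to Theorem~2.6 of Varadhan's lecture notes on large deviations, with no argument reproduced. Your proposal, by contrast, gives a complete first-principles proof, and it is correct. The structure is the standard one: goodness of $I_{X\times Y}$ via the containment of sublevel sets in products of sublevel sets; exponential tightness inherited from the components (valid because an LDP with a good rate function on a Polish space forces exponential tightness); the open-set lower bound via an open rectangle and superadditivity of $\liminf$; and the compact-set upper bound via a finite cover by open rectangles whose half-radius closures sit inside neighborhoods on which the rate functions are bounded below, with the truncation at $1/\delta$ handling centers where $I_X$ or $I_Y$ is large or infinite, followed by the exponential-tightness upgrade from compact to closed sets. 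The one place you rightly flag as the crux — feeding the rectangle factorization $(\mu_N\otimes\nu_N)(U\times V)\le\mu_N(U)\nu_N(V)$ into an upper bound that is only stated for closed sets — is resolved correctly by $\overline{B(z,r/2)}\subseteq B(z,r)$, and your case split on whether $I_X(x_i)$ or $I_Y(y_i)$ exceeds $1/\delta$ does deliver the uniform estimate $M_{x_i}+M_{y_i}\ge\min\{\inf_K I_{X\times Y},1/\delta\}-2\delta$. Your alternative via Bryc's theorem and Fubini would also work and is closer in spirit to the machinery the paper actually uses elsewhere in Appendix~B (e.g.\ in Lemma~B.1 / \textit{Lemma~\ref{lem:inverse_contraction_ldp}}, which is proved via Bryc), but as you note the need for the uniform version of Varadhan's lemma in the inner integral makes it no shorter. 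Either way, this is a correct self-contained proof of a result the paper delegates to a reference.
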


\begin{definition}
    For a directed graph $G=(V,A)$ and for $X,Y \subseteq V$, let $a_G(X,Y)$ denote the number of arcs that originate from a node in $X$ and terminate in a node in $Y$. Let
    \begin{align}
        d_G(X,Y) \coloneqq \frac{a_G(X,Y)}{|X||Y|}
    \end{align}
    denote the density of arcs from $X$ to $Y$.
\end{definition}

\begin{definition}
    Consider a directed graph $G = (V,A)$ and a partition ${\cal P} = \{V_1, \ldots, V_k\}$ of $V$. define the weighted directed graph $G_{{\cal P}}$ on $V$ as the complete directed graph whose arc $(u,v)$ has weight $d_G(V_i,V_j)$ if $u \in V_i$ and $v \in V_j$.
\end{definition}

\begin{lemma} \label{lem:weak_regularity}
    There exists a constant $C_0$ such that for every $\eta \in (0,1)$ and every directed graph $G=(V,A)$, $V$ has a partition ${\cal P}$ into $k \le \exp(C_0/\eta^2)$ classes such that
    \begin{align}
        d_\square(G, G_{\cal P}) \le \eta.
    \end{align}
\end{lemma}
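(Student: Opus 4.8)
The plan is to prove this as the directed analogue of the Frieze--Kannan weak regularity lemma (see Lemma~9.3 in \citet{lovasz_large_2012}); since none of the steps below uses symmetry of the graphon, the directed case needs no new idea, and the argument is a standard energy-increment iteration. First I would set up the index (``energy'') of a partition. Identifying $G$ with its graphon $h^G \in {\cal W}$, for a partition ${\cal P} = \{V_1, \dots, V_k\}$ of $V(G)$ let $h^{G_{\cal P}}$ be the step graphon obtained by averaging $h^G$ over each ordered block $V_i \times V_j$; equivalently $h^{G_{\cal P}} = P_{\cal P}\, h^G$, the $L^2([0,1]^2)$-orthogonal projection of $h^G$ onto functions constant on the blocks of ${\cal P} \otimes {\cal P}$. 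Define $\mathrm{ind}({\cal P}) \coloneqq \lVert h^{G_{\cal P}} \rVert_{L^2}^2$. Since $h^G$ takes values in $[0,1]$ so does $h^{G_{\cal P}}$, hence $0 \le \mathrm{ind}({\cal P}) \le 1$; and because orthogonal projections onto nested subspaces compose, any refinement ${\cal P}'$ of ${\cal P}$ satisfies the Pythagorean identity $\mathrm{ind}({\cal P}') = \mathrm{ind}({\cal P}) + \lVert h^{G_{{\cal P}'}} - h^{G_{\cal P}} \rVert_{L^2}^2$.

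The core step I would establish is the energy-increment claim: if $d_\square(G, G_{\cal P}) > \eta$, then there is a refinement ${\cal P}'$ of ${\cal P}$ with $|{\cal P}'| \le 4\,|{\cal P}|$ and $\mathrm{ind}({\cal P}') > \mathrm{ind}({\cal P}) + \eta^2$. To prove it I would take $S, T \subseteq [0,1]$ witnessing $\bigl| \int_{S \times T} (h^G - h^{G_{\cal P}}) \bigr| > \eta$; since $h^G - h^{G_{\cal P}}$ is a step function on the vertex partition, the supremum in $d_\square$ is attained at a pair of vertex-subsets (an elementary fact about cut norms of step functions), so ${\cal P}' \coloneqq {\cal P} \vee \{S, S^c\} \vee \{T, T^c\}$ is again a vertex partition with at most $4|{\cal P}|$ classes. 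As $S \times T$ is now a union of blocks of ${\cal P}' \otimes {\cal P}'$, averaging over those blocks gives $\int_{S \times T} h^{G_{{\cal P}'}} = \int_{S \times T} h^G$, so $\int_{S \times T}(h^{G_{{\cal P}'}} - h^{G_{\cal P}}) = \int_{S \times T}(h^G - h^{G_{\cal P}})$, of absolute value $> \eta$. Cauchy--Schwarz together with $|S \times T| \le 1$ then gives $\lVert h^{G_{{\cal P}'}} - h^{G_{\cal P}} \rVert_{L^2}^2 \ge \bigl( \int_{S \times T}(h^{G_{{\cal P}'}} - h^{G_{\cal P}}) \bigr)^2 > \eta^2$, and the Pythagorean identity finishes the claim.

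I would then run the iteration: start from ${\cal P}_0 = \{V\}$ and, while $d_\square(G, G_{{\cal P}_t}) > \eta$, pass to the refinement ${\cal P}_{t+1}$ furnished by the claim. Each step increases the index by more than $\eta^2$, but the index is confined to $[0,1]$, so the process terminates after $t \le \lceil \eta^{-2} \rceil$ steps with $d_\square(G, G_{{\cal P}_t}) \le \eta$. Since the number of classes at most quadruples per step, $|{\cal P}_t| \le 4^{\lceil \eta^{-2} \rceil} \le \exp(C_0 / \eta^2)$ with $C_0 = \log 16$ (say), which is exactly the asserted bound.

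The step I expect to be the only delicate point is the energy-increment claim itself, specifically obtaining a full $\eta^2$ increment while only quadrupling the number of classes. This rests on three small observations that I would spell out carefully: (i) the Pythagorean orthogonality for nested block $\sigma$-algebras; (ii) that the ``defect'' $\int_{S \times T}(h^G - h^{G_{\cal P}})$ is unchanged when $h^{G_{\cal P}}$ is replaced by $h^{G_{{\cal P}'}}$, which holds precisely because $S \times T$ becomes blockwise measurable after the refinement; and (iii) that the Cauchy--Schwarz estimate loses nothing since $|S \times T| \le 1$. All three are identical in the directed and undirected settings, so beyond these checks the proof is the one in \citet{lovasz_large_2012}; one could alternatively just cite that reference and remark that the construction never uses symmetry.
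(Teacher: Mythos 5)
Your proof is correct, but it takes a genuinely different route from the paper's. The paper \emph{reduces} the directed statement to the undirected Frieze--Kannan lemma via a bipartite doubling trick: it forms the $2N$-vertex bipartite undirected graph with adjacency matrix $\bigl(\begin{smallmatrix} 0 & B^\top \\ B & 0 \end{smallmatrix}\bigr)$, applies the undirected weak regularity lemma to it, refines the resulting partition by the left/right bipartition, transports both halves of the refined partition back onto $V$, and takes their common refinement. This squares the number of classes ($k \mapsto k^2$) and costs an additional factor of $4$ when translating a directed cut into a cross-cut of the bipartite graph, both of which get absorbed into the constant $C_0$. You instead rerun the energy-increment argument directly: define the index as the $L^2$ norm of the block-averaged graphon, observe that refinement gives a Pythagorean increment, show that a witnessing $(S,T)$ for a large cut discrepancy can be taken to be vertex-subsets and that adjoining $\{S, S^c\}$, $\{T, T^c\}$ to the current partition quadruples the class count while raising the index by more than $\eta^2$, and iterate at most $\lceil \eta^{-2} \rceil$ times. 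You correctly note that no step of the classical argument uses symmetry of the kernel. Your approach is self-contained, gives a slightly cleaner constant (roughly $\log 16$ versus a larger absorbed constant in the paper), and avoids the squaring of $k$; the paper's approach has the advantage of treating the undirected result as a black box, so nothing about the regularity machinery needs to be re-derived. Both give the bound $\exp(O(1)/\eta^2)$, which is all that is used downstream (Lemma B.4).

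One small point worth spelling out if you write this up: the claim that the cut-norm supremum of a step function is attained at step sets (i.e.\ unions of vertex intervals) deserves a line of justification -- fixing $T$, the map $S \mapsto \int_{S \times T}(h^G - h^{G_{\mathcal{P}}})$ is linear in the measures $\lambda(S \cap I^N_i)$, so it is extremized at each $\lambda(S \cap I^N_i) \in \{0, 1/N\}$, and likewise for $T$. This guarantees $\mathcal{P}' = \mathcal{P} \vee \{S, S^c\} \vee \{T, T^c\}$ is again a vertex partition, which is needed so the final $\mathcal{P}_t$ really is a partition of $V$ and the induction hypothesis (``vertex partition'') propagates.
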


\begin{proof}
    Fix $\eta \in (0,1)$. Let $N = |V|$ and let $B$ be the adjacency matrix of $G$ for some ordering of $[N] \coloneqq (1,\ldots,N)$. Consider the bipartite \textit{undirected} graph with adjacency matrix
    \begin{align}
        \tilde{B} =
        \begin{pmatrix}
            0 & B^\top \\
            B & 0
        \end{pmatrix}.
    \end{align}
    Define the left and right sets $L \coloneqq \{1,\ldots,N\}$ and $R \coloneqq \{N+1,\ldots,2N\}$. A consequence of the Weak Regularity Lemma for undirected graphs (Lemma 9.3 in \citet{lovasz_large_2012}), there is a partition ${\cal Q}$ of $[2N]$ into $k \le \exp(C_1/\eta^2)$ classes such that
    \begin{align}
        d_\square(\tilde{B},\tilde{B}_{\cal Q}) \le \frac{\eta}{4},
    \end{align}
    where $C_1$ is independent of the graph and of $N$. Since refining a partition can only decrease the cut distance to the original graph, we can consider the refinement ${\cal Q} \vee \{L,R\}$. This gives partitions ${\cal Q}_1$ and ${\cal Q}_2$ of $L$ and $R$ with $k_1$ and $k_2$ classes, respectively, with $k_1,k_2 \le k$. Note that ${\cal Q}_2$ induces a partition ${\cal Q}_2' = (Q_{2,i}')_{1\le i \le k_2}$ of $L$ given by
    \begin{align}
        Q_{2,i}' = \{ 2N+1 - x : x \in Q_{2,i} \}.
    \end{align}
    
    We can further consider the refinement ${\cal P} \coloneqq {\cal Q}_1 \vee {\cal Q}_2'$ of $L$, which satisfies $|{\cal P}| \le k_1 k_2$. We can extend ${\cal P}$ to a partition ${\cal Q}'$ of $L \cup R$. Let $P(x)$ and $Q'(x)$ be the partition classes of ${\cal P}$ and ${\cal Q}'$ that contain element $x$. Then we can define ${\cal Q}'$ by
    \begin{align}
        Q'(x) =
        \begin{cases}
            P(x) & \textrm{if } x \in L, \\
            P(2N+1-x) & \textrm{if } x \in R.
        \end{cases}
    \end{align}
    Note that ${\cal Q}'$ is a refinement of ${\cal Q}$. Therefore, we have that
    \begin{align}
        d_\square(\tilde{B},\tilde{B}_{{\cal Q}'}) \le d_\square(\tilde{B},\tilde{B}_{\cal Q}) \le \frac{\eta}{4}.
    \end{align}

    Returning to the directed graph $G$, the cut distance between $G$ and $G_{{\cal P}}$ can be written as
    \begin{align}
        d_\square(G,G_{{\cal P}}) = \frac{1}{N^2} \max_{S,T \subseteq L} \left| \sum_{i \in S, j \in T} (B_{ij}-(B_{\cal P})_{ij}) \right|.
    \end{align}
    Note that any cut $S,T$ can be implemented in the bipartite graph by choosing appropriate $S' \subseteq L$ and $T' \subseteq R$ (a cross-cut). Therefore,
    \begin{align}
        d_\square(G,G_{{\cal P}}) \le 4 \frac{1}{(2N)^2} \max_{S,T \subseteq L \cup R} \left| \sum_{i \in S, j \in T} (\tilde{B}_{ij}-(\tilde{B}_{{\cal Q}'})_{ij} \right| = 4 d_\square(\tilde{B},\tilde{B}_{{\cal Q}'}).
    \end{align}
    Therefore, for the partition ${\cal P}$, we have
    \begin{align}
        d_\square(G,G_{{\cal P}}) \le \eta.
    \end{align}
    Additionally, we have that
    \begin{align}
        |{\cal P}| \le k_1 k_2 \le k^2 \le \exp(2C_1/\eta^2).
    \end{align}
    This completes the proof.
\end{proof}

\begin{definition}
    For integer $N > 0$, for $i \in \{0, \ldots, N-1\}$ denote with $I^N_i$ the set $(i/N,(i+1)/N]$. An \textit{$N$-permutation} $\sigma_\tau$ is a measurable bijection determined by a permutation $\tau \in S_N$ such that
    \begin{align}
        \sigma_\tau(x) = x + \frac{\tau(i) - i}{N}
    \end{align}
    for $x \in I^N_i$. Let $\Sigma_N$ denote the set of $N$-permutations.
\end{definition}

\begin{lemma} \label{lem:fiber_density}
    There exist constants $\kappa_1, \kappa_2, N_0>0$ such that for all integer $N>N_0$, all colored graphons $Q$ over $N$ nodes, and all $\tilde{q} \in \tilde{{\cal W}}_\Theta$ such that $(\tilde{h}[\tilde{q}], \tilde{c}[\tilde{q}]) = (\tilde{h}^Q, \tilde{c}^Q)$, there exists an $N$-permutation $\hat{\sigma} \in \Sigma_N$ such that $q^{\hat{\sigma}} \coloneqq (\hat{\sigma} h^Q, c^Q)$ satisfies
    \begin{align}
        \delta_\square^\Theta(\tilde{q}^{\hat{\sigma}},\tilde{q}) \le \frac{\kappa_1}{\sqrt{\log(N)}} + 2 |\Theta| N^{-\kappa_2}.
    \end{align}
\end{lemma}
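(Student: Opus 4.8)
The plan is to bound the colored cut-distance $\delta_\square^\Theta(\tilde q^{\hat\sigma},\tilde q)$ by an ordinary (uncolored) cut-distance plus a combinatorial matching error, and then to control that error with the Weak Regularity Lemma, Lemma~\ref{lem:weak_regularity}.

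\textbf{Reduction.} First I would fix a representative $(g,e)$ of $\tilde q$. Since $\tilde c[\tilde q]=\tilde c^Q$, the coloring $e$ has the same class masses as $c^Q$ — all integer multiples of $1/N$ — so I would choose a measure-preserving bijection $\rho_0$ aligning the color classes of $e$ with the corresponding unions of the intervals $I^N_i$, i.e.\ $e^{\rho_0}=c^Q$; set $g_1\coloneqq g^{\rho_0}$, which is still in the orbit of $h^Q$ (orbits of step functions are $d_\square$-closed), so $g_1=(h^Q)^{\sigma}$ a.e.\ for some measure-preserving bijection $\sigma$. Write $\mathrm{Stab}(c^Q)\coloneqq\{\rho:(c^Q)^\rho=c^Q\}$ for the group of measure-preserving bijections that rearrange $[0,1]$ arbitrarily within the color classes of $c^Q$. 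Then for any $\tau\in S_N$ and any $\rho'\in\mathrm{Stab}(c^Q)$, comparing the representative $(\hat\sigma h^Q,c^Q)$ of $\tilde q^{\hat\sigma}$ with the representative $(g_1,c^Q)^{\rho'}=((h^Q)^{\sigma\rho'},c^Q)$ of $\tilde q$ gives
\begin{align*}
    \delta_\square^\Theta(\tilde q^{\hat\sigma},\tilde q)\ \le\ d_\square\!\big(\hat\sigma h^Q,(h^Q)^{\sigma\rho'}\big)+d_\Theta(c^Q,c^Q)\ =\ d_\square\!\big(\hat\sigma h^Q,(h^Q)^{\sigma\rho'}\big).
\end{align*}
So it suffices to produce $\tau$ (with $\hat\sigma=\sigma_\tau$) and $\rho'\in\mathrm{Stab}(c^Q)$ making the right-hand side at most $\kappa_1/\sqrt{\log N}+2|\Theta|N^{-\kappa_2}$. (If the orbit of $h^Q$ is not literally closed one works with $d_\square$-approximants $g_1\approx(h^Q)^\sigma$ and passes to a subsequence, since there are only finitely many $\tau$.) It is essential here that $\rho'$ is available: without this freedom the crude bound $d_\square(\hat\sigma h^Q,g_1)$ can be of order $1$, e.g.\ when $g_1$ is a ``fine-grained interleaving'' of $h^Q$.

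\textbf{Coarsening and matching.} I would apply Lemma~\ref{lem:weak_regularity} to the digraph underlying $Q$ with $\eta_N\coloneqq\sqrt{2C_0/\log N}$, obtaining a partition $\mathcal P=\{P_1,\dots,P_k\}$ of $\{1,\dots,N\}$ into $k\le\exp(C_0/\eta_N^2)=\sqrt N$ classes with $d_\square(h^Q,h^{Q_{\mathcal P}})\le\eta_N$, where $h^{Q_{\mathcal P}}(x,y)=d\big(P(\lceil Nx\rceil),P(\lceil Ny\rceil)\big)$ for the interclass density matrix $d$. Since $d_\square$ is invariant under applying the same measure-preserving bijection to both arguments, $d_\square((h^Q)^{\alpha},(h^{Q_{\mathcal P}})^{\alpha})\le\eta_N$ for every $\alpha$, so up to an additive $2\eta_N$ it is enough to approximate $(h^{Q_{\mathcal P}})^{\sigma\rho'}$ by $(h^{Q_{\mathcal P}})^{\hat\sigma}$. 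Both of these are of the form $d(c(\cdot),c(\cdot))$ for a $k$-valued coloring $c$ with class masses $(|P_a|/N)_a$: for $\hat\sigma=\sigma_\tau$ it is the step coloring $c_\tau$ recording which $\mathcal P$-class each interval's image node lies in, and for $\sigma\rho'$ it is $(c^*)^{\rho'}$ with $c^*(x)=P(\lceil N\sigma x\rceil)$. Because changing a coloring on a set of measure $\varepsilon$ changes $d(c,c)$ by at most $2\varepsilon$ in $L^1$ (hence in $d_\square$), it suffices to make $d_\Theta(c_\tau,(c^*)^{\rho'})$ small. Let $r_{a\theta}\coloneqq\lambda\{x:c^*(x)=a,\ c^Q(x)=\theta\}$ be the joint law of $c^*$ and $c^Q$, with margins $(|P_a|/N)_a$ and $(n_\theta/N)_\theta$, $n_\theta=Nw_\theta^N$. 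The transportation polytope with the integer margins $(|P_a|)_a$ and $(n_\theta)_\theta$ is integral, so there is an integer matrix $(\hat r_{a\theta})$ with those margins and $|\hat r_{a\theta}-Nr_{a\theta}|<1$ for all $a,\theta$. I would then define $\tau$ by sending, within each color class of $c^Q$, exactly $\hat r_{a\theta}$ of its intervals into $P_a$ (consistent since $\sum_\theta\hat r_{a\theta}=|P_a|$, $\sum_a\hat r_{a\theta}=n_\theta$), let $c_\tau$ be the induced step coloring, and pick $\rho'\in\mathrm{Stab}(c^Q)$ rearranging $c^*$ within each color class so as to maximize its overlap with $c_\tau$; this yields
\begin{align*}
    d_\Theta\big(c_\tau,(c^*)^{\rho'}\big)=\frac{1}{2N}\sum_{a,\theta}\big|\hat r_{a\theta}-Nr_{a\theta}\big|<\frac{k|\Theta|}{2N}\le\frac{|\Theta|}{2\sqrt N}.
\end{align*}
Hence $d_\square((h^{Q_{\mathcal P}})^{\hat\sigma},(h^{Q_{\mathcal P}})^{\sigma\rho'})\le |\Theta|N^{-1/2}$, and altogether $\delta_\square^\Theta(\tilde q^{\hat\sigma},\tilde q)\le 2\eta_N+|\Theta|N^{-1/2}$, which has the asserted form with $\kappa_1=2\sqrt{2C_0}$, $\kappa_2=\tfrac12$, and $N_0$ chosen so that $\exp(C_0/\eta_N^2)\le\sqrt N$. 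All constants come only from the regularity lemma, so the bound is uniform in $Q$ and $\tilde q$.

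\textbf{Main obstacle.} The delicate point is constructing $\tau$ so that it respects two constraints simultaneously: it must be an $N$-permutation of $Q$'s node set (so that it genuinely acts on $h^Q$ and the block structure $h^{Q_{\mathcal P}}$ is preserved), and it must reproduce the color counts $n_\theta$. This is precisely why one matches the \emph{joint} law $r$ of $c^*$ and $c^Q$, rather than $c^*$ alone, and must invoke integrality of the transportation polytope to round $Nr$ to $\hat r$ with the given margins. The other place requiring care is the reduction step, which crucially exploits the freedom to rearrange within color classes when computing $\delta_\square^\Theta$. Everything else — invariance of the cut norm, the $L^1$-Lipschitz estimate for $c\mapsto d(c,c)$, and the weak regularity lemma itself — is routine, and the bookkeeping can be carried out along the lines of \citet{diao_model-free_2016} and \citet{mele_structural_2017}.
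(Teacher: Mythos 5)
Your proof is correct and follows the same overall architecture as the paper's: reduce to bounding an uncolored cut distance by exploiting color-preserving rearrangements, coarsen via the directed Weak Regularity Lemma (Lemma~\ref{lem:weak_regularity}), and then build an $N$-permutation that approximates the transport plan of the aligning bijection, with the integrality of the relevant transportation polytope doing the rounding. Where you diverge is in the final matching step, and there the two arguments genuinely differ in bookkeeping. The paper picks a color-preserving $\sigma$ so that each $\sigma_{\textrm{int}}(J_a)\cap K_\theta$ is an interval, then telescopes over ordered boundary discrepancies to bound the mismatch set $E$; this accumulates to $\lambda(E)\le |\Theta|k^2/N$, quadratic in the number of regularity classes $k$. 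You instead minimize the symmetric difference $\lambda\{c_\tau\neq (c^*)^{\rho'}\}$ by matching the joint law of $(c^*,c^Q)$ directly via optimal transport within each color class, which costs only $\tfrac{1}{2N}\sum_{a,\theta}|\hat r_{a\theta}-Nr_{a\theta}|<k|\Theta|/(2N)$, i.e.\ linear in $k$. That buys you the cleaner explicit choice $\eta_N=\sqrt{2C_0/\log N}$, $k\le\sqrt N$, and $\kappa_2=\tfrac12$, whereas the paper must take $C_1>\sqrt{2C_0}$ strictly to keep $k^2/N$ subpolynomial. Both routes reach the stated form of the bound; yours is the tighter of the two. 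One small point you handle more carefully than the paper: the existence of a measure-preserving bijection sending a representative of $\tilde h^Q$ onto $h^Q$ (rather than merely into the $d_\square$-closure of its orbit) is asserted without comment in the paper, while you flag it and gesture at the standard approximation fix, which is the right instinct even though the orbit is in fact closed for step functions.
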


\begin{proof}
    To simplify notation, denote $h_1 \coloneqq h^Q$ and $c_1 \coloneqq c^Q$. Without loss of generality, we can set the coloring $c_1$ such that the sets $c_1^{-1}(\theta)$ are pairwise disjoint intervals. This is because the condition $(\tilde{h}[\tilde{q}], \tilde{c}[\tilde{q}]) = (\tilde{h}^Q, \tilde{c}^Q)$ and the distance $\delta_\square^\Theta(\tilde{q}^{\sigma_0},\tilde{q})$ only involve orbits. Denote with ${\cal K} = \{K_{\theta_1}, \ldots, K_{\theta_{|\Theta|}}\}$ the partition of $[0,1]$ into the intervals corresponding to this coloring.
    
    Since the orbits of the graphon and the coloring match, there is a representative $q_2 = (h_2, c_2) \in \tilde{q}$ such that $c_2 = c_1$. Additionally, there exists a measure-preserving bijection $\sigma_{12}$ such that $h_2 = \sigma_{12} h_1$. Now, note that for all $\sigma_0 \in \Sigma_N$,
    \begin{align}
        \delta_\square^\Theta(\tilde{q}^{\sigma_0}, \tilde{q}) = \inf_{\sigma} \left\{ d_\square(\sigma_0 h_1, \sigma h_2) + d_\Theta(c_1, \sigma c_2) \right\},
    \end{align}
    where the infimum is over measure-preserving bijections. I will bound this distance by searching over color-preserving bijections. That is, $\sigma$ such that $\sigma(K_\theta) = K_\theta$ for all $\theta$. We have that
    \begin{align}
        \delta_\square^\Theta(\tilde{q}^{\sigma_0}, \tilde{q}) \le \inf_{\sigma \textrm{ color-preserving}} \left\{ d_\square(\sigma_0 h_1, \sigma h_2) + d_\Theta(c_1, \sigma c_2) \right\} = \inf_{\sigma \textrm{ color-preserving}} d_\square(\sigma_0 h_1, \sigma h_2).
    \end{align}

    Fix $\eta > 0$. By Lemma \ref{lem:weak_regularity}, there exists a partition ${\cal P}$ of $\{1,\ldots,N\}$ with $k \le \exp(C_0/\eta^2)$ classes such that, for the directed graph $G(Q)$,
    \begin{align}
        d_\square(G(Q),G(Q)_{\cal P}) \le \eta.
    \end{align}
    For the graphon $h_1$, this implies that there is a partition ${\cal J}$ of $[0,1]$, where each partition class is composed of unions of intervals $I^N_i$, such that the graphon $u$ given by
    \begin{align}
        u(x,y) = \frac{1}{|J_a||J_b|} \int_{J_a \times J_b} h_1(x',y') \, dx' \, dy' \quad \textrm{for } (x,y) \in J_a \times J_b
    \end{align}
    satisfies
    \begin{align}
        d_\square(h_1,u) \le \eta.
    \end{align}
    Since the cut distance is invariant under measure-preserving bijections,  we have that for any $\sigma_0, \sigma$,
    \begin{align}
        d_\square(\sigma_0 h_1, \sigma_0 u), d_\square(\sigma \circ \sigma_{12} h_1, \sigma \circ \sigma_{12} u) \le \eta.
    \end{align}
    Therefore, by the triangle inequality, for any fixed color-preserving measure-preserving bijection $\sigma$
    \begin{align}
        \delta_\square^\Theta(\tilde{q}^{\sigma_0}, \tilde{q}) \le d_\square(\sigma_0 h_1, \sigma h_2) \le 2 \eta + d_\square(\sigma_0 u, \sigma \circ \sigma_{12} u).
    \end{align}

    To control the distance $d_\square(\sigma_0 u, \sigma \circ \sigma_{12} u)$, let us first choose a color-preserving measure-preserving bijection $\sigma$ such that the family $(\sigma \circ \sigma_{12}(J_a) \cap K_\theta)_{1 \le a \le k}$ are pairwise disjoint intervals (modulo null sets) within $K_\theta$. Let $\sigma_{\textrm{int}} \coloneqq \sigma \circ \sigma_{12}$. Now I construct an $N$-permutation that approximates the transport properties of $\sigma_{\textrm{int}}$. Define the transport matrix
    \begin{align}
        x_{a\theta} = \lambda(J_a \cap \sigma_{\textrm{int}}^{-1}K_{\theta})),
    \end{align}
    where $\lambda$ is the Lebesgue measure. Intuitively, this captures how much mass from $J_a$ is sent to $K_{\theta}$ under $\sigma_{\textrm{int}}$. Additionally, define $p_{a} \coloneqq \lambda(J_a)$ and $w_\theta \coloneqq \lambda(K_\theta)$. Note that
    \begin{align}
        \sum_{\theta} x_{a\theta} = p_{a}, \quad \sum_{a} x_{a\theta} = w_\theta.
    \end{align}
    Now, since $N p_{a}$ and $N w_\theta$ are integers for all $a, \theta$ (because they originate from a graphon over $N$ vertices), by Lemma 3 in \citet{baranyai_factorization_1974} there exists a matrix $(n_{a\theta})_{a,\theta}$ such that
    \begin{align}
        n_{a\theta} \in \{\lfloor N x_{a\theta} \rfloor, \lceil N x_{a\theta} \rceil\}, \quad \sum_{\theta} n_{a\theta} = N p_{a}, \quad \sum_{a} n_{a\theta} = N w_\theta.
    \end{align}
    Define $y_{a\theta} = n_{a\theta}/N$. Note that $|y_{a\theta} - x_{a\theta}| \le 1/N$. Now define an $N$-permutation $\hat{\sigma}$ that sends $n_{a\theta}$ intervals $I^N_i$ from $J_a$ to $K_{\theta}$. This permutation exists by the feasibility result above. Under this construction,
    \begin{align}
        \lambda(J_a \cap \hat{\sigma}^{-1}(K_{\theta})) = y_{a\theta}.
    \end{align}
    Furthermore, the sets $\hat{\sigma}(J_a) \cap K_\theta$ can be chosen to be intervals, since only their measure is constrained. Furthermore, these can follow the same order as the intervals $\sigma_{\textrm{int}}(J_a) \cap K_\theta$.

    Now consider a single interval $K_\theta$. Within this interval, the intervals $\hat{\sigma}(J_a) \cap K_\theta$ and $\sigma_{\textrm{int}}(J_a) \cap K_\theta$ follow the same order. Without loss of generality, assume they are ordered according to $a$. Define the set
    \begin{align}
        E_\theta \coloneqq \{x \in K_\theta: \hat{\sigma}^{-1}(x) \textrm{ belongs to a different class than } \sigma_{\textrm{int}}^{-1}(x)\}.
    \end{align}
    Since the measure of $E$ is determined by the discrepancies in the boundaries of the intervals, we have that
    \begin{align}
        \lambda(E_\theta) &\le \sum_{a=1}^k \Bigg\{\underbrace{\left| \sum_{b=1}^{a-1} (x_{b\theta} - y_{b\theta}) \right|}_{\textrm{left endpoints}} + \underbrace{\left| \sum_{b=1}^{a} (x_{b\theta} - y_{b\theta}) \right|}_{\textrm{right endpoints}} \Bigg\} \nonumber \\
        &\le \sum_{a=1}^k \left\{ \sum_{b=1}^{a-1} |x_{b\theta} - y_{b\theta}| + \sum_{b=1}^{a} |x_{b\theta} - y_{b\theta}| \right\} \nonumber \\
        &\le \frac{1}{N} \sum_{a=1}^k (2a-1) \nonumber \\
        &= \frac{k^2}{N}
    \end{align}
    Letting $E \coloneqq \bigcup_{\theta \in \Theta} E_\theta$, we get that the total discrepancy in the classes of the preimages is
    \begin{align}
        \lambda(E) = \sum_{\theta} \lambda(E_\theta) \le \frac{|\Theta| k^2}{N}.
    \end{align}

    Since $\sigma_{\textrm{int}} u$ and $\hat{\sigma} u$ have the same value on the sets $\sigma_{\textrm{int}}^{-1}(J_a) \times \sigma_{\textrm{int}}^{-1}(J_b)$ and $\hat{\sigma}^{-1}(J_a) \times \hat{\sigma}^{-1}(J_b)$, we have that they must match on $E^c \times E^c$. Therefore, since $|u| \le 1$, we obtain
    \begin{align}
        \lVert \sigma_{\textrm{int}} u - \hat{\sigma} u \rVert_1 &= \int_{[0,1]^2 \backslash E^c \times E^c} |\sigma_{\textrm{int}} u - \hat{\sigma} u| \le 1-(1-\lambda(E))^2 \le 2 \lambda(E) \le \frac{2|\Theta| k^2}{N}.
    \end{align}
    Since the $L^1$ norm upper-bounds the cut norm, we have that
    \begin{align}
        d_\square(\sigma_{\textrm{int}} u, \hat{\sigma} u) \le \frac{2|\Theta| k^2}{N}.
    \end{align}

    Going back to our distance of interest, we have that
    \begin{align}
        \delta_\square^\Theta(\tilde{q}^{\hat{\sigma}}, \tilde{q}) \le 2 \eta + d_\square(\sigma_{\textrm{int}} u, \hat{\sigma} u) \le 2 \eta + \frac{2|\Theta| k^2}{N}.
    \end{align}
    Choose $\eta = C_1/\sqrt{\log(N)}$ for some $C_1 > 0$. This yields
    \begin{align}
        \delta_\square^\Theta(\tilde{q}^{\hat{\sigma}}, \tilde{q}) \le \frac{2C_1}{\sqrt{\log(N)}} + \frac{2|\Theta| N^{2C_0/C_1^2}}{N}.
    \end{align}
    Choosing $C_1 > \sqrt{2 C_0}$, we obtain the desired bound.
\end{proof}

\end{document}